\documentclass[a4paper,UKenglish,cleveref, autoref, thm-restate]{lipics-v2021}

\bibliographystyle{plainurl}%

\title{On the hop-constrained Steiner tree problems}

\author{Adalat Jabrayilov}{
  Institute  of  Computer  Science,  University  of  Bonn,
  {Bonn, Germany}}
  {ajabrayilov@cs.uni-bonn.de}
  {https://orcid.org/0000-0002-1098-6358}{}
\authorrunning{Adalat Jabrayilov}
\Copyright{Adalat Jabrayilov}
\ccsdesc[500]{Mathematics of computing~Graph theory}
\ccsdesc[500]{Mathematics of computing~Mathematical optimization}

\keywords{Graph theory, Linear programming, Polyhedral theory, Fourier-Motzkin elimination, Steiner Tree}%

\category{} %

\funding{This work was partially supported by DFG, RTG 1855.}%

\nolinenumbers %uncomment to disable line numbering

\EventEditors{John Q. Open and Joan R. Access}
\EventNoEds{2}
\EventLongTitle{42nd Conference on Very Important Topics (CVIT 2016)}
\EventShortTitle{CVIT 2016}
\EventAcronym{CVIT}
\EventYear{2016}
\EventDate{December 24--27, 2016}
\EventLocation{Little Whinging, United Kingdom}
\EventLogo{}
\SeriesVolume{42}
\ArticleNo{23}
\usepackage{graphicx}
\usepackage{amsmath} 
\usepackage{amssymb} 
\usepackage{todonotes} %
\usepackage[algo2e,ruled,vlined]{algorithm2e}
\usepackage{setspace}
\usepackage{numprint}
\usepackage{amsfonts} 
\usepackage{tikz} 
\usepackage{hyperref} 
\usepackage[utf8]{inputenc} %
  \usepackage{nicefrac} 
\usetikzlibrary[arrows,decorations.pathmorphing]

\begin{document} %

\maketitle

\setlength{\abovedisplayskip}{5pt}
\setlength{\belowdisplayskip}{5pt}
\def\NP{\textsf{NP}}

\def\subjectto{\textrm{s.t.}}
\def\Z{\mathbb{Z}}
\def\p{\mathcal{P}}
\def\Q{\mathcal{Q}}
\def\x{\mathcal{X}}
\def\X{\mathcal{X}}
\def\I{\mathcal I}
\def\APX{\mathcal{APX}}
\def\W{\mathcal{W}}
\def\V{\mathbb{V}}
\def\po{partial-ordering}
\def\pop{\textrm{POP}}
\def\fme{Fourier-Motzkin elimination\xspace}

\def\hdmcf{\textrm{HMCF}\xspace}

\def\head{\text{head}}
\def\tail{\text{tail}}

\def\ie{i.e.}
\def\eg{e.g.}
\def\Wlog{W.l.o.g.}
\def\len#1{\ell_{#1}}
\def\inarc{\delta^{-}}
\def\red#1{\textcolor{red}{#1}}

\def\todos{0}
\def\todoI#1{
    \ifnum\todos=1 %
      \todo[inline]{#1}
    \fi
  }
\def\vs{\vspace{1cm}}

\def\mtodo#1{\todo{\tiny #1} }

\def\S{\text S}
\def\J{\text F}
\def\JJ{\mathcal F}
\def\walk{\text {W}}
\def\Jround#1{J^{(#1)}}
\def\Jused{J^{\emph{used}}}
\def\Y{Y}
\def\yvar{$y$ variable}
\def\yvars{$y$ variables}

\def\setR{\mathbb{R}\xspace}
\def\setN{\mathbb{N}\xspace}
\def\pos{\pi}
\def\minimize{\textrm{min}}
\def\hs{\hspace{8pt}} 
\def\bc{branch-and-cut\xspace}
\def\BC{\emph{branch-and-cut}\xspace}
\def\bp{\emph{branch-and-price}\xspace}
\def\SL{Sinnl-Ljubi\'c\xspace}
\def\Ljubic{Ljubi\'c\xspace}
\def\alert#1{\textcolor{red}{#1}}

\def\stprbh{\textrm{STPRBH}\xspace}
\def\hstp{\textrm{HSTP}\xspace}
\def\hmst{\textrm{HMSTP}\xspace}
\def\stp{\textrm{STP}\xspace}
\def\mst{\textrm{MSTP}\xspace}
\def\ppop{\mathcal{P}}
\def\pass{\mathcal{A}}

\def\ppopHSTP{\ppop^{\hstp}}
\def\ppopHMST{\ppop^{\hmst}}
\def\ppopSTPRBH{\ppop^{\stprbh}}

\def\xpopHSTP{\ppopHSTP_X}
\def\xpopHMST{\ppopHMST_X}
\def\xpopSTPRBH{\ppopSTPRBH_X}

\def\passHSTP{\pass^{\hstp}}
\def\passHMST{\pass^{\hmst}}
\def\passSTPRBH{\pass^{\stprbh}}

\def\xassHSTP{\passHSTP_X}
\def\xassHMST{\passHMST_X}
\def\xassSTPRBH{\passSTPRBH_X}

\def\passHT{\pass}    %
\def\passHTstr{\pass^*}
\def\xassHTstr{\pass^*_X}

\def\ppopHT{\ppop}	%

\def\lpassHT{\textrm{(A-HT)}\xspace}
\def\lppopHT{\textrm{(P-HT)}\xspace}

\def\xassHT{\passHT_X}
\def\xpopHT{\ppopHT_X}

\def\passtopop{\pass_{\ppop}}
\def\ppoptoass{\ppop_{\pass}}
\def\passtopopHSTP{\pass_{\ppop}^{\hstp}}
\def\ppoptoassHSTP{\ppop_{\pass}^{\hstp}}

\def\val{\nu}
\def\gap#1{gap_{#1}}

\def\msum{\textstyle\sum}

\newif\ifshort	%
\shorttrue
\shortfalse

\begin{abstract}
The \emph{hop-constrained Steiner tree problem (\hstp)} is a generalization of the classical \emph{Steiner tree problem}. It asks for a minimum cost subtree that spans some specified nodes of a given graph, such that the number of edges between each node of the tree and its root respects a given hop limit.  
This \NP-hard problem has many variants, often modeled as integer linear programs.
Two of the models are so-called \emph{assignment} and \emph{partial-ordering} based models, which yield (up to our knowledge) the best two state-of-the-art formulations for the variant \emph{Steiner tree problem with revenues, budgets, and hop constraints~(\stprbh)}.
The solution of the \hstp and its variants such as the \stprbh and the \emph{hop-constrained minimum spanning tree problem (\hmst)} is a hop-constrained tree, a rooted tree whose depth is bounded by a given hop limit.
This paper provides some theoretical results that show the polyhedral advantages of the partial-ordering model over the assignment model for solving this class of problems.
Computational results in this paper and the literature for the \hstp, \stprbh, and \hmst show that the partial-ordering model outperforms the assignment model in practice, too; it has better linear programming relaxation and solves more instances. 
\end{abstract}

\section{Introduction}
\ifshort
The \emph{Steiner tree problem (STP)} belongs to the classical \NP-hard~\cite{doi:10.1137/0132072} optimization problems. 
Given a weighted undirected graph $G=(V,E)$ with edge costs $c \colon E \rightarrow \setR^{>0}$ and a subset $R \subseteq V$ of nodes, called \emph{terminals}, the STP asks for a subtree $T$ of $G$ that contains all terminals and has minimum costs~$\sum_{e \in E(T)} c_e $.
\else
Many network design applications ask for a minimum cost subtree connecting some required nodes of a graph.
These applications can be modeled as the \emph{Steiner tree problem (STP)}:
Given a weighted undirected graph $G=(V,E)$ with positive edge costs $c \colon E \rightarrow \setR^{>0}$ and a subset $R \subseteq V$ of the required nodes, called \emph{terminals}, find a subtree $T$ of $G$, which contains all terminals and has minimum costs, \ie,~$\sum_{e \in E(T)} c_e $ is minimal.
The STP belongs to the classical optimization problems and is \NP-hard (Garey et al.~\cite{doi:10.1137/0132072}).
\fi
For a survey on the STP, see \cite{HwangRichards92,winter1987steiner}.
\ifshort
In the special case $R=V$, the problem is known as the \emph{minimum spanning tree problem} (\mst) and is solvable in polynomial time~\cite{Kruskal1956,Prim1957}.
\else
In the special case $R=V$, the problem is known as the \emph{minimum spanning tree problem} (\mst) and is solvable in polynomial time (Kruskal \cite{Kruskal1956}, Prim \cite{Prim1957}).
\fi
\ifshort
Hop constraints, also known as \emph{height constraints} \cite{ManyemStallman96}, are originated from telecommunication applications and limit the number of hops (edges) between the given service provider (root) and terminals of the network to control the quality of the service \cite{woolston1988design}. 
\else
Hop constraints, also known as \emph{height constraints} \cite{ManyemStallman96}, are originated from telecommunication applications and limit the number of hops (edges) between the given service provider (root) and terminals of the network to control the availability and reliability of the service (Woolston and Albin \cite{woolston1988design}). 
\emph{Availability} is the probability that all edges in the connection (path) between the server and a terminal are working.  
\emph{Reliability} is the probability that this connection will not be interrupted by an edge failure. 
The failure probability of the path with at most $H$ edges does not exceed $1-(1-p)^H$, where $p$ is the failure probability of an edge.
\fi
For a survey on more general network design problems with hop constraints, see \cite{doi:10.1287/ijoc.4.2.192,kerivin2005design,DBLP:series/natosec/MonteiroFF14}.
\ifshort
\else

\fi
\emph{Hop-constrained Steiner tree problem} (\hstp) %
is defined as follows: 
Given a weighted undirected graph $G=(V,E)$ with 
edge costs $c \colon E \rightarrow \setR^{>0}$,
a set $R \subseteq V$ of terminals, 
a root $r \in R$,
and a hop limit $H \in \setN^{\ge0}$,
find a minimum cost subtree $T$ of the graph, 
which contains all terminals such that
the number of edges in the path from $r$ 
to any $v \in V(T)$ does not exceed %
$H$. 
\ifshort
\else
There are several variants of the \hstp.
\fi
The well-studied special case $R=V$ is known as the \emph{hop-constrained minimum spanning tree problem} (\hmst).
\ifshort
While the \mst is solvable in polynomial time, the \hmst is \NP-hard~\cite{Dahl98the2hopMST,GOUVEIA1995959}.
\else
While the \mst is solvable in polynomial time, the \hmst is \NP-hard (Dahl \cite{Dahl98the2hopMST}, Gouveia \cite{GOUVEIA1995959}).
Moreover, the \hmst\ is not in $\APX$ (Manyem and Stallmann \cite{ManyemStallman96}), 
\ie, the class of problems for which it is possible to have a polynomial-time constant-factor approximation. 
\fi
\emph{Steiner tree problem with revenues, budgets, and hop constraints} (\stprbh)
is another well-studied variant of the \hstp: 
Given a weighted undirected graph $G=(V,E)$
\ifshort
with edge costs $c \colon E \rightarrow \setR^{>0}$,
node revenues $\rho \colon V \rightarrow \setR^{\ge0}$,
\else
with positive edge costs $c \colon E \rightarrow \setR^{>0}$,
nonnegative node revenues $\rho \colon V \rightarrow \setR^{\ge0}$,
\fi
a root $r \in V$,
a hop limit $H \in \setN^{\ge0}$,
and a budget $B \in \setR^{\ge0}$, %
find a subtree $T$ %
of the graph,
which contains $r$, 
maximizes the collected revenues $\sum_{v \in V(T)} \rho_v$,
such that
the number of edges in the path from $r$ to any node $v \in V(T)$ does not
exceed %
$H$, 
and the total edge
costs of the tree respect the budget $B$, \ie\ $\sum_{e \in E(T)} c_e \le B$.

The majority of the integer linear programming (ILP) formulations in the literature for these \NP-hard problems can be divided into \emph{``node-oriented''} and \emph{``edge-oriented''} formulations.
To formulate the hop constraints, the node-oriented models use node variables, which describe the depth of nodes in the tree.     
Akgün and Tansel \cite{Akg_n_2011}, Gouveia \cite{GOUVEIA1995959} have presented several node-oriented formulations based on Miller-Tucker-Zemlin (MTZ) subtour elimination constraints \cite{MillerTuckerZemlin60} for the \hmst.
The MTZ constraints involve an integer variable 
\ifshort
$U_v$
\else
$U_v \in \{1,\dots,H\}$ 
\fi
for each node $v$, which specifies the depth of $v$ in the solution tree, \ie, the number of edges between the root and $v$.
Voß \cite{DBLP:journals/anor/Voss99} has used the MTZ constraints to model the \hstp.
Costa et al.~\cite{Costa2009ModelsAB} and Layeb et al.~\cite{Layeb2013SolvingTS} have proposed several models for the \stprbh\ based on the MTZ constraints. 
Sinnl and \Ljubic \cite{DBLP:journals/mpc/SinnlL16} have suggested an assignment model for the \stprbh, which uses a binary variable $y_{vh}$ for each node $v$ and each depth $h$, where $y_{vh}=1$ indicates that the depth of $v$ in the tree is $h$.
\ifshort
Recently, Jabrayilov and Mutzel \cite{DBLP:conf/alenex/JabrayilovM19} have presented a partial-ordering based model for the \stprbh,
where the variables indicate whether the depth of $v$ is greater or less than $h$.
\else
Recently, Jabrayilov and Mutzel \cite{DBLP:conf/alenex/JabrayilovM19} have presented a partial-ordering based model for the \stprbh.
Instead of directly assigning a depth $h$ to node $v$, the variables of this model indicate whether the depth of $v$ is greater or less than $h$.
\fi

The edge-oriented formulations use edge variables to describe the hop constraints.
Gouveia \cite{GouveiaMCF} has presented a multicommodity flow (MCF) model for the \hmst. 
Although this edge-oriented model's linear programming~(LP) bound is much better than the MTZ model's~\cite{GOUVEIA1995959}, it leads to large ILP models.
While the MTZ model has $O(|V|^2)$ variables and constraints, the MCF model has $O(|V|^3)$ variables and constraints.
Gouveia \cite{GouveiaRedef} has proposed the %
\emph{``hop-dependent''} multicommodity flow (\hdmcf) model for the \hmst\ and \hstp, yielding better LP bounds than the MCF model. 
The \hdmcf, with its $O(H|V|^3)$ variables, is even larger than the MCF and applicable to small graphs.
Gouveia et al.~\cite{GOUVEIA2001539} have reported that the \hdmcf model cannot solve the LP relaxation after a couple of days for most instances with 40 nodes.
They have introduced a Lagrangian relaxation for the \hdmcf model.
Gouveia et al.~\cite{Gouveia2011ModelingHA} modeled the \hmst as the STP in the so-called layered graph and introduced an ILP with $O(H|E|)$ variables and an exponential number of constraints for this STP.  
Costa et al.~\cite{Costa2009ModelsAB} have presented an edge-oriented model with $O(H|E|)$ variables for the \stprbh. %
There are also some heuristics (Costa et al.~\cite{COSTA200868}, Fernandes et al.~\cite{fernandes2007determining}, Fu and Hao \cite{DBLP:journals/eor/FuH14,2015:DPD:2748889.2748890}, Gouveia et al.~\cite{Gouveia2011RestrictedDP}), and ILP models with an exponential number of constraints or variables, which require sophisticated \BC or \bp algorithms (Costa et al.~\cite{Costa2009ModelsAB}, Dahl et al.~\cite{DBLP:books/daglib/p/DahlGR06}, Sinnl~\cite{sinnl11}).

The advantage of node- and edge-oriented formulations is that they have a polynomial size and can be fed directly into a standard ILP solver.
Although the latter have stronger LP relaxations than the former, they lead to large ILPs and are suitable for small graphs. 
On the other hand, the node-oriented models use far fewer variables and allow to tackle large instances.
In fact, the best state-of-the-art models for the %
\stprbh are node-oriented models, namely the assignment and the partial-ordering based models. %
Computational results in the literature show that for a majority of the large \stprbh instances with 500 nodes and \numprint{12500} edges from the DIMACS benchmark set~\cite{DIMACS11BENCHMARK}, these two models find optimal integer solutions within seconds~\cite{DBLP:conf/alenex/JabrayilovM19,DBLP:journals/mpc/SinnlL16}, whereas the previous models cannot even solve the LP relaxation within a time limit of two hours~\cite{COSTA200868}.

\noindent\textit{Our contribution.}
The solution of the \hstp, \hmst and \stprbh is a hop-constrained tree, a rooted tree whose depth is bounded by a given hop limit.
In this paper, we provide some theoretical results that show the polyhedral advantages of the partial-ordering model over the assignment model for 
solving this class of problems.
Computational results in this paper and the literature~\cite{DBLP:conf/alenex/JabrayilovM19} for the \hstp, \hmst and \stprbh show that the partial-ordering model outperforms the assignment model in practice, too. 
It has better linear programming relaxation and solves more instances.

\noindent\textit{Outline.}
We start with some notations %
(\autoref{sec:notations}). 
In \autoref{sec:descripton:asspop}, we describe the assignment and partial-ordering based models. 
The polyhedral and computational results are presented in Sections \ref{sec:assvspop} and \ref{sec:evaluation}. %
We conclude with \autoref{sec:conclusion}.

\section{Notations} 
\label{sec:notations}
For a graph $G=(V,E)$, we denote its node set by $V(G)$ and edge set by $E(G)$.
Each edge of an undirected graph $G$ is a 2-element subset $e=\{u,v\}$ of $V$.
For clarity, we may write $e=uv$.
\ifshort\else
The end nodes of an edge are called neighbors.
\fi
Each edge of a directed graph is an ordered pair $a=(u,v)$ of nodes and is called an arc. 
The arc $(u,v)$ is outgoing from $u$ and incoming to $v$.
\ifshort\else
The node $u$ is the tail of $a$, written $u=\tail(a)$, and $v$ is the head of $a$, written $v=\head(a)$.
\fi
A directed walk $W$ in a graph is a sequence $W=v_0,a_1,v_1,\dots,v_{k-1},a_{k},v_k$ of nodes and arcs, such that $a_i=(v_{i-1},v_i)$ for $1\le i \le k$.
We may describe $W$ as the sequence of its nodes $v_0,\ldots,v_{k}$ or arcs $a_1,\ldots,a_k$. 
We call $W$ a $(v_0,v_k)$-walk.
Let $A$ be a list or set of arcs $a_1,\dots,a_k$, and let $x_a$ be a variable associated with arc $a$.
We use the abbreviation $x(A):=x_{a_1}+\dots+x_{a_k}$.
A walk is a path if all its nodes are distinct. 
The number of not necessarily distinct arcs in $W$ is called the length of $W$, written $\len W$.
We denote the set of incoming arcs to a node $v$ in $W$ by $\inarc_W(v)$.
\ifshort\else
We say that $W'$ is a suffix of walk $W=a_1,\dots,a_l$, written $W' \sqsupset W$, if there is an $1\le i \le l$, such that $W'=a_i,\dots,a_l$.
\fi
A rooted tree $T$ has a special node $r \in V(T)$.
The depth $d(v)$ of node $v$ in $T$ is the number of edges in the $(r,v)$-path in $T$.
The depth of $T$ is $d(T):=\max \{d(v) \colon v \in V(T)\}$.
We say $T$ contains an arc $(u,v)$, written $(u,v) \in E(T)$, if $T$ contains an edge $uv$, such that $d(u)<d(v)$.
\ifshort
For a given $H$ we call a rooted tree $T$ with $d(T) \le H$ a \emph{hop-constrained tree} (HT).
\else
For a given hop limit $H \in \setN^{\ge0}$, we call a rooted tree $T$ with $d(T) \le H$ a \emph{hop-constrained tree} (HT).
\fi
\ifshort
\else
Let $C$ be an inequality in the form $L \le R$ with a linear expression $L$ and a number $R$. 
By $c(var,C)$, we denote the coefficient of variable $var$ in $C$, \eg, $c(y, x-4y \le 7)=-4$.
\fi
We denote by $\val_M$ the LP relaxation value of ILP $M$.
Let $A$ and $B$ be two ILPs for problem~$P$.
In case $P$ is a minimization (resp. maximization) problem, we call $A$ stronger than $B$, if inequality $\val_A \ge \val_B$ (resp. $\val_A \le \val_B$) holds for all $P$ instances; $A$ is strictly stronger than $B$ if $A$ is stronger than $B$, but $B$ is not stronger than $A$, \ie, there is a $P$ instance for which the inequality is strict. 
\section{Assignment and partial-ordering based models} %
\label{sec:descripton:asspop}
The main idea of both models is similar: The solution of the hop-constrained Steiner tree problems is a hop-constrained subtree $T$ of $G$ with root $r$ and depth $d(T) \le H$.
To describe the arcs of $T$, both models use two binary variables $x_{u,v}$, $x_{v,u}$ for each edge $uv \in E$: $x_{u,v}=1$ if and only if $(u,v) \in E(T)$, and $x_{v,u}=1$ if and only if $(v,u) \in E(T)$.
Both models compute a partial order $\pi$ of nodes of $T$ such that 
the position $\pi(v)$ of any node $v\in V(T)$ in this order
satisfies $d(v) \le \pi(v) \le H$, 
and thus 
$d(v)$ respects the hop-limit $H$.
To describe the positions of nodes in $\pi$, 
the assignment model 
\cite{DBLP:journals/mpc/SinnlL16}
for the \stprbh\ 
uses a binary variable $y_{v,i}$ for each node $v \in V$ and each position 
$i \in \{0,\dots,H\}$; 
$y_{v,i}=1$ if and only if $\pi(v)=i$.
The partial-ordering-based model \cite{DBLP:conf/alenex/JabrayilovM19} uses instead of the variables $y$, so-called \emph{partial-ordering problem (POP)} variables, 
namely binary variables $l$ (less than) and $g$ (greater than).
For each node $v \in V$ and each position $i$, %
$l_{v,i}=1$ if and only if $\pi(v) < i$,
and 
$g_{i,v}=1$ if and only if $\pi(v) > i$.
That is, if node $v$ is at position $i$, then we have 
$l_{v,i}=g_{i,v}=0$, \ie, 
the connection between the assignment and POP variables is: 
\begin{align}
    y_{v,i} = 1- (l_{v,i}+g_{i,v}) && 
    & \forall v \in V,\ i = 0,\dots,H.
    \label{connection:asspop}
\end{align}
We may assume that $G$ is complete; otherwise, we can assign infinite costs to the missing edges.
\ifshort
The partial-ordering based model \lppopHT for hop-constrained trees (HT) can be formulated similar to the \stprbh model \cite{DBLP:conf/alenex/JabrayilovM19}: 
\else
Via $x,l,g$ variables, the partial-ordering based model \lppopHT for hop-constrained trees (HT) can be formulated similar to the \stprbh model \cite{DBLP:conf/alenex/JabrayilovM19}: 
\fi
\begin{align}
    \lppopHT
    \ \ \ \ \ 
    \ \ \ \ \ 
    & l_{r,0} = g_{0,r} = 0
    \label{constr:root}
    \\
    & l_{v,1} = g_{H,v} = 0
    & \forall v \in V \setminus \{r\}
    \label{constr:interval}
    \\
    & l_{v,i} - l_{v,i+1} \le 0
    & \forall v \in V,\ i = 0,\dots,H-1
    \label{constr:unambiguous1}
    \\
    & g_{i,v} + l_{v,i+1} = 1 
    & \forall v \in V,\ i = 0,\dots,H-1
    \label{constr:unambiguous2}
    \\
    & l_{u,i} + g_{i,v} \ge x_{u,v}
    & \forall u \in V, v \in V\setminus \{u\}, i = 0,\dots,H
    \label{constr:edge:forward} 
    \\
    & \msum_{u \in V \setminus \{v\}} x_{u,v}  \le 1
    & \forall v \in V
    \label{constr:indeg} 
    \\
    & \msum_{u \in V \setminus \{v,w\}} x_{u,v}  \ge  x_{v,w}
    & \forall v \in V \setminus \{r\},\ w \in V \setminus \{v\}
    \label{constr:outdeg} 
    \\
    & x \in [0,1]^{2|E|}  
    \label{constr:xvars} 
    \\
    & l,g \in [0,1]^{H|V|}  \label{pop:lgvars}
\end{align}
Constraints (\ref{constr:root}) ensure that $\pi(r)=0$,
and (\ref{constr:interval}) 
make sure that $\pi(v) \in \{1,\dots,H\}$ 
for each %
$v \in V \setminus \{r\}$.
If a node's position is less than $i$, it is also less than $i+1$ by (\ref{constr:unambiguous1}).
Constraints (\ref{constr:unambiguous2}) ensure that for each node $v$, either $\pi(v)>i$   (\ie\ $g_{i,v}=1$) or $\pi(v)<i+1$ (\ie\ $l_{v,i+1}=1$), and not both.
The constraints 
(\ref{constr:unambiguous1})
jointly with (\ref{constr:unambiguous2})
enforce each node $v$ to have exactly one position in $\pi$ \cite{DBLP:conf/alenex/JabrayilovM19}.
Recall that since $G$ is complete, there is a variable $x_{uv}$ for each two distinct nodes $u \ne v \in V$.
Constraints (\ref{constr:edge:forward}) ensure $\pi(u) < \pi(v)$ for each arc $ (u,v) \in E(T)$ \cite{DBLP:conf/alenex/JabrayilovM19}.
Each node has at most one incoming arc in $T$, which is ensured by (\ref{constr:indeg}).
Constraints (\ref{constr:outdeg}) express that each node $v \in V \setminus \{r\}$ with an outgoing arc has also an incoming arc. %
The fact that $\pi(u) < \pi(v)$ for each arc $(u,v) \in E(T)$, jointly with 
Constraints (\ref{constr:indeg})--(\ref{constr:xvars}), 
ensure that $T$ is a tree \cite{DBLP:conf/alenex/JabrayilovM19}.  
Moreover,
$d(T) \le H$ by (\ref{constr:root})--(\ref{constr:edge:forward}), %
\ie, $T$ is an HT. %
We denote the polytope of %
\lppopHT by $\ppopHT$:
\begin{gather} \label{ppop}
 \ppopHT := \{ (x,l,g) \colon (x,l,g) \mbox{ satisfy } 
 (\ref{constr:root})-(\ref{pop:lgvars}) \}.
\end{gather}
Using the variables $x,y$, the assignment model %
\lpassHT
for hop-constrained trees (HT)
can be formulated similar to the \stprbh model \cite{DBLP:journals/mpc/SinnlL16}:
\begin{align}
    \lpassHT
    \ \ \ \ 
    & \mbox{Constraints (\ref{constr:indeg})--(\ref{constr:xvars}) } \notag
    \\
    & y_{r,0} = 1
        \label{ass:constr:root}
    \\
    & y_{r,i} = 0
    & \forall i = 1,\dots,H
        \label{ass:constr:root:i}
    \\
    & y_{v,0} = 0
    &	\forall v \in V \setminus \{r\}
        \label{ass:constr:Vroot}
    \\
    & \msum_{i=1}^H y_{v,i} = 1
    &	\forall v \in V \setminus \{r\}
        \label{ass:constr:interval}
    \\
    & y_{u,i} - y_{v,i+1} + x_{u,v} \le 1
    & \forall u \in V, v \in V\setminus \{u\}, i = 0,\dots,H-1
    \label{ass:constr:arcdir} 
    \\
    & y_{u,H} + x_{u,v} \le 1
    & \forall u \in V, v \in V\setminus \{u\}
    \label{ass:constr:arcdirH} 
    \\
    & y \in [0,1]^{H|V|} 
    \label{ass:vars} 
\end{align}
As both models use the same $x$-variables, \lpassHT also includes the constraints (\ref{constr:indeg})--(\ref{constr:xvars}). %
Constraints (\ref{ass:constr:root}) 
and (\ref{ass:constr:root:i}) 
ensure $\pi(r)=0$. %
Constraints (\ref{ass:constr:Vroot}) and (\ref{ass:constr:interval}) 
make sure that $\pi(v) \in \{1,\dots,H\}$ for each $v \in V \setminus \{r\}$.
Constraints (\ref{ass:constr:arcdir})
express that if $x_{u,v}=1$, then $\pi(v)=\pi(u)+1$.
Constraints (\ref{ass:constr:arcdirH}) make sure that if $\pi(u)=H$, then $u$ has no outgoing arc $(u,v)$ in $T$.
From (\ref{ass:constr:arcdir}) and (\ref{ass:constr:arcdirH}) follows that for each arc $(u,v) \in T$, we have $\pi(u) < \pi(v)$.
Like in model \lppopHT, 
the fact that $\pi(u) < \pi(v)$ for each arc $(u,v) \in E(T)$, jointly with 
(\ref{constr:indeg})--(\ref{constr:xvars}), 
guarantee that $T$ is a tree. 
Moreover,
$d(T) \le H$ by (\ref{ass:constr:root})--(\ref{ass:vars}),
\ie, $T$ is an HT. %
We denote the polytope of the assignment model \lpassHT by $\passHT$:
\begin{gather} 
\passHT := \{ (x,y) \colon (x,y) \mbox{ satisfy } 
(\ref{constr:indeg})-(\ref{constr:xvars}),
(\ref{ass:constr:root})-(\ref{ass:vars})\}
\label{pass1} %
\end{gather}

\noindent
The solution $T$ of the \hstp is a HT that
contains
all terminals $R \subseteq V$, \ie:
\begin{align}
  &\msum_{u \in V \setminus \{v\}} x_{u,v}  \ge 1
  & \forall v \in R \setminus \{r\}
  \label{constr:indeg:ge1} 
\end{align}
Let $\passHSTP$ and $\ppopHSTP$ 
denote the polytopes of the assignment and
partial-ordering based LPs for the \hstp, respectively. Then we have:
\begin{align*}
\passHSTP = & \{ (x,y) \colon (x,y) \in \passHT\ (\ref{pass1}),\ x \mbox{ satisfies } (\ref{constr:indeg:ge1})\}, \\
\ppopHSTP = & \{ (x,l,g) \colon (x,l,g) \in \ppopHT\ (\ref{ppop}),\ x \mbox{ satisfies } (\ref{constr:indeg:ge1})\}.
\end{align*}
The corresponding ILPs have the following form:
\ifshort
\begin{align}
  \min & \big\{ 
    \msum_{uv \in E} c_{u,v} ( x_{u,v} + x_{v,u} )
    \colon
    (x,y) \in \passHSTP;\  x,y \text{ integral}
    \big\},
    \label{Ahstp}
    \tag{A-\hstp}
    \\
  \min & \big\{ 
    \msum_{uv \in E} c_{u,v} ( x_{u,v} + x_{v,u} )
    \colon
    (x,l,g) \in \ppopHSTP;\  x,l,g \text{ integral}
    \big\}.
    \label{Phstp}
    \tag{P-\hstp}
\end{align}
\else
\begin{align}
  \min & \Big\{ 
    \msum_{uv \in E} c_{u,v} ( x_{u,v} + x_{v,u} )
    \colon
    (x,y) \in \passHSTP;\  x,y \text{ integral}
    \Big\},
    \label{Ahstp}
    \tag{A-\hstp}
    \\
  \min & \Big\{ 
    \msum_{uv \in E} c_{u,v} ( x_{u,v} + x_{v,u} )
    \colon
    (x,l,g) \in \ppopHSTP;\  x,l,g \text{ integral}
    \Big\}.
    \label{Phstp}
    \tag{P-\hstp}
\end{align}
\fi
\section{Polyhedral comparison} %
\label{sec:assvspop}

\subsection{Direct comparison of polytopes $\passHSTP$ and $\ppopHSTP$} 
\ifshort
To compare (\ref{Ahstp}) and (\ref{Phstp}), we can directly transform their polytopes onto the same variable space via (\ref{connection:asspop}), \ie,
\begin{align*}
\passHSTP \mbox{ to } \passtopopHSTP
:=\{ (x,l,g) \colon (x,y) \in \passHSTP \text{ and } y,l,g \mbox{ satisfy } (\ref{connection:asspop}) \},
\\
\ppopHSTP \mbox{ to } \ppoptoassHSTP
:=\{ (x,y) \colon (x,l,g) \in \ppopHSTP \text{ and } y,l,g \mbox{ satisfy }  (\ref{connection:asspop}) \},
\end{align*}
and compare them. However, the examples below show that 
$ \ppoptoassHSTP \setminus \passHSTP \ne \emptyset $ and
$\passtopopHSTP \setminus \ppopHSTP \ne \emptyset$, so %
$\passHSTP$ and $\ppopHSTP$ are not comparable.
\else
To compare (\ref{Ahstp}) and (\ref{Phstp}) directly, we can replace, for example, in (\ref{Ahstp}) $y$ variables by $l,g$ via (\ref{connection:asspop}), and then test if the resulted constraints imply~(``$\Rightarrow$'') constraints of (\ref{Phstp}) or vice versa~(``$\Leftarrow$'').
However, the examples below that handle the transformations of polytopes of (\ref{Ahstp}) and (\ref{Phstp}) onto the same variable space, namely
\begin{align*}
\passHSTP \mbox{ to } \passtopopHSTP
:=\{ (x,l,g) \colon (x,y) \in \passHSTP \text{ and } y,l,g \mbox{ satisfy } (\ref{connection:asspop}) \},
\\
\ppopHSTP \mbox{ to } \ppoptoassHSTP
:=\{ (x,y) \colon (x,l,g) \in \ppopHSTP \text{ and } y,l,g \mbox{ satisfy }  (\ref{connection:asspop}) \},
\end{align*}
show that
$ \ppoptoassHSTP \setminus \passHSTP \ne \emptyset $ and
$\passtopopHSTP \setminus \ppopHSTP \ne \emptyset$, so 
neither ``$\Rightarrow$'' nor ``$\Leftarrow$'' holds.
\fi

\begin{figure}[bt]
  \centering
  \captionsetup[subfigure]{justification=centering}
  \begin{subfigure}[c]{0.34\textwidth}
	\scalebox{0.8}{
	  \begin{tikzpicture}[
	    x=1.1 cm, y=1.1 cm,
	    V/.style= {circle, inner sep=4pt, draw, line width=1pt},
	    P/.style= {rectangle, inner sep=1pt, draw},
	    E/.style= {->,draw, line width=3},
	    L/.style  = {bend left=13pt},
	    R/.style  = {bend right=13pt},
	    Pointer/.style= {->, draw, opacity=0.5},
	  ]
	    \path [draw, opacity=0.1, step=1.1cm] (-0.2,0.6) grid (2.8,3.3);

	    \path (2.7,2.7) node [] {{\LARGE $G$}};
	    \path (1.4,2.3) node [] {$T$};

	    \path (2,3)	node [V] (r) {$r$};
	    \path (1,1)   node [V] (a) {$a$};
	    \path (2,1)	node [V] (b) {$b$};
	    \path (3,1)	node [V] (c) {$c$};

	    \path (r) edge [E] (a);
	    \path (r) edge [E] (b);
	    \path (r) edge [E] (c);

	    \path (a) edge [L] (b);
	    \path (b) edge [L] (c);
	    \path (a) edge [bend right=32] (c);

	    \path (0.0,3.2) edge [Pointer, line width=2pt, opacity=0.1] +(0,-2.6) 
	    +(-0.3,-2.5) node {$\pos$};

	    \path (0, 3) node [P] (p1) {} +(0.3,0) node {$0$};
	    \path (0, 2) node [P] (p2) {} +(0.3,0) node {$1$}; 

	    \path (0, 1) node [P] (p3) {} +(0.3,0) node {$2$}; 
	  \end{tikzpicture}
	}
	\subcaption{}
	\label{fig:pop}
  \end{subfigure}
  \hspace{32pt}
  \begin{subfigure}[c] {0.32\textwidth}
	\scalebox{0.8}{
	  \begin{tikzpicture}[
	    x=1.1 cm, y=1.1 cm,
	    V/.style= {circle, inner sep=4pt, draw, line width=1pt},
	    P/.style= {rectangle, inner sep=1pt, draw},
	    E/.style= {->, line width=3},
	    L/.style  = {bend left=11pt},
	    R/.style  = {bend right=11pt},
	    Pointer/.style= {->, draw, opacity=0.5},
	  ]

	    \path (2.9,2.7) node [] {{\LARGE $G$}};
	    \path (1.0,1.7) node [] {$T$};

	    \path (2,3)		node [V] (r) {$r$};
	    \path (0,1.3)	node [V] (a) {$a$};
	    \path (2,1.7)	node [V] (b) {$b$};
	    \path (4,1.3)	node [V] (c) {$c$};

	    \path (r) edge [] (a);
	    \path (r) edge [] (b);
	    \path (r) edge [] (c);

	    \path (a) edge [E] node [below] {$\frac23$}  (b);
	    \path (b) edge [E] node [below] {$\frac23$}  (c);
	    \path (c) edge [bend left=27pt, E] node [above] {$\frac23$}  (a);
	  \end{tikzpicture}
	}
	\subcaption{}
	\label{fig:ass}
  \end{subfigure}
  \ifshort
  \caption{Graph $G$ and feasible solutions (\subref{fig:pop}) for (\ref{Phstp}) and (\subref{fig:ass}) for (\ref{Ahstp})}
  \else
  \caption{Complete graph $G$ and feasible solutions (\subref{fig:pop}) for (\ref{Phstp}) and (\subref{fig:ass}) for (\ref{Ahstp})}
  \fi
  \label{fig:example}
\end{figure}

\begin{example} [$\ppoptoassHSTP \setminus \passHSTP \ne \emptyset$] \label{example:pop-ass}
Let $G=(V,E)$ be a complete graph, $V=R=\{r,a,b,c\}$, and $H=2$.
Let $\pi^{\pass}$ and $\pi^{\ppop}$ be the partial orders constructed 
by %
(\ref{Ahstp}) and (\ref{Phstp}), respectively. 
Recall that for an arc $(u,v) \in T$,
$\pi^{\pass}$ satisfies
$\pi^{\pass}(v)=\pi^{\pass}(u)+1$,  %
while $\pi^{\ppop}$ meets
$\pi^{\ppop}(v) \ge \pi^{\ppop}(u)+1$, 
\eg, $\pi^{\ppop}(v) = \pi^{\ppop}(u)+2$ is possible, too.
Using this observation, we construct feasible solution $(x,l,g) \in \ppopHSTP$ as in Figure \ref{fig:pop}:
The $x$ values are all zero, except $x_{r,a}=x_{r,b}=x_{r,c}=1$. 
Moreover, $\pi^{\ppop}(r)=0$
and $\pi^{\ppop}(a)=\pi^{\ppop}(b)=\pi^{\ppop}(c)=2$, \ie,
$(g_{0,r},g_{1,r},g_{2,r})=(0,0,0)$,
and
$(g_{0,v},g_{1,v},g_{2,v})=(1,1,0)$
for $v \in \{a,b,c\}$.
For clarity, %
we do not write $l$ values, %
since they can be derived from %
(\ref{constr:unambiguous2}).
Via (\ref{connection:asspop}), we 
transform $(x,l,g)\in \ppopHSTP$ to $(x,y)\in\ppoptoassHSTP$,
which violates (\ref{ass:constr:arcdir}):
$y_{r,0} - y_{a,1} + x_{r,a} = 
(1- (l_{r,0}+g_{0,r})) - (1- (l_{a,1}+g_{1,a})) + x_{r,a} =
(1-(0+0)) - (1-(0 + 1)) + 1 =
2 \nleq 1$. %
Thus
$(x,y) \notin \passHSTP$. %
\end{example}
\noindent
In constructing the next example, we use Theorem \ref{theorem:pop:mtz}, which states that the partial-ordering model implies the following constraints:
From any directed walk with $H$ arcs that does not start at $r$, at most $H-1$ arcs can be selected.
The next example shows that these constraints do not hold for the assignment model.

\begin{theorem}
    \label{theorem:pop:mtz}
    Let 
    $(x,l,g) \in \ppopHT$(\ref{ppop}),
    let $v_1,\ldots,v_{H+1}$ be a walk with $H$ arcs and $v_1 \ne r$. 
    Then:
    \begin{align}
      \label{pop:xspace:mtz}
      \msum_{i=1}^{H} x_{v_i,v_{i+1}} \le H-1.
    \end{align}
\end{theorem}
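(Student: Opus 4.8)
The underlying intuition comes from the integral case. If every arc of the walk is selected, i.e.\ $x_{v_i,v_{i+1}}=1$ for all $i$, then constraints (\ref{constr:edge:forward}) force the positions to strictly increase, $\pi(v_1)<\pi(v_2)<\dots<\pi(v_{H+1})$. Since $v_1\ne r$ gives $\pi(v_1)\ge 1$ (via (\ref{constr:interval})) and every position is at most $H$, the chain $1\le \pi(v_1)<\dots<\pi(v_{H+1})\le H$ would need $H+1$ distinct integers inside $\{1,\dots,H\}$, which is impossible. Hence not all $H$ arcs can be selected, which is exactly (\ref{pop:xspace:mtz}). The plan is to turn this counting argument into a single valid inequality obtained by aggregating the LP constraints, so that it also holds for fractional points.

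Concretely, I would sum constraint (\ref{constr:edge:forward}) along the walk, using for the $i$-th arc $(v_i,v_{i+1})$ the instance at position $j=i$, namely $l_{v_i,i}+g_{i,v_{i+1}}\ge x_{v_i,v_{i+1}}$ for $i=1,\dots,H$. Adding these $H$ inequalities gives
\[
  S:=\sum_{i=1}^{H} x_{v_i,v_{i+1}} \;\le\; \sum_{i=1}^{H} l_{v_i,i} + \sum_{i=1}^{H} g_{i,v_{i+1}}.
\]
I would then eliminate the $g$-terms through (\ref{constr:unambiguous2}) in the form $g_{i,v_{i+1}}=1-l_{v_{i+1},i+1}$ for $i\le H-1$, together with the boundary value $g_{H,v_{H+1}}=0$. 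Substituting, the right-hand side becomes $\sum_{i=1}^{H} l_{v_i,i} + (H-1) - \sum_{i=1}^{H-1} l_{v_{i+1},i+1}$, and after reindexing the last sum as $\sum_{k=2}^{H} l_{v_k,k}$ the two $l$-sums telescope: every term $l_{v_i,i}$ with $2\le i\le H$ cancels, leaving only $l_{v_1,1}+(H-1)$.

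It remains to observe that the two surviving boundary terms vanish. The value $g_{H,v_{H+1}}$ equals $0$ because no node can have position greater than $H$; this is exactly (\ref{constr:interval}) when $v_{H+1}\ne r$, and it also holds for $v_{H+1}=r$ since $\pi(r)=0$ by (\ref{constr:root}). Likewise $l_{v_1,1}=0$ by (\ref{constr:interval}), and here the hypothesis $v_1\ne r$ enters in an essential way. This yields $S\le H-1$. The main obstacle, and the place where care is needed, is the choice of which position $j$ to attach to each arc: the ``diagonal'' choice $j=i$ is precisely what makes the $l$-terms telescope and isolates the two boundary variables that are forced to zero, whereas other natural choices (e.g.\ $j=i-1$) leave a term such as $l_{v_{H+1},H}$ that is not identically $0$ and only yields the weaker bound $S\le H$. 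It is worth stressing that one really needs this global aggregation rather than an arc-by-arc argument: the naive per-arc inequality $\pi(v_{i+1})\ge\pi(v_i)+x_{v_i,v_{i+1}}$ fails for fractional $(x,l,g)$, so only the summed and telescoped form is valid.
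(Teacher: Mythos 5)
Your core argument is exactly the paper's proof: sum constraints (\ref{constr:edge:forward}) along the walk with the ``diagonal'' index choice $i$ for the $i$-th arc, use (\ref{constr:unambiguous2}) to turn each pair $g_{i,v_{i+1}}+l_{v_{i+1},i+1}$ into $1$ (equivalently, your telescoping of the $l$-terms), and kill the two boundary terms. For the case $v_{H+1}\ne r$ this is correct and complete, and your citation of (\ref{constr:interval}) for $l_{v_1,1}=0$ is in fact more accurate than the paper's own reference.

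There is, however, one step that fails: your claim that $g_{H,v_{H+1}}=0$ also holds when $v_{H+1}=r$ ``since $\pi(r)=0$ by (\ref{constr:root})''. That is a semantic argument about integral solutions, not an LP deduction, and it is false for the polytope $\ppopHT$: the variable $g_{H,r}$ occurs only in (\ref{constr:edge:forward}) (where it is bounded from \emph{below}) and in the box constraints (\ref{pop:lgvars}); constraint (\ref{constr:root}) fixes only $g_{0,r}$, and (\ref{constr:unambiguous2}) stops at $i=H-1$, so a fractional point of $\ppopHT$ may well have $g_{H,r}=1$, in which case your aggregated bound degrades to $S\le H$. The case $v_{H+1}=r$ needs a different argument: combining (\ref{constr:root}), (\ref{constr:interval}), (\ref{constr:unambiguous1}) and (\ref{constr:edge:forward}) with $i=0$ yields $x_{u,r}\le 0$ for every $u\ne r$ (this is how the paper derives (\ref{constr:arc:ur})), so the last term of the sum vanishes and the remaining $H-1$ terms are each at most $1$ by (\ref{constr:xvars}), giving the bound directly. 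It is worth noting that the paper's own proof silently assumes $v_{H+1}\ne r$ (it invokes (\ref{constr:interval}), which is stated only for $v\ne r$), so your instinct to treat the root endpoint explicitly was right; only the justification you gave for it does not hold at the LP level, which is precisely the fractional-versus-integral trap you warn about at the end of your own write-up.
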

\begin{proof}
    Adding up constraints 
    (\ref{constr:edge:forward})
    for the arcs $(v_i,v_{i+1})$ %
    with $1 \le i \le H$ 
    gives
    $\sum_{i=1}^{H} x_{v_i,v_{i+1}} 
      \le \sum_{i=1}^{H} (l_{v_i,i} + g_{i,v_{i+1}}) 
      = l_{v_1,1} +\sum_{i=1}^{H-1}(g_{i,v_{i+1}}+l_{v_{i+1},i+1})+g_{H,v_{H+1}}$.
    Since
    $l_{v_1,1}=0$ for $v_1\ne r$ by (\ref{constr:root}),
    $g_{i,v_{i+1}}+ l_{v_{i+1},i+1}=1$ by (\ref{constr:unambiguous2}), %
    $g_{H,v_{H+1}}=0$ by (\ref{constr:interval}),
    we have 
    $\sum_{i=1}^{H} x_{v_i,v_{i+1}} \le 0 + \sum_{i=1}^{H-1} 1 + 0 = H-1$. %
\end{proof}

\begin{example} [$\passtopopHSTP \setminus \ppopHSTP \ne \emptyset$]	\label{example:ass-pop}
Suppose we have the same \hstp instance as in Example \ref{example:pop-ass}.
Consider the feasible solution $(x,y) \in \passHSTP$ (Fig. \ref{fig:ass}), where: %
\begin{align}
(x_{a,b}, x_{b,c}, x_{c,a},x_{b,a}, x_{a,c}, x_{c,b}) 
&= ( \nicefrac23, \nicefrac23, \nicefrac23,  \nicefrac13, \nicefrac13, \nicefrac13 ),
\label{example:equality:ass-pop}
\\
(y_{r,0},y_{r,1},y_{r,2}) &=(1,0,0),
\notag \\
(y_{v,0},y_{v,1},y_{v,2}) &= (0, \nicefrac 23, \nicefrac 13)
	&& v \in V \setminus \{r\},
\notag 
\end{align}
and the remaining $x$ values are zero.
Via (\ref{connection:asspop}), we 
transform $(x,y)$ %
to $(x,l,g)\in\passtopopHSTP$.
The arcs $(a,b)$, $(b,c)$ build a walk %
with $H=2$ arcs and $a \ne r$. 
As $x_{a,b} + x_{b,c} =\frac43 \nleq 1=H-1$, $x$ violates (\ref{pop:xspace:mtz}),
so $ (x,l,g) \notin \ppopHT$ by Theorem \ref{theorem:pop:mtz}.
The claim follows from $ \ppopHSTP \subseteq \ppopHT $.
\end{example}

\subsection{Comparison by the projection onto the $x$ variable space}
We observe that (\ref{Ahstp}) and (\ref{Phstp}) have the same objective function, which depends only on $x$ variables.
We call a constraint an \emph{$x$-space constraint} if it contains only the $x$ variables.
Let $\Q_X$ denote the projection of polytope $\Q$ onto the $x$ variable space.
To show that (\ref{Phstp}) is stronger than (\ref{Ahstp}), it suffices to prove $\xpopHSTP \subseteq \xassHSTP$ (Figure \ref{fig:projection}).
To this end, 
we first show that
the hop-constrained tree polytopes $\passHT$ (\ref{pass1}) 
and $\ppopHT$ (\ref{ppop}) 
satisfy $\xpopHT\subseteq\xassHT$.
To get the projection $\xassHT$ of $\passHT$,
we first eliminate some $y$ variables using %
(\ref{ass:constr:root})--(\ref{ass:constr:interval})
and then eliminate the remaining $y$ variables %
via Fourier-Motzkin elimination (FME) \cite{schrijver1998theory}. 
\begin{figure}[t]
  \centering
  \scalebox{0.9}{
    \begin{tikzpicture}[
      P/.style= {rectangle, inner sep=1pt, draw},
      E/.style= {->, line width=1},
      Pointer/.style= {->, draw, opacity=0.5},
    ]
    \def\xa{0.2}
    \def\XA{2.5}
    \def\red{red!50!gray}
    \def\blue{blue!70!green}
    \def\red{black}
    \def\blue{black}

    \path (-4,   \XA) coordinate (a1) {};
    \path (-3.7, \xa) coordinate (a2) {};

    \draw[\red, thick, fill=\red ] (0,\xa) rectangle (0.03,\XA); 
    \draw[\red, dotted] (0,\XA) -- (a1);
    \draw[\red, dotted] (0,\xa) -- (a2);
    \draw[\red, fill=\red, fill opacity=0.1 ] 
	(-3,1.7) -- (a1) -- (-5.5,2) -- (-5.5,.4) -- (a2) -- cycle;

    \path (a2)	+(-1.0,1.6) node [text =\red]	{\Large $\passHSTP$}
		+(4.4,1.6) node [text =\red]	{\Large $\passHSTP_X$};
    \path (p2)  +(-2.3,-0.9) node [text =\blue] {\Large $\ppopHSTP$}
		+(-0.7,-1.1) node [text =\blue] {\Large $\ppopHSTP_X$};

    \def\xp{0.3}
    \def\XP{2.3}
    \path (-3,   \XP) coordinate (p1) {};
    \path (-2.2, \xp) coordinate (p2) {};

    \draw[\blue, thick, fill=\blue ] (-0.04,\xp) rectangle (-0.07,\XP); 

    \draw[\blue, dotted] (0,\XP) -- (p1);
    \draw[\blue, dotted] (0,\xp) -- (p2);
    \draw[\blue, fill=\blue, fill opacity=0.1 ] 
	(-1.5,1.5) --  (p1) -- (-4.2,1.5) -- (-3.4,.4) -- (p2) -- cycle;

    \path (0.0,-0.1) edge [Pointer] +(0,3) +(-0.3, 3) node {$x$}; 
    \end{tikzpicture}
  }
  \ifshort
  \caption{Projection of polytopes $\passHSTP$ and $\ppopHSTP$ onto the $x$ variable space}
  \else
  \caption{Polytopes $\ppopHSTP$ and $\passHSTP$ and their projections onto the $x$ variable space }
  \fi
  \label{fig:projection}
\end{figure}
We illustrate the FME by eliminating the variable $x_1$ in the following simple problem. 
The problem consists of two types of constraints $P$ and $N$; %
$P$ contains the variable $x_1$ with the positive sign, 
while $N$ with the negative sign: 
\begin{align*}
  P: &&    a_1x_1 + a_2x_2 + \dots + a_nx_n + a_{n+1}  & \le 0,
  \\
  N: &&   -b_1x_1 + b_2x_2 + \dots + b_nx_n + b_{n+1}  & \le 0.
\end{align*}
Note that $a_1$ and $b_1$ are positive numbers.
Multiplying $P$ by $b_1$ and $N$ by $a_1$ yields 
$
(b_2x_2 + \dots + b_nx_n + b_{n+1} ) \cdot a_1 
  \le a_1 b_1 x_1 \le 
- (a_2x_2 + \dots + a_nx_n + a_{n+1} ) \cdot b_1.
$
Thus $x_1$ can be eliminated, and the constraints $P$ and $N$ can be replaced by the constraint $P \cdot b_1 + N \cdot a_1$, \ie,
$  
(a_2x_2 + \dots + a_nx_n + a_{n+1} ) \cdot b_1
  + 
  (b_2x_2 + \dots + b_nx_n + b_{n+1} ) \cdot a_1 
  \le 0.
$
\ifshort
\else
If we have $k$ constraints of type $P$ and $k'$ constraints of type $N$, then we combine each constraint of type $P$ with each constraint of type $N$, so we get $k \cdot k'$ new constraints.
\fi

\begin{remark} \label{remark:fme:intro}
If all coefficients in constraints of a problem are integral, then applying the FME on the problem produces  new constraints, which are a weighted sum of constraints of input problem such that the weights are nonnegative integers.
\end{remark}

\noindent
We first fix all $y$ variables corresponding to $r$ via equations (\ref{ass:constr:root}), (\ref{ass:constr:root:i}). 
For each remaining node $v\in V \setminus \{r\}$, we eliminate $y_{v,0}$ using (\ref{ass:constr:Vroot}) and $y_{v,1}$ using (\ref{ass:constr:interval}):
\begin{align}
   y_{v,1}=1-(y_{v,2} +\dots+ y_{v,H})  && \forall v \in V \setminus \{r\}.
  \label{replacement:y_v1}
\end{align}
Concerning (\ref{ass:constr:arcdir}) and (\ref{ass:constr:arcdirH}), there are three cases w.r.t. an edge $uv \in E$:
 In case $r=u$, setting
 (\ref{ass:constr:root}) 
 and (\ref{replacement:y_v1})
 to (\ref{ass:constr:arcdir}) 
 for $i=0$ yields
 the following set of constraints, which is denoted by $S_0$:
   \begin{align*}
    S_0:
    && (y_{v,2} +\dots+ y_{v,H}) + x_{r,v}	    & \le 1   && 
    \forall v \in V \setminus \{r\}.
   \end{align*}
 Setting 
 (\ref{ass:constr:root:i})
 to (\ref{ass:constr:arcdir}) 
 for $i = 1,\dots,H-1$ 
 gives
 $- y_{v,i+1} + x_{r,v} \le 1$, 
 which is redundant as it is implied by
 $-y_{v,i+1} \le0$ and $x_{r,v} \le1$.
 Constraints
 (\ref{ass:constr:arcdirH}) have the form $x_{r,v}\le1$, too.

  \noindent    
  In case $r \notin \{u,v\}$,
    (\ref{ass:constr:arcdir}) are redundant for $i=0$,
    since setting 
    (\ref{ass:constr:Vroot}) to 
    (\ref{ass:constr:arcdir}) 
    gives
    $ - y_{v,1} + x_{u,v} \le 1 $, 
    which is %
    implied by
    $-y_{v,1} \le 0$ and $x_{u,v}\le1$.
    Setting 
    (\ref{replacement:y_v1})
    to (\ref{ass:constr:arcdir}) 
    for $i=1$ 
    gives the set 
    $S_1$:
    \begin{align*}
	S_1: &&
       -(y_{u,2} +\dots+y_{u,H}) - y_{v,2} + x_{u,v} & \le 0 &&
	\forall uv \in E, r \notin \{u,v\}.
    \end{align*}
    Inequalities (\ref{ass:constr:arcdir}) 
    for $i \in \{2,\cdots,H-1\}$ 
    remain unchanged and are denoted by 
    $S_i$.
    The constraints (\ref{ass:constr:arcdirH}) also remain unchanged 
    and are denoted by 
    $S_H$.
    \begin{align*}
      S_i: &&
      y_{u,i} - y_{v,i+1} + x_{u,v} \le 1 
      &&& \forall uv \in E, r \notin \{u,v\},\ i = 2,\cdots,H-1.
      \\
      S_H: &&
      y_{u,H} + x_{u,v} \le 1 
      &&& \forall uv \in E, r \notin \{u,v\}.
    \end{align*}
 In case $r=v$,
   for $i=0$,
   setting 
   (\ref{ass:constr:root:i})
   and 
   (\ref{ass:constr:Vroot})
   to (\ref{ass:constr:arcdir}) 
   implies $x_{u,r} \le 1$. 
   Setting 
   (\ref{ass:constr:root:i})
   to (\ref{ass:constr:arcdir}) 
   and 
   (\ref{ass:constr:arcdirH}) 
   for $i \in \{1,\dots,H\}$
   gives
   $y_{u,i} + x_{u,r} \le 1$.
   We replace %
   $y_{u,i} + x_{u,r} \le 1$
   by stronger constraints
     \begin{align}
      x_{u,r} & \le 0 &&
      \forall u \in V \setminus \{r\},
      \label{constr:arc:ur}
     \end{align}
 and denote the resulting polytope by $\passHTstr$.
Concerning (\ref{ass:vars}),
notice that $y_{v,i}\le1$ is redundant as it is implied by 
$y_{v,i}\ge0$ and (\ref{ass:constr:interval}).
Setting (\ref{replacement:y_v1}) 
to $y_{v,1} \ge 0$ for each $v \in V \setminus \{r\}$
yields 
$\sum_{j=2}^Hy_{v,j} \le 1$, 
which is implied by constraints $S_0$.
For $i \ge 2$, we rewrite constraints $y_{v,i}\ge0$:
\begin{align}
  - y_{v,i}		& \le 0    &&	\forall v \in V \setminus \{r\},\ 
    i \in \{2,\cdots,H\}.
  \label{Q2}
\end{align}
The polytope $\passHTstr$ strengthened by (\ref{constr:arc:ur}) has the form
\begin{gather} \label{pass2}
  \passHTstr = \{ (x,y) \colon (x,y) \mbox{ satisfy } 
  S_0,\dots,S_H, 
  (\ref{constr:indeg})-(\ref{constr:xvars}),
  (\ref{constr:arc:ur}),
  (\ref{Q2})
  \}. 
\end{gather}
\ifshort
Set $S_0$ contains exactly one constraint with variable $x_{r,v}$ for each neighbor      $v$ of $r$, while $S_i$ with $i \ge 1$ has two constraints for each $uv \in E$ with $r \notin \{u,v\}$, one for arc $(u,v)$, one for $(v,u)$. 
By $S_i^{(u,v)}$, we denote the $S_i$ constraint for arc $(u,v)$.
\else
\begin{remark}   %
\label{remark:Si:1}
Set $S_0$ contains exactly one constraint with variable $x_{r,v}$ for each neighbor      $v$ of $r$, while $S_i$ with $i \ge 1$ has      two constraints for each $uv \in E$ with $r \notin \{u,v\}$, one for arc $(u,v)$, one for $(v,u)$. 
By $S_i^{(u,v)}$, we denote the $S_i$ constraint for arc $(u,v)$.
Note that if $i=0$, then $u=r$ and $v \ne r$, otherwise $ u \ne r \ne v$.
\end{remark}
\fi
\ifshort
\else
\begin{figure}[bt]
  \centering
  \scalebox{0.8}{
    \begin{tikzpicture}[
      x=1.0 cm, y=0.9 cm,
      V/.style= {circle, inner sep=4pt, draw, line width=1pt},
      P/.style= {rectangle, inner sep=1pt, draw},
      E/.style= {->, line width=1},
      L/.style  = {bend left=7pt},
      R/.style  = {bend right=7pt},
      Pointer/.style= {->, draw, opacity=0.5},
    ]

      \path (0,1.0)	node [V] (d) {$b$};
      \path (2,1.0)	node [V] (e) {$e$};
      \path (1.5,2.0)	node [V] (a) {$a$};
      \path (4,1.0)	node [V] (c) {$c$};
      \path (6,1.0)	node [V] (t) {$t$};
      \path (3,0.0)	node [V] (h) {$r$};

      \path (a) edge [E] node [above] {$W_1$}  (c);
      \path (c) edge [E,L] (t);

      \path (d) edge [E, dashed] node [below] {$W_2$}  (e);
      \path (e) edge [E, dashed] (c);
      \path (c) edge [E,R, dashed] (t);

      \path (h) edge [E, dotted] node [below] {$W_3$}  (t);
    \end{tikzpicture}
  }
  \caption{Example walks 
    $W_1(a,c,t)$, $W_2(b,e,c,t)$, $W_3(r,t)$ have the same end $t$, such that for each
    $i \ne j \in \{1,2,3\}$, there is a node $v$ 
    with $|\inarc_{W_i}(v) \cup \inarc_{W_j}(v)| \ge 2$, 
    \eg, $\inarc_{W_1}(c) \cup \inarc_{W_2}(c) = \{(a,c), (e,c)\}$.
  }
  \label{fig:singlesink}
\end{figure}
\fi
\begin{lemma} %
    \label{lemma:asw}
    Let $\W$ be a set of %
    walks that end in the same node $t$,
    such that 
    for each two $W \ne W' \in \W$, there is a node $v$
    with 
    \ifshort
    $|\inarc_W(v) \cup \inarc_{W'}(v)| \ge 2$.
    \else
    $|\inarc_W(v) \cup \inarc_{W'}(v)| \ge 2$ (Fig. \ref{fig:singlesink}).
    \fi
    Then %
    (\ref{constr:indeg}) and (\ref{constr:xvars}) imply:
    \begin{align} \label{asw0}
        \msum_{W \in \W} x(W) 
        & \le 
        \big( \msum_{W \in \W} \len W \big) - |\W|+1.
    \end{align}
\end{lemma}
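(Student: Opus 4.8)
The plan is to prove (\ref{asw0}) as a nonnegative combination of the indegree constraints (\ref{constr:indeg}) and the bounds (\ref{constr:xvars}), realized through an induction on the total number of arcs $\sum_{W\in\W}\len{W}$ and driven by the fact that all walks share the endpoint $t$. For the base case $|\W|=1$ the claim reads $x(W)\le\len{W}$, which is immediate from $x_a\le1$. For the inductive step I would partition $\W$ according to the last arc each walk uses to enter $t$: writing $(s_1,t),\dots,(s_k,t)$ for the distinct such arcs and $\W_1,\dots,\W_k$ for the corresponding groups, every $W\in\W_i$ ends with $(s_i,t)$, and deleting this arc yields a strictly shorter walk $\hat{W}$ ending in $s_i$.

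The heart of the argument is that each truncated family $\{\hat{W}:W\in\W_i\}$ again satisfies the hypothesis of the lemma: all its members end in the common node $s_i$, and every pair is still separated at some node. For a pair $W,W'\in\W_i$, the separating node $v$ guaranteed by the hypothesis can be taken different from $t$: in the typical situation where a walk meets $t$ only at its end, both $\inarc_W(t)$ and $\inarc_{W'}(t)$ equal the singleton $\{(s_i,t)\}$, so $t$ cannot separate the pair, forcing $v\ne t$. Since truncation only removes an arc incoming to $t$, the sets $\inarc_{\hat{W}}(v)=\inarc_W(v)$ and $\inarc_{\hat{W}'}(v)=\inarc_{W'}(v)$ stay unchanged, so $v$ still separates $\hat{W}$ and $\hat{W}'$. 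The induction hypothesis then gives, for each group, $\sum_{W\in\W_i}x(\hat{W})\le\sum_{W\in\W_i}\len{\hat{W}}-|\W_i|+1$.

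It remains to reassemble. Using $x(W)=x(\hat{W})+x_{s_i,t}$ and $\len{W}=\len{\hat{W}}+1$ and summing the group bounds over $i$ (with $m:=|\W|$, so $\sum_i|\W_i|=m$) yields
\[
\sum_{W\in\W}x(W)\;\le\;\sum_{W\in\W}\len{W}\;-\;2m+k+\sum_{i=1}^{k}|\W_i|\,x_{s_i,t}.
\]
Because $(s_1,t),\dots,(s_k,t)$ are distinct arcs entering $t$, constraint (\ref{constr:indeg}) at $t$ gives $\sum_i x_{s_i,t}\le1$, and with $x\ge0$ from (\ref{constr:xvars}) I bound $\sum_i|\W_i|\,x_{s_i,t}\le(\max_i|\W_i|)\sum_i x_{s_i,t}\le\max_i|\W_i|\le m-k+1$, the last inequality holding since the remaining $k-1$ groups each contain at least one walk. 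Substituting this into the display gives precisely $\sum_{W\in\W}x(W)\le\sum_{W\in\W}\len{W}-m+1$, as required.

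I expect the main obstacle to be the inheritance step of the second paragraph: verifying rigorously that the separation hypothesis survives truncation when a walk revisits $t$ or repeats arcs, so that $\inarc_W(t)$ is no longer a singleton and $t$ might genuinely separate two walks of the same group. In that case one must argue carefully about which occurrence of an arc into $t$ is deleted, and handle the degenerate walk $(t)$ of length $0$ separately; the remaining termination and bookkeeping are routine.
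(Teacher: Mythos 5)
Your reassembly algebra (third paragraph) is sound, and your overall scheme---group the walks by their last arc into $t$, truncate, apply induction, recombine via the indegree constraint at $t$---is essentially the second case of the paper's own induction. However, the inheritance step you flag at the end is not a routine verification: it is false as stated, and no care about which occurrence of an arc is deleted can rescue it. Concretely, take $\W=\{W_1,W_2\}$ with $W_1=u,t,s,t$ (arcs $(u,t),(t,s),(s,t)$) and $W_2=s,t$ (the single arc $(s,t)$). Both walks end at $t$, and the pair is separated at $t$ itself, since $\inarc_{W_1}(t)\cup\inarc_{W_2}(t)=\{(u,t),(s,t)\}$; so $\W$ satisfies the lemma's hypothesis. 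Both walks end with the same arc $(s,t)$, hence lie in one group. Truncation gives $\hat W_1=u,t,s$ and the degenerate walk $\hat W_2=s$, and now every node carries at most one arc of $\inarc_{\hat W_1}(\cdot)\cup\inarc_{\hat W_2}(\cdot)$: the hypothesis fails for the truncated family. Worse, the conclusion itself fails for it: the point with $x_{u,t}=x_{t,s}=1$ and all other coordinates $0$ satisfies (\ref{constr:indeg}) and (\ref{constr:xvars}) but has $x(\hat W_1)+x(\hat W_2)=2>\len{\hat W_1}+\len{\hat W_2}-2+1=1$. So the induction hypothesis is simply unavailable, and your induction stalls on this family no matter how the bookkeeping is arranged.

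The missing idea is a separate first case, which is exactly how the paper proceeds: if some walk $W\in\W$ itself contains a node $v$ with $|\inarc_W(v)|\ge2$ (as $W_1$ does above at $v=t$), remove it from the family, bound $x(W)\le\len W-1$ directly---the two arcs of $\inarc_W(v)$ contribute at most $1$ by (\ref{constr:indeg}), the remaining arcs at most $1$ each by (\ref{constr:xvars})---and apply the induction hypothesis to $\W\setminus\{W\}$; adding the two bounds yields (\ref{asw0}). Once such walks are disposed of, every remaining walk meets $t$ by an incoming arc only at its final position, so two walks in the same group can never be separated at $t$, your inheritance argument becomes valid, and the degenerate walk of length $0$ can no longer occur in a family of size at least two. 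With that case added, your proof is correct and essentially coincides with the paper's; the residual differences are cosmetic (the paper walks back along the entire common suffix to the first branching node $t'$ rather than truncating one arc, and inducts on $|\W|$ rather than on the total number of arcs).
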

\ifshort
For space reasons, we postpone the proof of \autoref{lemma:asw} to the full version \cite{jabrayilov2020hopconstrained}.
\else
\begin{proof} %
    \label{proof:lemma:asw}
    We show it by induction on $|\W|$.
    For $|\W|=1$, 
    constraint (\ref{asw0}) has the form
    $x(W) \le \len W -1+1=\len W$, 
    which is implied by (\ref{constr:xvars}).
    For $|\W| \ge 2$, there are two cases:
    In the first case, there is a walk 
    $W':=a_1,\dots,a_m \in \W$ that contains a node $v$ 
    with $|\inarc_{W'}(v)| \ge 2$.
    By the induction hypothesis, (\ref{constr:indeg}) and (\ref{constr:xvars}) imply:
    \begin{align}
	\sum_{W \in \W \setminus W'} x(W) 
	& \le 
	\Big( \sum_{W \in \W \setminus W'} \len W  \Big)
	-|\W \setminus W'|+1
	=
	\Big( \sum_{W \in \W \setminus W'} \len W  \Big)
	-|\W|+2.
	    \label{ineqTemp1}
    \end{align}
    Let $a_j \ne  a_k \in \inarc_{W'}(v)$. 
    We have
    $\sum_{i\in \{j,k\}} x_{a_i} \le 1$ 
    by (\ref{constr:indeg})
    and
    $\sum_{i\in\{1,\dots,m\}\setminus \{j,k\}} x_{a_i} \le m-2$ 
    by 
    (\ref{constr:xvars}).
    Thus $x(W') =
    \sum_{i\in \{j,k\}} x_{a_i} 
    + \sum_{i\in\{1,\dots,m\}\setminus \{j,k\}} x_{a_i} \le m-1 =\len {W'}-1$
    is implied 
    by (\ref{constr:indeg}) and 
    (\ref{constr:xvars}).
    Adding 
    $x(W') \le \len {W'}-1$ and 
    (\ref{ineqTemp1}) 
    proves the claim.

    In the second case, %
    for any walk
    $W:=v_0,a_1,v_1,\allowbreak \dots,a_{\len W},v_{\len W} \in \W$ and
    any $i \in \{1,\dots,\len W\}$, we have 
    $|\inarc_{W}(v_i)| \le 1$,
    \ie,
    $\inarc_W(v_i)=\{a_i\}$. Then:
    \begin{align}
        x(W) = 
        \sum\nolimits_{i=1}^{\len W} x_{a_i} 
        = \sum\nolimits_{i=1}^{\len W} x(\inarc_W(v_i)).
        \label{lemmaStw1}
    \end{align}
    Consider the set %
    $\inarc_{\W}(t):=\bigcup_{W\in\W}\inarc_W(t)$. %
    If $|\inarc_{\W}(t)|=1$,
    \ie, all walks in $\W$ 
    ends in the same arc,
    then we traverse the walks backward till we find a node $t'$ with $\inarc_W(t')\ge 2$:
    We initialize $t':=t$. 
    While 
    $|\inarc_{\W}(t')|=1$, say 
    $\inarc_{\W}(t)=\{w,t'\}$, we set $t':=w$.
    This way, we get a node $t'$ with 
    $|\inarc_{\W}(t')|\ge2$,
    since for each two $W \ne W' \in \W$, there is a node $v$ 
    with 
    $|\inarc_W(v) \cup \inarc_{W'}(v)| \ge 2$.
    Let $\inarc_{\W} (t') = \{(w_1,t'),\hdots,(w_k,t')\}$ with $k\ge2$.
    Constraints (\ref{constr:indeg}) imply:
    \begin{align}
    x(\inarc_{\W}(t')) = \sum\nolimits_{i=1}^k x_{w_i,t'} \le 1.
    \label{lemmaIneqDeltaInT}
    \end{align}
    Let $\W_i \subset \W$ for $i \in \{1,\hdots,k\}$ be the set of walks
    that contain the walk $w_i,t',\dots,t$ as a suffix. 
    Then $\inarc_{\W_i} (t') = \{(w_i,t')\}$.
    Thus
    $\sum_{W \in \W_i}  x(\inarc_W(t'))= |\W_i| \cdot x_{v_i,t'}$.
    Let $l$ be the length of walk $t',\dots,t$,
    and let
    $W:=(v_0,v_1,\dots,v_{\len W}) \in \W_i$. 
    Then $t'=v_{\len W-l}$. %
    From (\ref{lemmaStw1}) follows: %
    \begingroup
    \allowdisplaybreaks
    \begin{align}
        \sum_{W \in \W_i} x(W) 
        & = \sum_{W:=(v_0,\dots,v_{\len W}) \in \W_i} 
            \sum_{j = 1}^{\len W} x \big( \inarc_W(v_j) \big)
            \notag
        \\
        & = \sum_{W:=(v_0,\dots,v_{\len W}) \in \W_i} 
            \Big( 
            x \big( \inarc_W(t') \big)
         + \sum_{j \in \{1,\dots,\len W\} \setminus \{\len W-l\}} x \big( \inarc_W(v_j) \big)
            \Big)
            \notag
        \\
        & = |\W_i| \cdot x_{w_i,t'} 
            + 
            \sum_{W:=(v_0,\dots,v_{\len W}) \in \W_i} 
            \sum_{j \in \{1,\dots,\len W\} \setminus \{\len W-l\}} x \big( \inarc_W(v_j) \big).
            \label{lemmaStw2}
    \end{align}
    \endgroup
    Since $|\W_i| < |\W|$, by the induction hypothesis, (\ref{constr:indeg}) and (\ref{constr:xvars}) imply: 
    \begin{align}
      \sum\nolimits_{W \in \W_i} x(W) 
      & \le 
      \Big( \sum\nolimits_{W \in \W_i} \len W \Big)
      -|\W_i|+1.
      \label{lemmaStw3}
    \end{align}
    Constraints 
    (\ref{constr:indeg}) 
    affect $\sum_{W \in \W_i} x(W)$ 
    if and only if there is a node $w$ with
    $|\inarc_{\W_i} (w)| \ge 2$,
    \ie, 
    there are two distinct arcs 
    $a, a' \in \inarc_{\W_i} (w)$,
    as then 
    $\sum_{W \in \W_i} x(W)$
    includes the summands
    $x_{a},\ x_{a'}$,
    such that
    $x_{a}+x_{a'} \le 1$ by (\ref{constr:indeg}).
    From 
    $|\inarc_{\W_i}(t')|=|\{(w_i,t')\}|=1$
    follows $w \ne t'$.
    The variables corresponding to arcs $\inarc_{\W_i} (w)$ with $w \ne t'$
    are in the second summand of (\ref{lemmaStw2}).
    That is, constraints 
    (\ref{constr:indeg}) affect only 
    the second summand of (\ref{lemmaStw2}),
    while 
    (\ref{constr:xvars})
    affect both summands of (\ref{lemmaStw2}).
    Particularly, 
    (\ref{constr:xvars})
    imply for the first summand that
    $|\W_i| \cdot x_{w_i,t'}  \le |\W_i|$.
    Then
    (\ref{constr:indeg}) 
    and
    (\ref{constr:xvars})
    imply (\ref{lemmaStw3}),
    if and only if  
    (\ref{constr:indeg}) 
    and 
    (\ref{constr:xvars})
    imply 
    for the second summand of (\ref{lemmaStw2}) that:
    \begin{align}
      \sum_{W:=(v_0,\dots,v_{\len W}) \in \W_i} 
      \sum_{j \in \{1,\dots,\len W\} \setminus \{\len W-l\}} x \big( \inarc_W(v_j) \big)
      \le 
       \Big( \sum_{W \in \W_i} \len W \Big) 
	- |\W_i| +1
	- |\W_i|. 
      \label{lemmaStw4}
    \end{align}
    It follows:
    \begingroup
    \allowdisplaybreaks
    \begin{align}		    
	\sum\nolimits_{W \in \W} x(W) 
	& =
	\sum\nolimits_{i=1}^k
	    \sum\nolimits_{W \in \W_i} x( W ) 
	\notag
	\\
	& \le
	\sum\nolimits_{i=1}^k
	\Big( 
	    |\W_i| \cdot x_{w_i,t'} 
	    +  \big( \sum\nolimits_{W \in \W_i} \len W \big)
	    - |\W_i| +1 -|\W_i|
	\Big) 
	\label{lemmaStw6}
	\\
	& \le
	\sum\nolimits_{i=1}^k
	\Big( 
	    \big( \sum\nolimits_{W \in \W_i} \len W \big)
	    - |\W_i| 
	    + x_{w_i,t'} 
	\Big) 
	\label{lemmaStw7}
	\\
	& \le 
	\Big( \sum\nolimits_{W \in \W} \len W \Big)
	-|\W|
	+
	1,
	\label{lemmaStw8}
    \end{align}
    \endgroup
    where %
    (\ref{lemmaStw6}) follows from 
    (\ref{lemmaStw2})
    and (\ref{lemmaStw4}). %
    The reason for (\ref{lemmaStw7}) is that
    $ |\W_i| \cdot x_{w_i,t'} - (|\W_i|-1 ) = x_{w_i,t'} + ( x_{w_i,t'}-1) ( |\W_i|-1 ) \le x_{w_i,t'} $
    as $x_{w_i,t'}\le1$ and $|\W_i|\ge1$. 
    Inequality (\ref{lemmaStw8}) %
    follows from
    $ \sum_{i=1}^k \sum_{W \in \W_i} \len W = \sum_{W \in \W} \len W$
    and %
    (\ref{lemmaIneqDeltaInT}).
\end{proof}
\fi
\begin{theorem} \label{theorem:hstp:fme}
  Hop-constrained 
  tree polytopes $\passHT$ (\ref{pass1}) 
  and $\ppopHT$ (\ref{ppop}) 
  \ifshort
  meet
  \else
  satisfy 
  \fi
  $\xpopHT\subseteq\xassHT$.
\end{theorem}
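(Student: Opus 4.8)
The plan is to make the projection $\xassHT$ fully explicit by finishing the Fourier--Motzkin elimination (FME) started above, and then to check that every inequality produced this way is valid for $\ppopHT$; since both models share the same $x$-objective, this yields $\xpopHT \subseteq \xassHT$. Concretely, I would work with the already-reduced polytope $\passHTstr$ (\ref{pass2}), whose only remaining $y$ variables are the $y_{v,i}$ with $v \ne r$ and $i \in \{2,\dots,H\}$, and eliminate them layer by layer. By \autoref{remark:fme:intro}, every surviving inequality is a nonnegative integer combination of $S_0,\dots,S_H$, (\ref{Q2}), and the shared constraints (\ref{constr:indeg})--(\ref{constr:xvars}); the task is to describe these combinations and to recognize them as inequalities we can already control on the partial-ordering side.

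First I would isolate the ``telescoping'' combination that drives the elimination. Using the sign pattern recorded in \autoref{remark:Si:1} (each $S_i^{(u,v)}$ carries $y_{u,i}$ with a $+$ and $y_{v,i+1}$ with a $-$), summing
\[
 S_0^{(r,v_1)} + S_1^{(v_1,v_2)} + S_2^{(v_2,v_3)} + \dots + S_H^{(v_H,v_{H+1})}
\]
along a walk $W = r,v_1,\dots,v_{H+1}$ of $H+1$ arcs cancels every $y$ variable in turn ($y_{v_1,\cdot}$ between $S_0$ and $S_1$, then $y_{v_j,j}$ between $S_{j-1}$ and $S_j$) and leaves exactly $x(W) \le H$. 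This is the prototypical new $x$-space inequality, and it is strictly stronger than the trivial bound $x(W)\le \len W = H+1$. The remaining subtlety is that a node may have several outgoing arcs, so a single $y_{v,i}$ occurs with a $+$ sign in several constraints $S_i^{(v,w)}$; eliminating it forces us to branch and to combine with the in-degree inequalities (\ref{constr:indeg}) in order to cancel the $x$-terms of arcs sharing a head. The general output is thus an aggregated inequality over a \emph{set} $\W$ of $r$-anchored walks that end at a common node and pairwise overlap in an in-arc, i.e.,~precisely the configuration of \autoref{lemma:asw}.

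Next I would verify validity on $\ppopHT$ family by family. The shared inequalities (\ref{constr:indeg})--(\ref{constr:xvars}) and $x_{u,r}\le 0$ (\ref{constr:arc:ur}) hold for $\ppopHT$ by definition. The aggregated walk-set inequalities of the form (\ref{asw0}) are exactly \autoref{lemma:asw}, whose proof uses only (\ref{constr:indeg}) and (\ref{constr:xvars}); since $\ppopHT$ contains these, the inequalities are valid there as well. Finally, for the walk-from-root inequality $x(W)\le H$ on a walk $W=r,v_1,\dots,v_{H+1}$ I would reuse the argument of \autoref{theorem:pop:mtz}: summing the partial-ordering edge constraints (\ref{constr:edge:forward}) for the arcs $(v_i,v_{i+1})$ at position $i$ gives $\sum_{i=0}^{H}x_{v_i,v_{i+1}} \le l_{r,0} + \sum_{i=0}^{H-1}(g_{i,v_{i+1}}+l_{v_{i+1},i+1}) + g_{H,v_{H+1}}$, which equals $0 + H\cdot 1 + 0 = H$ by (\ref{constr:root}), (\ref{constr:unambiguous2}), and (\ref{constr:interval}). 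Hence every defining inequality of $\xassHT$ holds on $\xpopHT$, which gives the claimed containment.

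The hard part is the combinatorial bookkeeping of the second paragraph: proving that the FME, carried out over all layers and all branchings simultaneously, yields no inequalities outside the three families above (shared, degree walk-set of \autoref{lemma:asw}, and $r$-anchored walk(-set)s), and that the interaction between the layer constraints $S_i$ and the in-degree constraints (\ref{constr:indeg}) always collapses into an aggregated form whose right-hand side I can certify. This is exactly where \autoref{lemma:asw} does the heavy lifting, letting me avoid tracking the exponentially many intermediate constraints and instead certify the final shape of $\xassHT$ directly; note also that the $-1$ improvement in $x(W)\le H$ comes from the hop-encoding $S_i$-chain rather than from the degrees, which is why that family must be checked against $\ppopHT$ via (\ref{constr:edge:forward}) and cannot be recovered from \autoref{lemma:asw} alone.
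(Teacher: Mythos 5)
Your overall strategy is the same as the paper's: pass to $\passHTstr$ (\ref{pass2}), invoke Remark~\ref{remark:fme:intro} to view every FME output as a nonnegative integer combination of $S_0,\dots,S_H$ and (\ref{Q2}), and certify the resulting $x$-space inequalities on $\ppopHT$ via Lemma~\ref{lemma:asw} and the telescoping of (\ref{constr:edge:forward}). But the step you explicitly defer at the end --- showing that FME produces nothing outside your three families --- is not ``bookkeeping'': it is the actual content of the paper's proof. The paper proves it by taking an \emph{arbitrary} output $x_{a_1}+\dots+x_{a_s}\le s-s_1$ (\ref{X}) and constructing a partition $\I$ of its index multiset into three types of sets (an explicit algorithm whose correctness argument occupies most of the proof), so that each part is certified separately --- Type~1 by Theorem~\ref{theorem:pop:mtz}, Type~2 by Lemma~\ref{lemma:asw}, Type~3 by (\ref{constr:xvars}) --- and the right-hand sides $|I\setminus J_1|$ sum exactly to $s-s_1$. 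Lemma~\ref{lemma:asw} certifies only one of these types; it cannot ``certify the final shape of $\xassHT$ directly,'' because it says nothing about which shapes FME actually produces.

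Moreover, your claimed characterization of the outputs is false, so the plan as written breaks. Take $H=2$ and distinct nodes $a,b,c,w$, none equal to $r$. Summing $S_1^{(a,b)}+S_2^{(b,c)}+S_2^{(a,w)}$, i.e., $(-y_{a,2}-y_{b,2}+x_{a,b})+(y_{b,2}+x_{b,c})+(y_{a,2}+x_{a,w})$, cancels all $y$ variables and yields the $x$-space constraint $x_{a,b}+x_{b,c}+x_{a,w}\le 2$ (here $s=3$, $s_1=1$). No arc is incident to $r$; the two walks $a,b,c$ and $a,w$ end at \emph{different} nodes; and no two arcs share a head. Hence this inequality is covered neither by your Lemma~\ref{lemma:asw} family, nor by your $r$-anchored family $x(W)\le H$, nor by (\ref{constr:indeg})/(\ref{constr:xvars}) (which only give the bound $3$). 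The only way to certify it on $\ppopHT$ is the bound $x_{a,b}+x_{b,c}\le H-1=1$ for the \emph{non}-$r$-anchored $H$-arc walk, i.e., exactly constraint (\ref{pop:xspace:mtz}) of Theorem~\ref{theorem:pop:mtz} --- a family missing from your list, since you only derive the strictly weaker $r$-anchored variant with bound $H$. Such outputs arise whenever the tail block $-(y_{a,2}+\dots+y_{a,H})$ of an $S_1$ constraint is cancelled by $S_i$ constraints with $i\ge 2$ rather than by $S_0$; the paper's Type~1 sets and its partition algorithm exist precisely to handle this interaction. Two smaller inaccuracies: (\ref{constr:arc:ur}) does not hold on $\ppopHT$ ``by definition'' but needs the paper's short derivation from (\ref{constr:root}), (\ref{constr:interval}), (\ref{constr:unambiguous1}) and (\ref{constr:edge:forward}); and the case $H=1$, where (\ref{Q2}) is empty and $S_1$ degenerates to $x_{u,v}\le 0$, needs separate treatment.
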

\begin{proof} %
Polytopes $\passHT$ (\ref{pass1}) and $\passHTstr$ (\ref{pass2}) satisfy $\passHTstr \subseteq \passHT$. 
Then $\xassHTstr \subseteq \xassHT$.
So it is enough to show $\xpopHT \subseteq \xassHTstr$.
$\passHTstr$ constraints (\ref{constr:indeg})--(\ref{constr:xvars}) 
also hold for $\xpopHT$, by definition of $\ppopHT$.
Moreover, adding $\ppopHT$ constraints
$-g_{0,r} = 0$                (\ref{constr:root}),
$-l_{u,1}=0$                  (\ref{constr:interval}),
$-l_{u,0}+l_{u,1} \ge 0$      (\ref{constr:unambiguous1}), and 
$l_{u,0}+g_{0,r} \ge x_{u,r}$ (\ref{constr:edge:forward})
gives $x_{u,r} \le 0$         (\ref{constr:arc:ur}),
so (\ref{constr:arc:ur}) hold for $\xpopHT$, too.
So we consider only the $x$-space constraints $X$, generated by applying the FME on the remaining $\passHTstr$ constraints $S_0,\dots,S_H$ and (\ref{Q2}).

In case $H=1$, the set (\ref{Q2}) is empty as it is defined for $H\ge 2$, while $S_0$ and $S_1$ 
has the form:
\begin{align*}
S_0: &&	    x_{r,v}	& \le 1   && 
\forall v \in V \setminus \{r\},
\\
S_1: &&	    x_{u,v}	& \le 0	  && \forall uv \in E, r \notin \{u,v\}.
\end{align*}
$\xpopHT$ constraints (\ref{constr:xvars}) imply $S_0$. 
By Theorem \ref{theorem:pop:mtz}, $\xpopHT$ constraints (\ref{pop:xspace:mtz}) imply $S_1$,
since
the arc $(u,v)$ with $r \notin \{u,v\}$ is a directed walk $u,v$ with $H=1$ arc such that $u\ne r$.
Thus $\xpopHT \subseteq \xassHTstr$.

Otherwise, $H \ge 2$. %
By Remark
\ref{remark:fme:intro}, %
$X$ is a weighted sum 
of some constraints selected from sets $S_0,\dots,S_H$ and (\ref{Q2}),
where the weights are nonnegative integers.
We can interpret an integral weight $w$ of a constraint $C$ as meaning that
constraint $C$ is selected $w$ time.
Thus, $X$ can be shown as unweighted sum 
$C_1 + \dots + C_l$
of some not necessarily distinct constraints $C_1,\dots,C_l$
selected from sets $S_0,\dots,S_H$ and (\ref{Q2}).
Moreover, since in the constraints $C_1,\dots,C_l$, all coefficients are integral, this is true for $X$, too.
Let $s_i$ be the number of the selected $S_i$ constraints, and let $s:=(s_0+\dots+s_H)$. 
Since $X$ does not contain any $y$ variable, it has the form: 
\begin{align}
  X: && 
  x_{a_1} + \dots + x_{a_{s}} & \le s-s_1, 
  \label{X}
\end{align}
where $a_1,\dots,a_s$ are some not necessarily distinct arcs.
Let $J:=\{1,\dots,s\}$ and let $J_0,\dots,J_H$ be the partition of $J$ such that $j \in J_i$ 
\ifshort
if $x_{a_j}$ is originated from $S_i^{a_j}$.
\else
if $x_{a_j}$ is originated from $S_i^{a_j}$ with $0 \le i \le H$;
in this case, we write $\S(j)=S_i^{a_j}$.
\fi
We show that there is a partition $\I$ of $J$, such that  (\ref{constr:indeg}), (\ref{constr:xvars}), and (\ref{pop:xspace:mtz}) imply:
\begin{align}
  \msum_{i \in I} x_{a_i} \le |I \setminus J_1 |
  &
  &
  \mbox{for all } I \in \I.
  \label{index:partition}
\end{align}
It follows then $\xpopHT$ constraints (\ref{constr:indeg}), (\ref{constr:xvars}), and (\ref{pop:xspace:mtz}) imply $X$ (\ref{X}), as
\begin{align*}
    x_{a_1} + \dots + x_{a_s} 
    = 
    \msum_{I\in\I} \sum\nolimits_{i \in I} x_{a_i}
    \le \msum_{I\in \I}|I\setminus J_1|
    = \big( \msum_{I\in \I}|I| \big) - |J_1|
    = s - s_1,
\end{align*}
and thus $\xpopHT \subseteq \xassHTstr$.
\ifshort
We construct $\I$ such that it contains three types of sets:
\else
We construct $\I$ (\ref{index:partition}) such that it contains three types of sets:
\fi

\ifshort
  \noindent\textbf{Type 1}:
  Each set $I:=\{j_1,\dots,j_H\}$
  of the first type
  has exactly $H$ elements 
  such that 
  $|I \cap J_1|=1$ and
  $a_{j_1},\dots,a_{j_H}$ build a walk with $H$ arcs
  that does not start at $r$.
  From (\ref{pop:xspace:mtz}) follows 
  $\sum_{i \in I} x_{a_i} %
  \le H-1 = |I\setminus J_1|$.
  Thus $I$ satisfies (\ref{index:partition}).

  \noindent\textbf{Type 2}:
  Let $I:=\{j_1,\dots,j_m\}$ be a set of the second type.
  The arcs $a_{j_1},\dots,a_{j_m}$ build $|I\cap J_1|+1$ 
  walks $\W := \{W_1,\dots,W_{|I\cap J_1|+1}\}$ that end in the same node
  such that 
  each two $W \ne W' \in \W$ has a common node $v$
  \ifshort
  with $|\inarc_W(v) \cup \inarc_{W'}(v)| \ge 2$.
  \else
  with $|\inarc_W(v) \cup \inarc_{W'}(v)| \ge 2$ (Figure \ref{fig:singlesink}).
  \fi
  By Lemma \ref{lemma:asw}, constraints 
  (\ref{constr:indeg}) and (\ref{constr:xvars}) imply 
  $\sum_{i \in I} x_{a_i} = 
      \sum_{W \in \W} x(W) \le ( \sum_{W \in \W} \len W ) 
      - |\W| +1
      = |I| - |I\cap J_1| 
      = |I\setminus J_1|$,
  so $I$ satisfies (\ref{index:partition}).
  \\
  \noindent\textbf{Type 3}:
  There is only one set $I$ of the last type. $I$ does not contain any element from $J_1$, \ie, $I \setminus J_1=I$.
  By (\ref{constr:xvars}), 
  $\sum_{i \in I} x_{a_i} \le |I| = |I\setminus J_1|$,
  so $I$ meets (\ref{index:partition}).
\else
\begin{description}
  \item[Type 1:]
  Each set $I:=\{j_1,\dots,j_H\}$
  of the first type
  has exactly $H$ elements 
  such that 
  $|I \cap J_1|=1$ and
  $a_{j_1},\dots,a_{j_H}$ build a walk with $H$ arcs
  that does not start at $r$ (Fig. \ref{figSetType1}).
  From (\ref{pop:xspace:mtz}) follows 
  $\sum_{i \in I} x_{a_i} %
  \le H-1 = |I\setminus J_1|$.
  Thus $I$ satisfies (\ref{index:partition}).

  \item[Type 2:]
  Let $I:=\{j_1,\dots,j_m\}$ be a set of the second type.
  The arcs $a_{j_1},\dots,a_{j_m}$ build $|I\cap J_1|+1$ 
  walks $\W := \{W_1,\dots,W_{|I\cap J_1|+1}\}$ that end in the same node
  such that 
  each two $W \ne W' \in \W$ have a common node $v$
  with $|\inarc_W(v) \cup \inarc_{W'}(v)| \ge 2$.
  By Lemma \ref{lemma:asw}, constraints 
  (\ref{constr:indeg}) and (\ref{constr:xvars}) imply 
  $\sum_{i \in I} x_{a_i} = 
      \sum_{W \in \W} x(W) \le ( \sum_{W \in \W} \len W ) 
      - |\W| +1
      = |I| - |I\cap J_1| 
      = |I\setminus J_1|$.
  Hence $I$ fulfills (\ref{index:partition}).
  \ifshort
  \else
  Figure \ref{figSetType2} illustrates an example for the second type of sets, namely 
  the set $I:=\{j_1,\dots,j_6\}$ with $I \cap J_1 = \{j_1,j_3\}$. %
  The arcs $a_{j_1},\dots,a_{j_6}$ build $|I \cap J_1|+1=3$ walks $\W=\{W_1,W_2,W_3\}$ (solid, dashed and dotted) that end in the same node, and each two walks $W \ne W' \in \W$ have a common node $v$ with at least two distinct incoming arcs, \ie, $|\inarc_W(v) \cup \inarc_{W'}(v)| \ge 2$. 
  By Lemma \ref{lemma:asw}, constraints 
  (\ref{constr:indeg}) and (\ref{constr:xvars}) imply 
  $x_{a_{j_1}}+\dots+x_{a_{j_6}} \le |I| - |\W|+1 = 6-3+1=|I| - (|I\cap J_1|+1) +1 = | I \setminus J_1| $.
  \fi
  \item[Type 3:]
  There is only one set $I$ of the last type, which does not contains any element from $J_1$, \ie, $I \setminus J_1=I$.
  By (\ref{constr:xvars}), 
  $\sum_{i \in I} x_{a_i} \le |I| = |I\setminus J_1|$,
  so $I$ fulfills (\ref{index:partition}).
\end{description}
\fi

\ifshort
\else
\begin{figure}[bt]
  \centering
  \def\scale{1.0}

  \captionsetup[subfigure]{justification=centering}
  \begin{subfigure}[b]{0.44\textwidth}
    \scalebox{\scale}{
    \begin{tikzpicture}[
      x=1.0 cm, y=0.9 cm,
      V/.style= {circle, inner sep=4pt, draw, line width=1pt},
      P/.style= {rectangle, inner sep=1pt, draw},
      snake/.style= {->, decorate, decoration=snake, line width=1},
      E/.style= {->, line width=1},
      L/.style  = {bend left=7pt},
      R/.style  = {bend right=7pt},
      Pointer/.style= {->, draw, opacity=0.5},
      ]
      \path (0,1.0)	node [V] (d) {} +(0,0.5) node {$v_1\ne r$};
      \path (2,1.0)	node [V] (e) {} +(0,0.5) node {$v_2$};
      \path (4,1.0)	node [V] (c) {} +(0,0.5) node {$v_H$};
      \path (6,1.0)	node [V] (t) {} +(0,0.5) node {$v_{H+1}$};
      \path (d) edge [E] node [below] {$a_{j_1}$}  (e);
      \path (e) edge [snake]  (c);
      \path (c) edge [E] node [below] {$a_{j_H}$}  (t);

      \path (0,-0.1) node {};
    \end{tikzpicture}
    }
  \subcaption{
    $\{j_1,\dots,j_H\}$ with 
    $\{j_1,\dots,j_H\} \cap J_1 = \{j_1\}$
  }
  \label{figSetType1}
  \end{subfigure}
  \hspace{42pt}
  \captionsetup[subfigure]{justification=centering}
  \begin{subfigure}[b]{0.44\textwidth}
    \scalebox{\scale}{
    \begin{tikzpicture}[
      x=1.0 cm, y=0.9 cm,
      V/.style= {circle, inner sep=4pt, draw, line width=1pt},
      P/.style= {rectangle, inner sep=1pt, draw},
      E/.style= {->, line width=1},
      L/.style  = {bend left=7pt},
      R/.style  = {bend right=7pt},
      Pointer/.style= {->, draw, opacity=0.5},
      ]
      \path (0,1.0)	node [V] (d) {};
      \path (2,1.0)	node [V] (e) {};
      \path (1.5,2.0)	node [V] (a) {};
      \path (4,2.0)	node [V] (c) {};
      \path (6,1.0)	node [V] (t) {};
      \path (3,0.0)	node [V] (h) {};
      \path (a) edge [E]	    node [above] {$a_{j_1}$}  (c);
      \path (c) edge [E,L]          node [above] {$a_{j_2}$}  (t);
      \path (d) edge [E, dashed]    node [below] {$a_{j_3}$}  (e);
      \path (e) edge [E, dashed]    node [below] {$a_{j_4}$}  (c);
      \path (c) edge [E, dashed, R] node [below] {$a_{j_5}$}  (t);
      \path (h) edge [E, dotted]    node [below] {$a_{j_6}$}  (t);
    \end{tikzpicture}
    }
  \subcaption{
    $\{j_1,\dots,j_6\}$ with $\{j_1,\dots,j_6\} \cap J_1 = \{j_1,j_3\}$
    }
  \label{figSetType2}
  \end{subfigure}
  \caption{
    (\subref{figSetType1}) A set of the first type $\{j_1,\dots,j_H\}$ 
    and 
    (\subref{figSetType2}) a set of the second type $\{j_1,\dots,j_6\}$ 
  }
\end{figure}
\fi

\ifshort
The details of the construction $\I$ (\ref{index:partition}) are given in the full version \cite{jabrayilov2020hopconstrained}.
\else
\def\propertyDefinition{i}
\def\propertySuffix{ii}
\def\propertyDisjoint{iii}

\begingroup
\allowdisplaybreaks
\begin{algorithm2e}[t]
    \caption{Construction of the partition $\I$}
    \label{algoPartitionI}
    \setstretch{0.9}
    \scriptsize
    \LinesNumbered 

    \nl
    $\I \leftarrow \{\}$
    \label{algoInitPartitionI}

    \nl
    $\Jused \leftarrow \{\}$
    \label{algoInitJused}

    \tcc{Phase 1: Search for the first type of sets in $\I$}

    \ForEach {$j_1 \in J_1$ \label{phase1Loop} } {
      \If {there is $\{j_1,\dots,j_H\} \subseteq J \setminus \Jused$
	such that 
        $(j_1,\dots,j_H) \in J_1 \times \dots \times J_H$,
	and $(a_{j_1},\dots,a_{j_H})$ is a walk
	\label{phase1Select} }
      {

	$\I \leftarrow \I \cup \{\{ j_1,\dots,j_H \}\}$
        \hfill \tcc{$\{ j_1,\dots,j_H \}$ is a set of the first type}
	\label{phase1ExtendPartitionI}

	$\Jused \leftarrow \Jused \cup \{ j_1,\dots,j_H \}$
	\label{phase1Jused}

      }
    }

    \tcc{Phase 2: Search for the second type of sets in $\I$}
    $J^* \leftarrow \Jused$
    \label{phase2Jstar}

    \ForEach {$j_0 \in J_0$ \label{phase2Loop1} } {
      $I_{j_0} \leftarrow \{j_0\}$
      \label{phase2SetIj0}

      $\Y_{j_0} \leftarrow \{\}$
      \label{phase2Y0vars0}
    }

    \For {$k = 1,\dots,|J_1 \setminus J^*|$ \label{phase2Loop2} }
      {
      Find a $j_0 \in J_0$ and a 
      set $\{j_1,\dots,j_l\} \subseteq J \setminus \Jused$ 
      of minimum cardinality with 
      $(j_1,\dots,j_l) \in J_1 \times \dots \times J_l$
      such that 
      $(a_{j_1},\dots,a_{j_l})$ is a walk,
      $\head(a_{j_0})=\head(a_{j_l})$ 
      and
      $y_{\head(a_{j_0}),l+1} \notin \Y_{j_0}$.
      \label{phase2Select1}

      $I_{j_0} \leftarrow I_{j_0} \cup \{ j_1,\dots,j_l \}$
      \label{phase2ExtendSetIj0}

      $\Jused \leftarrow \Jused \cup \{ j_1,\dots,j_l \}$
      \label{phase2Jused1}

      $\Y_{j_0} \leftarrow \Y_{j_0} \cup \{y_{\head(a_{j_0}),l+1}\}$
      \label{phase2Y0vars1}

      \For{$i = 2,\dots,H-l$ \label{phase2Loop3} }{
	\If {$y_{\head(a_{j_0}),l+i} \notin \Y_{j_0}$ and 
	there is 
	$\{{j'_1},\dots,{j'_l}\} \subseteq J \setminus \Jused$
	with
	$(j'_1,\dots,j'_l) \in J_i \times \dots \times J_{l+i-1}$ 
	and
	$(a_{j'_1},\dots,a_{j'_l}) = (a_{j_1},\dots,a_{j_l})$%
	\label{phase2Select2}}%
	{

	  $\Jused \leftarrow \Jused \cup \{j'_1,\dots,j'_l\}$
	  \label{phase2Jused2}

	  $\Y_{j_0} \leftarrow \Y_{j_0} \cup \{y_{\head(a_{j_0}),l+i}\}$
	  \label{phase2Y0vars2}
	}
      }
    }

    \ForEach {$j_0 \in J_0$ \label{phase2Loop4} } {
      \If{ $|I_{j_0} \cap J_1| > 0$  \label{phase2IsSecondTypeSet} }{
	$\I \leftarrow \I \cup \{I_{j_0}\} $
        \hfill \tcc{$I_{j_0}$ is a set of the second type}
      }
      \label{phase2ExtendPartitionI}
    }

    \tcc{Phase 3: The remaining set $J \setminus \cup_{I \in \I} I$ is the only set of the third type.}

    $\I \leftarrow \I \cup \big\{ J \setminus \bigcup_{I \in \I} I \big\}$
    \label{phase3ExtendPartitionIBySetType3}
\end{algorithm2e}
\endgroup

\noindent
To construct $\I$ we use Algorithm \ref{algoPartitionI}.
The algorithm initialize $\I$ on line \ref{algoInitPartitionI} as an empty set-family.
To ensure that the sets in $\I$ are disjoint, we manage the set $\Jused$ of used elements
and choose the elements from the remaining set $J \setminus \Jused$. 
The set $\Jused$ is initially empty (line \ref{algoInitJused}). %

The algorithm has three phases, one phase for each set type.
The first phase (lines \ref{phase1Loop}-\ref{phase1Jused}) 
searches for the first type of sets.
The loop spanning lines \ref{phase1Loop}-\ref{phase1Jused} checks on line \ref{phase1Select} for each $j \in J_1$ 
if there is a set 
$I:=\{j_1,\dots,j_H\} \subseteq J \setminus \Jused$ 
such that $(j_1,\dots,j_H) \in J_1 \times \dots \times J_H$ and $W:=a_{j_1},\dots,a_{j_H}$ is a walk.
If yes, 
we add $I$ to $\I$ (line \ref{phase1ExtendPartitionI})
as $I$ is a set of the first type.
The reason is the following:
We have $I \cap J_1= \{j_1\}$, so $|I \cap J_1|= 1$.
Moreover, as $j_1 \in J_1$, $x_{a_{j_1}}$ is originated from $S_1$.
Then, by Remark \ref{remark:Si:1}, $a_{j_1}$ is not incident to $r$. Thus $W$ is a walk with $H$ arcs that does not start at $r$ (Fig. \ref{figSetType1}).
Line \ref{phase1Jused} updates the set $\Jused$. %

The second phase (lines \ref{phase2Jstar}-\ref{phase2ExtendPartitionI}) searches for the second type of sets. 
Consider the set $J^*= \Jused$ (line \ref{phase2Jstar}) of used elements in the first phase.
As no $j_1 \in J_1 \setminus J^*$ meets the conditions in line \ref{phase1Select}, there is no subset of $J \setminus J^*$ that is a first type set.  
We divide the remaining elements of $J_1 \setminus J^*$ between the second type of sets.
We construct these sets such that each of them has a unique element $j_0$ from set $J_0$ and at least one element from $J_1$. %
The loop spanning lines \ref{phase2Loop1}-\ref{phase2Y0vars0} creates for each $j_0 \in J_0$, 
a set $I_{j_0}$,
which initially has exactly one unique element from $J_0$, namely $j_0$ (line \ref{phase2SetIj0}).
The loop spanning lines 
\ref{phase2Loop2}-\ref{phase2Y0vars2}
extends some of the sets $I_{j_0}$ to the sets of the second type, which are then added to $\I$ by the loop spanning lines 
\ref{phase2Loop4}-\ref{phase2ExtendPartitionI}.

\begin{figure}[bt]
  \centering
  \def\scale{1.0}

    \scalebox{\scale}{
    \begin{tikzpicture}[
      x=1.0 cm, y=0.9 cm,
      V/.style= {circle, inner sep=3pt, draw, line width=1pt},
      P/.style= {rectangle, inner sep=1pt, draw},
      E/.style= {->, line width=1},
      L/.style  = {bend left=7pt},
      R/.style  = {bend right=7pt},
      Pointer/.style= {->, draw, opacity=0.5},
      ]
      \path (0,2.0) node [V] (d) {} +(-0.4,0) node {$q$};
      \path (2,2.0) node [V] (a) {};
      \path (4,2.0) node [V] (c) {} +(0,0.4) node {$v\ne r$};
      \path (6,2) node [V] (t) {} +(0.4,0) node {$t$};;
      \path (4,0.5) node [V] (r) {} +(0,-0.4) node {$r$};;

      \path (r) edge [E] node [below] {$a_{j_0}$}  (t);

      \path (d) edge [E,L] node [above] {$a_{j_{1}}$}  (a);
      \path (a) edge [E,L] node [above] {$a_{j_{2}}$}  (c);
      \path (c) edge [E,L] node [above] {$a_{j_{3}}$}  (t);

      \path (d) edge [E,R,dotted] node [below] {$a_{j'_{1}}$}  (a);
      \path (a) edge [E,R,dotted] node [below] {$a_{j'_{2}}$}  (c);
      \path (c) edge [E,R,dotted] node [below] {$a_{j'_{3}}$}  (t);

    \end{tikzpicture}
    }
  \caption{
    The arcs corresponding to $j_0 \in J_0$, 
    to $j_1,j_2,j_3$ found on line \ref{phase2Select1},
    and 
    to $j'_1,j'_2,j'_3$ found on line \ref{phase2Select2}
  }
  \label{figExtensionOfSecondTypeSet}
\end{figure}

For each $j_1 \in J_1 \setminus J^*$, the loop spanning lines 
\ref{phase2Loop2}-\ref{phase2Y0vars2}
finds on line \ref{phase2Select1}
a set $\{j_1,\dots,j_l\} \subseteq J \setminus \Jused$ of minimum cardinality and a $j_0\in J_0$ such that 
$W:=a_{j_1},\dots,a_{j_l}$ and $W':=a_{j_0}$ are two walks with a common node $t:=\head(a_{j_0})=\head(a_{j_l})$. 
The node $t$ have two distinct incoming arcs, \ie, $|\inarc_{W}(t) \cup \inarc_{W'}(t)| \ge 2$ (Fig. \ref{figExtensionOfSecondTypeSet}).
The reason is as follows:
Both $a_{j_0}$ and $a_{j_l}$ have the same head node $t$. 
As $j_0 \in J_0$ and $j_l \in J_l$, 
$x_{a_{j_0}}$ and $x_{a_{j_l}}$ are originated from $S_0$ and $S_l$ with $l \ne 0$, respectively. 
By Remark \ref{remark:Si:1}, $a_{j_0}$ is incident to $r$, whereas $a_{j_l}$ is not.
Thus $a_{j_0}$ and $a_{j_l}$ have different tails.
It follows that 
each found set $I:=\{j_1,\dots,j_l\}$ and element $j_0$ on 
line \ref{phase2Select1}
give a set of the second type, namely 
$\{j_0\} \cup I = \{j_0,j_1,\dots,j_l\}$.
The lines \ref{phase2ExtendSetIj0} and \ref{phase2Jused1} extend 
$I_{j_0}$ 
and $\Jused$ by $I$. %

The meaning of condition 
$y_{\head(a_{j_0}),l+1} \notin \Y_{j_0}$ 
is as follows: %
Recall that each $j \in J_i$ with $i\in \{0,\dots,H\}$ identifies the unique constraint $\S(j)=S_i^{a_j}$.
As $j_0 \in J_0$ and $j_l \in J_l$, we have $\S(j_0)=S_0^{a_{j_0}}$ and $\S(j_l)=S_l^{a_{j_l}}$.
By definitions of $S_0^{a_{j_0}}$ and $S_l^{a_{j_l}}$, 
the coefficient of variable $y_{\head(a_{j_0}),l+1}$ is 1 in constraint $S_0^{a_{j_0}}$ and $-1$ in $S_l^{a_{j_l}}$, \ie,
$c(y_{\head(a_{j_0}),l+1},S_0^{a_{j_0}})=1$ and
$c(y_{\head(a_{j_l}),l+1},S_l^{a_{j_l}})=-1$.
From $\head(a_{j_0})=\head(a_{j_l})$ follows $c(y_{\head(a_{j_0}),l+1}, \S(j_0)+\S(j_l)) = 0$. 
We add variable $y_{\head(a_{j_0}),l+1}$ to set 
$\Y_{j_0}$
in the sense that
this variable is eliminated via constraint $\S(j_0)$.

Suppose that by the end of loop \ref{phase2Loop2}, 
$I_{j_0}$ 
is extended by more than one set, say $I_1,\dots,I_m$, \ie,
$I_{j_0}
=\{j_0\} \cup I_1 \cup \dots \cup I_m$.
We have seen that for each $I \in I_1,\dots,I_m$, set $\{j_0\} \cup I$ is a set of the second type.
To make sure that it is also true for 
$I_{j_0}=\{j_0\} \cup I_1 \cup \dots \cup I_m$,
we must also ensure for each two $I \ne I' \in \{I_1,\dots,I_m\}$ that $I \cup I'$ is a set of the second type.
To guarantee this,
the loop on line \ref{phase2Loop3} finds the sets 
$\{j'_1,\dots,j'_l\}$ 
such that
the walks $W:=a_{j_1},\dots,a_{j_l}$ and $W'=a_{j'_1},\dots,a_{j'_l}$ are the same (Fig. \ref{figExtensionOfSecondTypeSet}) 
and thus have no common node $v$ with $|\inarc_W(v) \cup \inarc_{W'}(v)| \ge 2$. 
Hence the set $\{j_1,\dots,j_l\} \cup \{j'_1,\dots,j'_l\}$ is not of the second type.
So we add $j'_1,\dots,j'_l$ to $\Jused$ on line \ref{phase2Jused2} to prevent them from becoming part of $I_{j_0}$.
Phase 3 (line \ref{phase3ExtendPartitionIBySetType3}) adds the remaining set $J \setminus \bigcup_{I \in \I} I$ to $\I$ as the only set of the third type.
\\

\def\disjoint{1}
\def\settypes{2}
\def\settypesI{2a}
\def\settypesII{2b}

\noindent 
We now show the correctness of Algorithm \ref{algoPartitionI}, \ie, the algorithm constructs a set-family $\I$ such that
\begin{description}
\item[(\disjoint)] $\I$ is a partition of $J$.
\item[(\settypes)] %
  Each set in $\I$ belongs to one of the three type of the sets.
\end{description}

\noindent \textbf{\emph{(\disjoint)}} 
To this end, we show that $\bigcup_{I \in \I} I = J$ such that each two distinct sets $I \ne I' \in \I$ satisfy $I \cap I' = \{\}$.
Since any element in $\I$ is a subset of $J$, we have $\bigcup_{I \in \I} I \subseteq J$.
On the other hand, Phase 3 (line \ref{phase3ExtendPartitionIBySetType3}) ensures that for each element $j \in J$, there is a set $I \in \I$ with $j \in I$, so $J \subseteq \bigcup_{I \in \I} I$ and thus $\bigcup_{I \in \I} I = J$.
The algorithm manages the set $\Jused$ of used elements in $J \setminus J_0$.
As soon as an element is selected from $J \setminus J_0$, it is added into the set $\Jused$ and into at most one set of partition $\I$.
The next element is selected from set $J \setminus \Jused$. 
So each element of $J \setminus J_0$ is included at most one set $I \in \I$.
The elements of $J_0$ are contained only in the sets of the second and third types.
There is no element $j_0 \in J_0$ and no two sets $I \ne I'$ of the second or third type such that $j_0 \in I$ and  $j_0 \in I'$.
Thus each element of $J$ is included at most one set $I \in \I$. Hence each two distinct sets $I \ne I' \in \I$ satisfy $I \cap I' = \{\}$.
\\

\noindent \textbf{\emph{(\settypes)}} 
To this end, we show that each element of $J_1$ is in a set $I \in \I$ of the first or second type. 
There is only one set of the third type, which does not contain any element of $J_1$.
The first phase finds for each $j_1 \in J_1 \cap J^*$ a set of the first type, such that after this phase, 
no $j_1 \in J_1 \setminus J^*$ meets the conditions in line \ref{phase1Select}, and thus
no subset of $J \setminus J^*$ is a set of the first type.
So we need to show that the second phase finds a set of the second type for each element in $J_1 \setminus J^*$.
More precisely, we show the following:

\begin{description}
  \item[(\settypesI)] The second phase finds for each element $j_1 \in J_1 \setminus J^*$ a set $\{j_1,\dots,j_l\}$ and an element $j_0 \in J_0$ that meet the conditions of line \ref{phase2Select1}.

  \item[(\settypesII)] Let $I_1,\dots,I_m$ be the sets that are joined to $I_{j_0}$ on 
  line \ref{phase2ExtendSetIj0}, \ie, $I_{j_0} =\{j_0\} \cup I_1 \cup \dots \cup I_m$.
  Then for each two $I \ne I' \subseteq I_{j_0}$, set $I \cup I'$ is a set of the second type.
\end{description}

\noindent \textbf{\emph{(\settypesI)}} 
To show this, %
we need some facts about the coefficients $c(y_{v,i},C)$ of the variables $y_{v,i}$ in some crucial constraints $C$.
Let $\Jround 0 := J^*$ (line \ref{phase2Jstar}) 
and let $\Jround k \subseteq J$ with $k\ge 1$ be the %
set of elements added to $\Jused$ at iteration $k$ 
of the loop spanning lines \ref{phase2Loop2}--\ref{phase2Y0vars2}, 
and let 
$\Jround {\le k}:= \bigcup_{0\le i \le k} \Jround {i}$. 

Let $j_1 \in J_1$ and let $I:=\{j_1,\dots,j_H\} \subseteq \Jround 0$ 
be a set found in the first phase on line \ref{phase1Select}. 
Due to the condition in line \ref{phase1Select}, we have $(j_{1},\dots,j_{H}) \in J_1 \times \dots \times J_H$.
Recall that as $j_i \in J_i$, we have $\S(j_i)=S_i^{a_{j_i}}$,
so $(\S(j_{1}),\ldots,\S(j_{H}))=(S_1^{a_{j_{1}}},\ldots,S_{H}^{a_{j_{H}}})$.
Since $a_{j_1},\dots,a_{j_H}$ is a walk, %
the sum of constraints $S_1^{a_{j_{1}}},\ldots,S_{H}^{a_{j_H}}$, \ie,
$\sum_{j \in I} \S(j)$
has the following form:
\begin{align*}
  \msum_{j \in I} \S(j):
  & & 
  \big(- \msum_{i=2}^H y_{\tail(a_{j_1}),i} \big)	+ (x_{a_{j_1}}+\dots+x_{a_{j_H}}) \le H-1.
\end{align*}
Hence for any \yvar\ $y_{v,i}$, we have 
$c \big(y_{v,i}, \sum_{j \in I} \S(j) \big) \le 0$.
As this holds for any set $I \subseteq \Jround 0$ that is found on line \ref{phase1Select},
it follows:
\begin{align} \label{constraintwalk:cJ0Yvh}
  c \big(y_{v,i}, \msum_{j \in \Jround 0} \S(j) \big) \le 0
  && \text{ for any \yvar\ } y_{v,i}.
\end{align}

Next we study the coefficient of $y_{v,i}$ in the constraint $\sum_{j \in \Jround k} \S(j)$ for a $k\ge1$.
Let $j_1 \in J_1$ and let 
$\{j_{1},\dots,j_{l}\}$ 
with 
$l<H$ and
$(j_{1},\dots,j_{l}) \in J_1 \times \dots \times J_{l}$
be the set joined to $\Jround k$
on line \ref{phase2Jused1}.
The constraints corresponding to $(j_1,\dots,j_l)$ are $(\S(j_1),\ldots,\S(j_l))$.
From 
$(j_{1},\dots,j_{l}) \in J_1 \times \dots \times J_{l}$
follows
$(\S(j_1),\ldots,\S(j_l))=(S_1^{a_{j_1}},\ldots,S_l^{a_{j_l}})$, \ie,
the constraints $\sum_{j \in \{j_1,\dots,j_l\} } \S(j)$ and 
$S_1^{a_{j_1}}+\dots+S_l^{a_{j_l}}$ are the same.
Let $q:=\tail(a_{j_1})$ and $t:=\head(a_{j_l})$ (Fig. \ref{figExtensionOfSecondTypeSet}).
The constraint
$S_1^{a_{j_1}}+\dots+S_{l}^{a_{j_l}}$
with $l<H$ and a walk $a_{j_1},\dots,a_{j_l}$
has the form:
\begin{align*}
    S_1^{a_{j_1}}+\dots+S_l^{a_{j_l}}:
    &&
    (- \msum_{h=2}^H y_{q,h} )	
    & + (x_{a_{j_1}}+\dots+x_{a_{j_l}}) 
    - y_{t,l+1} \le l-1.
\end{align*}
Thus we have:
\begin{align}
  &c \big( y_{q,h}, \msum_{j \in \{j_1,\dots,j_l\} } \S(j) \big) =
  c \big( y_{q,h},  S_1^{a_{j_1}}+\dots+S_l^{a_{j_l}} \big) \le -1
  && \mbox{for any } h\ge 2.
  \label{constraintwalk:tail}
  \\
  &c \big( y_{v,h}, \msum_{j \in \{j_1,\dots,j_l\} } \S(j) \big) =
  c \big( y_{v,h}, S_1^{a_{j_1}}+\dots+S_l^{a_{j_l}} \big) \le 0
  && \mbox{for any \yvar\ $y_{v,h}$ }
  \\
  &c \big( y_{t,l+1}, \msum_{j \in \{j_1,\dots,j_l\} } \S(j) \big) =
  c \big( y_{t,l+1}, S_1^{a_{j_1}}+\dots+S_l^{a_{j_l}} \big) \le -1.
  \label{constraintwalk:cS1b}
\end{align}
Let $\{j'_1,\dots,j'_l\} \subseteq J$ 
with $(j'_1,\dots,j'_l) \in J_i \times \dots \times J_{i+l-1}$
and $2 \le i \le H-l$ 
and
$(a_{j'_1},\dots,a_{j'_l}) = (a_{j_1},\dots,a_{j_l})$
be a set joined to $\Jround k$
on line \ref{phase2Jused2}.
From $(j'_1,\dots,j'_l) \in J_i \times \dots \times J_{i+l-1}$ follows
$(\S(j'_1),\ldots,\S(j'_l))=(S_i^{a_{j'_1}},\ldots,S_{i+l-1}^{a_{j'_l}})$.
Since $(a_{j'_1},\dots,a_{j'_l}) = (a_{j_1},\dots,a_{j_l})$, 
we have
$(S_i^{a_{j'_1}},\ldots,S_{i+l-1}^{a_{j'_l}})= (S_i^{a_{j_1}},\ldots,S_{i+l-1}^{a_{j_l}})$, so
$(\S(j'_1),\ldots,\S(j'_l))= (S_i^{a_{j_1}},\ldots,S_{i+l-1}^{a_{j_l}})$.
Thus the constraints $\sum_{j \in \{j'_1,\dots,j'_l\} } \S(j)$ and 
$S_i^{a_{j_1}}+\dots+S_{i+l-1}^{a_{j_l}}$
are the same.
The constraint
$S_i^{a_{j_1}} + \dots + S_{i+l-1}^{a_{j_l}}$
with $2 \le i \le H-l$ and
a walk $a_{j_1},\dots,a_{j_l}$ 
has the form:
\begin{align*}
  S_i^{a_{j_1}} + \dots + S_{i+l-1}^{a_{j_l}}:
  &&
    y_{q,i}			
    & + (x_{a_{j_1}}+\dots+x_{a_{j_l}}) 
    - y_{t,l+i} \le l.
    \notag
\end{align*}
Thus, for $2 \le i \le H-l$, we have
\begin{align}
  &c ( y_{q,i}, \allowbreak  S_i^{a_{j_1}} + \dots + S_{i+l-1}^{a_{j_l}}) = 1 
  \label{constraintwalkIge2Yqh}
  \\
  &c \big( y_{v,h},  S_i^{a_{j_1}} + \dots + S_{i+l-1}^{a_{j_l}} \big) \le 0
  & \mbox{for any \yvar\ $y_{v,h}$ except $y_{q,i}$ }
  \\
  &c \big( y_{t,l+i}, S_i^{a_{j_1}} + \dots + S_{i+l-1}^{a_{j_l}}\big) \le -1.
  \label{constraintwalk:Ige2:ytl}
\end{align}

Let $I \subseteq \{2,\dots,H-l\}$ such that 
for each $i \in I$ at iteration $i$ of the loop spanning lines \ref{phase2Loop3}--\ref{phase2Y0vars2}
the elements $(j'_1,\dots,j'_l) \in J_i \times \dots \times J_{i+l-1}$
are joined to $\Jround k$
on line \ref{phase2Jused2}.
Note that $(S_1^{a_{j_1}},\dots,S_l^{a_{j_l}})$ can be written as 
$(S_i^{a_{j_1}} , \dots , S_{i+l-1}^{a_{j_l}})$ where $i=1$.
Thus
$\sum_{j \in \Jround k} \S(j)$ 
is the same constraint as
$\sum_{i \in \{1\} \cup I } ( S_i^{a_{j_1}} + \dots + S_{i+l-1}^{a_{j_l}} ) $.
Facts (\ref{constraintwalk:tail}) and (\ref{constraintwalkIge2Yqh}) imply
$c(y_{q,i}, \sum_{i \in \{1\} \cup \{i\} } ( S_i^{a_{j_1}} + \dots + S_{i+l-1}^{a_{j_l}} )) \le 0$ for any $2 \le i \le H-l$.
Particularly, (\ref{constraintwalk:tail})--(\ref{constraintwalk:Ige2:ytl}) imply that
$c(y_{v,h}, \sum_{i \in \{1\} \cup I } ( S_i^{a_{j_1}} + \dots + S_{i+l-1}^{a_{j_l}} )) \le 0$ for any \yvar\  $y_{v,h}$. 
It follows:

\begin{align*}
  c \big( y_{v,h}, \msum_{j \in \Jround {k}} \S(j) \big) \le 0 
  && \text{ for any $k\ge1$ and for any \yvar\ } y_{v,h}.
\end{align*}
This fact, together with (\ref{constraintwalk:cJ0Yvh}), implies that:
\begin{align}
  c \big( y_{v,h}, \msum_{j \in \Jround {k}} \S(j) \big) \le 0 
  && \text{ for any $k\ge0$ and for any \yvar\ } y_{v,h}.
  \label{constraintwalk:cJkYvh}
\end{align}
Particularly,
by (\ref{constraintwalk:cS1b}) and (\ref{constraintwalk:Ige2:ytl}),
$c \big( y_{t,l+i}, S_i^{a_{j_1}} + \dots + S_{i+l-1}^{a_{j_l}}\big) \le -1$
holds for any $i\ge 1$.
Then we have
$c(y_{t,l+1}, \sum_{i \in \{1\} \cup I } ( S_i^{a_{j_1}} + \dots + S_{i+l-1}^{a_{j_l}} )) \le -1$.
It follows that:
\begin{align}
  c \big( y_{t,l+1}, \msum_{j \in \Jround {k}} \S(j) \big) \le -1
  && \text{ for any } k\ge1.
\label{constraintwalk:cJkYtl}
\end{align}
\noindent
Using these facts about the coefficients of the $y$ variables 
we show \emph{(\settypesI)}. %
Suppose, for contradiction, $k$ is the first iteration
of the loop spanning lines \ref{phase2Loop2}--\ref{phase2Y0vars2}, 
at which there is no $j_0 \in J_0$ and $(j_1,\dots,j_l) \in J_1 \times \dots \times J_l$ that meet the conditions of line \ref{phase2Select1}.
Recall that after the first phase, there is no set of the first type, 
\ie,
there is no $\{j_1,\dots,j_l\} \in J \setminus J^*$ with $l=H$ (but with $l \le H-1$),
such that 
$(j_1,\dots,j_l) \in J_1 \times \dots \times J_l$ 
and 
$(a_{j_1},\dots,a_{j_l})$ is a walk.
Consider the subset 
$\{j_1,\dots,j_l\} \subseteq J \setminus \Jround{\le k-1}$ %
such that 
$(j_1,\dots,j_l) \in J_1 \times \dots \times J_l$,
and 
$a_{j_1},\dots,a_{j_l}$ is a walk, 
and there is no 
$j_{l+1} \in 
 J_{l+1} \setminus \Jround{\le k-1}$
with $\head (a_{j_l})=\tail(a_{j_{l+1}})$.
Let $t:=\head(a_{j_l})$.
We have 
$c(y_{t,l+1}, \sum_{j \in \{j_1,\dots,j_l\}} \S(j)) \le -1$
by (\ref{constraintwalk:cS1b})
and 
$c(y_{t,l+1}, \sum_{j \in J^{(\le k-1)}} \S(j)) \le 0$
by (\ref{constraintwalk:cJkYvh}).
It follows
$c(y_{t,l+1}, \sum_{j \in J^{(\le k-1)} \cup \{j_1,\dots,j_l\}} \S(j)) \le -1$.
Assume
\begin{align}
c \big ( 
    y_{t,l+1}, 
    \msum_{j \in J^{(\le k-1)}} \S(j)
  \big) &=-z_1 
  && z_1 \in \mathbb N_0, 
  \label{constraintwalkCoeffs1}
  \\
c \big( 
    y_{t,l+1}, 
    \msum_{j \in J^{(\le k-1)} \cup \{j_1,\dots,j_l\}} \S(j)
  \big) &= -z_2 
  &&
  z_2 \in \mathbb N,\  z_2 > z_1.
  \label{constraintwalkCoeffs2}
\end{align}
Recall that $X$ is a unweighted sum $C_1 + \dots + C_l$ of not necessarily distinct constraints $C_1,\dots,C_l$ selected from sets $S_0,\dots,S_H$ and (\ref{Q2}).
By definitions of $C_i$ with $i \in \{1,\dots,l\}$, we have $c(y_{v,h}, C_i ) \in \{-1,0,1\}$ for each $y$-variable $y_{v,h}$.
Since $X$, \ie, $\sum_{i=1}^l C_i$ is an $x$-space constraint, we have $c(y_{v,h},\sum_{i=1}^l C_i) = 0$.
It follows that if there is $i \in \{1,\dots,l\}$ with $c(y_{v,h},C_i) = -1$, then there is $i' \in \{1,\dots,l\} \setminus \{i\}$ 
with $c(y_{v,h},C_{i'}) = 1$.
Since $y_{v,h}$ occurs with a positive sign only in $S_0^{a}$ with $\head(a)=v$ and in $S_h^{a'}$ with $\tail(a')=v$, 
we have $C_{i'} \in \{ S_0^{a} \colon \head(a)=v \} \cup \{ S_h^{a'} \colon \tail(a')=v\}$, 
\ie, $C_{i'} \in S_0 \cup  S_h$.
Particularly, if there is $I \subset \{1,\dots,l\}$, such that $c(y_{v,h}, \sum_{i \in I} C_i  ) = -z$ for a $z>0$, 
then there is $I' \subset \{1,\dots,l\} \setminus I$ such that 
$c(y_{v,h}, \sum_{i' \in I'} C_{i'} ) = z$, 
where 
$C_{i'} \in S_0 \cup  S_h$ for each $i' \in I'$.
Thus, as there is $I:=\Jround{\le k-1} \cup \{j_1,\dots,j_l\}$ such that 
$c( y_{t,l+1}, \sum_{j \in I} \S(j)) = -z_2$ 
by (\ref{constraintwalkCoeffs2}),
there is also $I' \subseteq J \setminus I$
with
$c( y_{t,l+1}, \sum_{j \in I'} \S(j)) \ge z_2$,
where 
$\S(j) \in \{ S_0^{a} \colon \head(a)=t \} \cup \{ S_{l+1}^{a} \colon \tail(a)=t\}$ for each $j \in I'$.
From $\S(j) \in S_0 \cup  S_{l+1}$ follows $I' \in J_0 \cup J_{l+1}$.
Moreover, $I' \subseteq J \setminus I$ implies $I' \in (J_0 \setminus I) \cup (J_{l+1} \setminus I)$.
In other words, to eliminate $y_{t,l+1}$ there must be a 
$j_0 \in J_0 \setminus I =J_0$
with 
$\head(a_{j_0})= t$,
or
$j_{l+1} \in J_{l+1} \setminus I $
with $\tail(a_{j_{l+1}})=t$. %
By construction of $\{j_1,\dots,j_l\}$ there is no such $j_{l+1}$.
It follows that 
\begin{align}
c \big( y_{t,l+1}, \msum_{j \in J_0} \S(j) \big) \ge z_2,
\label{constraintwalkCoeffsJ0}
\end{align}
\ie, there is a $j_0 \in J_0 $ with $\head(a_{j_0})= t = \head(a_{j_l})$. 

As the last condition, line \ref{phase2Select1} requires a $j_0$, such that 
$y_{t,l+1} \notin \Y_{j_0}$.
If there is such $j_0$ at iteration $k$, then we are done, so suppose not.
Then $|\{ j_0 \in J_0 \colon y_{t,l+1} \in \Y_{j_0} \}| 
= c( y_{t,l+1}, \sum_{j \in J_0} \S(j))$ 
before iteration $k$.
From (\ref{constraintwalkCoeffsJ0}) follows that 
$|\{ j_0 \in J_0 \colon y_{t,l+1} \in \Y_{j_0} \}| \ge z_2$ 
before iteration $k$. %
Consider an iteration $k'$ of the loop spanning lines \ref{phase2Loop2}--\ref{phase2Y0vars2}.
The algorithm adds $y_{t,l+1}$ to a 
$\Y_{j_0}$ 
on line \ref{phase2Y0vars1} 
or 
on line \ref{phase2Y0vars2}. 
In case of line \ref{phase2Y0vars1} 
the algorithm also adds on line \ref{phase2Jused1}
the elements $j_1,\dots,j_l$ 
to $\Jround {k'}$ 
such that
$(j_1,\dots,j_{l}) \in J_1 \times \dots \times J_{l}$
and
$a_{j_1},\dots,a_{j_l}$ build a walk with $l \le H-1$ arcs
and
$\head(a_{j_l})=t$. %
In case of line \ref{phase2Y0vars2} 
the algorithm also adds on line \ref{phase2Jused2}
the elements $j_i,\dots,j_l$ with $i\ge 2$ to $\Jround {k'}$ 
such that
$(j_i,\dots,j_l) \in J_i \times \dots \times J_l$ %
and
$a_{j_i},\dots,a_{j_l}$ build a walk with $l \le H-1$ arcs
and
$\head(a_{j_l})=t$. %
It follows that if the algorithm adds $y_{t,l+1}$ to a 
$\Y_{j_0}$ then it also adds
the elements $j_i,\dots,j_l$ with $i\ge 1$ to $\Jround {k'}$ 
such that
$(j_i,\dots,j_l) \in J_i \times \dots \times J_l$
and
$a_{j_i},\dots,a_{j_l}$ build a walk with $l \le H-1$ arcs
and
$\head(a_{j_l})=t$. 
Then $c( y_{t,l+1}, \sum_{j\in \Jround {k'}} \S(j)) \le -1$ by (\ref{constraintwalk:cJkYtl}).
It follows 
$c(y_{t,l+1}, \sum_{j \in J^{(\le k-1)}} \S(j)) 
\le - |\{ j_0 \in J_0 \colon y_{t,l+1} \in \Y_{j_0} \}| \le - z_2 < - z_1$.
It contradicts 
$c( y_{t,l+1}, \sum_{j \in J^{(\le k-1)}} \S(j) ) =-z_1$. %

\noindent \textbf{\emph{(\settypesII)}} 
Let $I:=\{j_1,\dots,j_l\} \in \{I_1,\dots,I_m\}$.
By line \ref{phase2Select1}, 
$W:=a_{j_1},\dots,a_{j_l}$
and
$W':=a_{j_0}$
are two walks with a common node $t:=\head(a_{j_0})=\head(a_{j_l})$. 
The node $t$ have two distinct incoming arcs, \ie, $|\inarc_{W}(t) \cup \inarc_{W'}(t)| \ge 2$ (Fig. \ref{figExtensionOfSecondTypeSet}).
The reason is as follows:
Both $a_{j_0}$ and $a_{j_l}$ have the same head node $t$. 
As $j_0 \in J_0$ and $j_l \in J_l$ with $l \ne 0$, 
$x_{a_{j_0}}$ and $x_{a_{j_l}}$ are originated from $S_0$ and $S_l$, respectively. 
By Remark \ref{remark:Si:1}, $a_{j_0}$ is incident to $r$, whereas $a_{j_l}$ with $l \ne0$ is not.
Thus $a_{j_0}$ and $a_{j_l}$ have different tails.
It follows that for each $I \in I_1,\dots,I_m$ the union of the sets $\{j_0\}$ and $I$ is a set of the second type.

\begin{figure}[bt]
  \centering
  \def\scale{0.99}
    \scalebox{\scale}{
    \begin{tikzpicture}[
      x=1.0 cm, y=0.9 cm,
      V/.style= {circle, inner sep=3pt, draw, line width=1pt},
      P/.style= {rectangle, inner sep=1pt, draw},
      E/.style= {->, line width=1},
      L/.style  = {bend left=7pt},
      R/.style  = {bend right=7pt},
      Pointer/.style= {->, draw, opacity=0.5},
      ]
      \path (-2,2.0) node [V] (b) {}; %
      \path (0,2.0) node [V] (d) {};%
      \path (2,2.0) node [V] (a) {};
      \path (4,2.0) node [V] (c) {}; %
      \path (6,2) node [V] (t) {} +(0.4,0) node {$t$};;

      \path (b) edge [E]   node [above] {$a_{j_{1}}$}  (d);
      \path (d) edge [E,L] node [above] {$a_{j_{2}}$}  (a);
      \path (a) edge [E,L] node [above] {$a_{j_{3}}$}  (c);
      \path (c) edge [E,L] node [above] {$a_{j_{4}}$}  (t);

      \path (d) edge [E,R,dotted] node [below] {$a_{j'_{1}}$}  (a);
      \path (a) edge [E,R,dotted] node [below] {$a_{j'_{2}}$}  (c);
      \path (c) edge [E,R,dotted] node [below] {$a_{j'_{3}}$}  (t);

    \end{tikzpicture}
    }
  \caption{
    The walk $W=a_{j_1},a_{j_2},a_{j_3},a_{j_4}$ and a suffix $W'=a_{j'_1},a_{j'_2},a_{j'_3}$ of $W$.
  }
  \label{figExtensionOfSecondTypeSet2}
\end{figure}

We now need only to prove for each two $I \ne I' \in \{I_1,\dots,I_m\}$ that $I \cup I'$ is a set of the second type.
Suppose, for contradiction, $I \cup I'$ is not a set of second type. 
Let $I:=\{j_1,\dots,j_l\}$
and
$I':=\{j'_1,\dots,j'_q\}$.
By line \ref{phase2Select1}, 
$W:=a_{j_1},\dots,a_{j_l}$ and $W':=a_{j'_1},\dots,a_{j'_q}$
are two walks with a same end node $t:=\head(a_{j_0})$, \ie, 
$\head(a_{j_l})=\head(a_{j'_q})=t$. 
Since $I \cup I'$ is not a set of the second type, $W$ and $W'$ does not have any common node $v$ with two distinct incoming arcs, \ie, with $|\inarc_{W}(v) \cup \inarc_{W'}(v)| \ge 2$ (Fig. \ref{figExtensionOfSecondTypeSet2}).
Then $W$ is a suffix of $W'$ or vice versa, \ie, $W \sqsupset W'$ or $W' \sqsupset W$.
Assume that $I$ and $I'$ are added to $I_{j_0}$ at iterations $k$ and $k'<k$ of the loop spanning lines \ref{phase2Loop2}--\ref{phase2Y0vars2}, respectively.
By the condition of line \ref{phase2Select1}, 
the second phase finds a set of minimum cardinality first, so we have $|I'| \le |I|$. It follows that $\len{W'} \le \len{W}$ and thus $W' \sqsupset W$.
Then there is an $i \in \{1,\dots,l\}$ such that 
$a_{j'_1},\dots,a_{j'_q}=a_{j_i},\dots,a_{j_l}$.
Since $I=\{j_1,\dots,j_l\}$ is added to $I_{j_0}$ at iteration $k$, 
by the condition of line \ref{phase2Select1}, 
at the start of iteration $k$, 
we have $y_{t,l+1} \notin \Y_{j_0}$.
Then this is true for the iteration $k' < k$, too.

There are two cases: 
If $\len{W'} = \len{W}$, 
then
$a_{j'_1},\dots,a_{j'_q}=a_{j_1},\dots,a_{j_l}$, \ie, $i=1$. %
Then at iteration $k' < k$,
in line \ref{phase2Y0vars1}, 
$y_{t,l+1}$ is added to $\Y_{j_0}$, so
after iteration $k'$, thus at the start of iteration $k>k'$,
we have $y_{t,l+1} \in \Y_{j_0}$, a contradiction.

Otherwise $\len{W'} < \len{W}$. 
Then $a_{j'_1},\dots,a_{j'_q} = a_{j_i},\dots,a_{j_l}$ with $i \ge 2$.
Recall that $I \subseteq \Jround {k}\subseteq J \setminus J^{(\le k-1)}$, 
while $I'\subseteq \Jround {k'}\subseteq J \setminus J^{(\le k'-1)}$.
Since $k' < k$, we have 
$J \setminus J^{(\le k-1)} 
\subseteq 
J \setminus J^{(\le k'-1)}$,
and thus
$I \subseteq J \setminus J^{(\le k'-1)}$.
Then the subset $\{j_i,\dots,j_l\}$ of $I=\{j_1,\dots,j_l\}$ is a subset of $J \setminus J^{(\le k'-1)}$, too.
Thus $\{j_i,\dots,j_l\}$ was present in iteration $k'$ as $I'=\{j'_1,\dots,j'_q\}$ was.
Then $\{j_i,\dots,j_l\}$ would satisfy 
at iteration $k' < k$ of loop \ref{phase2Loop2}, 
at iteration $i$ of loop \ref{phase2Loop3},
the conditions in line \ref{phase2Select2},
\ie, 
$(j_i,\dots,j_l) \in J_i \times \dots \times J_l$ and 
$a_{j_i},\dots,a_{j_l} = a_{j'_1},\dots,a_{j'_q}$.
Then the variable
$y_{t,l+1}$ would be added to 
$\Y_{j_0}$
at iteration $k' < k$, in line \ref{phase2Y0vars2},
so after iteration $k'$, thus at the start of iteration $k>k'$,
$y_{t,l+1} \in \Y_{j_0}$, a contradiction.
\fi
\end{proof}

\subsubsection{Results for the \hstp}
\begin{theorem} \label{theorem:hstp}
  For the \hstp, the partial-ordering model is strictly stronger than the assignment model.
\end{theorem}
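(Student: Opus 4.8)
The plan is to split the statement into the two parts hidden in ``strictly stronger''. Write $\val_{\mathrm P}$ and $\val_{\mathrm A}$ for the LP-relaxation values of (\ref{Phstp}) and (\ref{Ahstp}). First I would prove that the partial-ordering model is stronger, i.e.\ $\val_{\mathrm P}\ge\val_{\mathrm A}$ for every \hstp instance. Since (\ref{Phstp}) and (\ref{Ahstp}) minimise the same objective, which depends only on the $x$ variables, it suffices to establish the projection inclusion $\xpopHSTP\subseteq\xassHSTP$: a smaller feasible set in $x$-space forces a larger minimum. For this I would reuse Theorem~\ref{theorem:hstp:fme}, which already yields $\xpopHT\subseteq\xassHT$ for the hop-constrained tree polytopes. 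The only extra ingredient in the \hstp polytopes is the terminal-covering requirement (\ref{constr:indeg:ge1}), and these are $x$-space constraints. Hence, writing $C:=\{x:(\ref{constr:indeg:ge1})\}$, projection commutes with intersecting by $C$, so $\xpopHSTP=\xpopHT\cap C$ and $\xassHSTP=\xassHT\cap C$; intersecting both sides of $\xpopHT\subseteq\xassHT$ with the common set $C$ gives $\xpopHSTP\subseteq\xassHSTP$ and thus $\val_{\mathrm P}\ge\val_{\mathrm A}$.

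For the strict part I would exhibit a single \hstp instance with positive edge costs for which $\val_{\mathrm A}<\val_{\mathrm P}$. The lever is the walk inequality (\ref{pop:xspace:mtz}) of Theorem~\ref{theorem:pop:mtz}, which holds on all of $\xpopHSTP$ but is violated by the assignment LP; indeed Example~\ref{example:ass-pop} already produces an assignment-feasible point whose $x$-part violates (\ref{pop:xspace:mtz}) and therefore lies in $\xassHSTP\setminus\xpopHSTP$. I would then pick edge costs so that the cheapest fractional assignment solution routes flow along such a walk (saturated beyond the $H-1$ bound), whereas every partial-ordering-feasible solution, constrained by (\ref{pop:xspace:mtz}), must pay for a more expensive connection; the assignment optimum then sits strictly below the partial-ordering optimum, giving $\val_{\mathrm A}<\val_{\mathrm P}$.

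The hard part is exactly the choice of instance and costs, because the objective $\sum_{uv\in E}c_{u,v}(x_{u,v}+x_{v,u})$ is symmetric in the two arc directions and hence depends only on the edge usage $x_{u,v}+x_{v,u}$. Consequently the bare arc-space separation of Example~\ref{example:ass-pop} is \emph{not} enough: its separating point has the same edge usage as a partial-ordering-feasible point (obtained by splitting each cycle edge equally between its two opposite arcs), so it produces no gap in objective value. The instance must therefore be engineered so that (\ref{pop:xspace:mtz}) cuts off an entire \emph{edge-usage} vector rather than merely an arc vector, i.e.\ so that the offending directed flow cannot be rebalanced between opposite arcs into a partial-ordering-feasible solution of equal cost. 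I expect this to require breaking the symmetry that enables such rebalancing, for example by adding Steiner nodes so that the covering constraints (\ref{constr:indeg:ge1}) bind asymmetrically, and/or by exploiting the arcs incident to the root, whose reverse direction is forbidden by (\ref{constr:arc:ur}); once a suitable instance is fixed, verifying $\val_{\mathrm A}<\val_{\mathrm P}$ reduces to a finite LP computation.
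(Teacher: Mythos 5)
Your first half (the ``stronger'' direction) is exactly the paper's own argument: Theorem~\ref{theorem:hstp:fme} gives $\xpopHT\subseteq\xassHT$; the terminal-covering constraints (\ref{constr:indeg:ge1}) involve only $x$ variables, so intersecting with them commutes with the projection, giving $\xpopHSTP\subseteq\xassHSTP$; and since both LPs minimise the same $x$-only objective, $\val_{\text{\ref{Phstp}}}\ge\val_{\text{\ref{Ahstp}}}$ follows. Your justification of the commutation step is, if anything, more explicit than the paper's, which simply asserts the two identities from the definitions.

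The gap is in the strictness half: you never produce an instance. Everything after ``I would then pick edge costs\dots'' is a research plan, not a proof --- the phrases ``I expect this to require\dots'' and ``once a suitable instance is fixed\dots'' leave the existence of an instance with $\val_{\text{\ref{Ahstp}}}<\val_{\text{\ref{Phstp}}}$ unestablished, and that existence is precisely what ``strictly'' means in the theorem. The paper closes this point in the only place you stopped, and it does so empirically: its proof cites the computational study of Section~\ref{sec:evaluation}, which reports explicit benchmark instances with a strict LP gap (\eg\ TC20 with $H=2$, where the assignment LP value is $311.33$ and the partial-ordering LP value is $318$; see the appendix table). To complete your proof you would either have to exhibit and solve such an instance yourself, or do as the paper does and point to a computed one. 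That said, your diagnosis of why Example~\ref{example:ass-pop} is not already sufficient is correct and is a genuine insight the paper does not spell out: the objective sees only the edge usage $x_{u,v}+x_{v,u}$, and the separating point of that example has the same edge usage as a POP-feasible point (take $y_{v,1}=y_{v,2}=\nicefrac12$ for $v\in\{a,b,c\}$ and split each cycle edge $\nicefrac12$--$\nicefrac12$ between its two opposite arcs), so no choice of positive costs turns that particular point into an objective-value gap. This explains why strictness does not follow formally from $\passtopopHSTP\setminus\ppopHSTP\ne\emptyset$, i.e.\ why both you and the paper must ultimately fall back on concrete instances.
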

\begin{proof}
From %
definitions of %
$\passHSTP$, $\ppopHSTP$
follows %
\ifshort
$\xassHSTP$$=$$ \xassHT \cap \{ x \colon x \mbox{ meets } (\ref{constr:indeg:ge1}) \}$, 
\else
$\xassHSTP = \xassHT \cap \{ x \colon x \mbox{ meets } (\ref{constr:indeg:ge1}) \}$, 
\fi
$\xpopHSTP = \xpopHT \cap \{ x \colon x \mbox{ meets } (\ref{constr:indeg:ge1}) \}$.
Theorem \ref{theorem:hstp:fme} implies 
$\xpopHSTP \subseteq \xassHSTP$.
Let $(x^*,l^*,g^*)$ be an optimal solution, with objective value $\val^*$, %
of the LP relaxation of (\ref{Phstp}).
From $\xpopHSTP \subseteq \xassHSTP$ follows $x^* \in \xassHSTP$.
Then, since both models have the same objective function, which depends only on $x$ variables,
the LP relaxation of (\ref{Ahstp}) also has a feasible solution (\eg, $x^*$) whose objective value is not greater than $\val^*$.
Hence $\val_{\text{\ref{Phstp}}} \ge \val_{\text{\ref{Ahstp}}}$ holds for all \hstp instances,
so (\ref{Phstp}) is stronger than (\ref{Ahstp}).
Moreover, the computations (Sec. \ref{sec:evaluation}) have detected the \hstp instances  for which the inequality is strict, so (\ref{Ahstp}) is not stronger than (\ref{Phstp}).
\end{proof}

\subsubsection{Results for the \hmst} %
\begin{theorem} \label{theorem:hmst}
  For the \hmst, the partial-ordering model is strictly stronger than the assignment model.
\end{theorem}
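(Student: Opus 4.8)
The plan is to treat the \hmst as the special case of the \hstp with $R=V$ and to reduce everything to Theorem \ref{theorem:hstp:fme}, mirroring the proof of Theorem \ref{theorem:hstp}. The \hmst polytopes arise from the hop-constrained tree polytopes by requiring \emph{every} non-root node to have an incoming arc, i.e.\ by imposing constraint (\ref{constr:indeg:ge1}) not only for the terminals but for all $v \in V \setminus \{r\}$. Hence I would set $\passHMST := \{(x,y)\colon (x,y)\in\passHT,\ x \text{ satisfies } (\ref{constr:indeg:ge1}) \text{ for all } v\in V\setminus\{r\}\}$ and define $\ppopHMST$ analogously, and then observe that (\ref{constr:indeg:ge1}) is an $x$-space constraint.

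Because (\ref{constr:indeg:ge1}) involves only $x$ variables, the projections split cleanly: writing $K := \{x\colon x \text{ satisfies } (\ref{constr:indeg:ge1}) \text{ for all } v\in V\setminus\{r\}\}$, we get $\xassHMST = \xassHT \cap K$ and $\xpopHMST = \xpopHT \cap K$. Theorem \ref{theorem:hstp:fme} gives $\xpopHT \subseteq \xassHT$, so intersecting both sides with the same set $K$ yields $\xpopHMST \subseteq \xassHMST$. From here the ``stronger'' direction follows exactly as in Theorem \ref{theorem:hstp}: taking an optimal LP solution $(x^*,l^*,g^*)$ of the partial-ordering model, its $x$-part lies in $\xassHMST$, so—since both models share the same objective, which depends only on $x$—the assignment LP admits a feasible point of no larger value. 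Hence $\val$ of the partial-ordering model is at least $\val$ of the assignment model on every \hmst instance, and the partial-ordering model is stronger.

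The main obstacle is strictness. The cleanest route is to reuse Example \ref{example:ass-pop}: there $V=R=\{r,a,b,c\}$, so that instance is already an \hmst instance and $\passHSTP=\passHMST$ for it. The fractional point $x^*$ built there has in-degree exactly $1$ at each of $a,b,c$, so $x^*\in\xassHMST$, yet it violates (\ref{pop:xspace:mtz}) and is therefore not in $\xpopHT\supseteq\xpopHMST$; thus $\xpopHMST \subsetneq \xassHMST$. Turning this strict polytope inclusion into a strict LP-value gap requires a cost under which the assignment optimum is strictly smaller: a separating-hyperplane argument yields such a direction, but one must still check it is realizable as symmetric, positive edge costs $c_{u,v}=c_{v,u}$. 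I would either verify this with explicit costs—making the edges $ab,bc,ca$ cheap and the $r$-edges expensive, so the assignment LP can exploit the $r$-avoiding fractional cycle $x^*$ while the partial-ordering LP, forbidden such cycles by Theorem \ref{theorem:pop:mtz}, must pay for an $r$-edge—or simply invoke the computational results of Section \ref{sec:evaluation}, exactly as the proof of Theorem \ref{theorem:hstp} does, which detect \hmst instances where the inequality is strict.
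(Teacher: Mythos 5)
Your reduction of the \hmst to the \hstp with $R=V$ and the resulting inequality $\val_{\text{\ref{Phstp}}} \ge \val_{\text{\ref{Ahstp}}}$ is exactly the paper's argument: the paper's proof merely observes that for $R=V$ the two \hmst models take the form (\ref{Ahstp}) and (\ref{Phstp}), invokes Theorem \ref{theorem:hstp}, and obtains strictness from the \hmst instances of \autoref{sec:evaluation}. So your fallback option for strictness coincides with the paper's proof, and with it your argument is complete.

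However, the route you call ``cleanest'' --- turning the strict inclusion $\xpopHMST \subsetneq \xassHMST$ witnessed by Example \ref{example:ass-pop} into a strict LP gap via cheap triangle edges and expensive $r$-edges --- does not work. The claim that Theorem \ref{theorem:pop:mtz} forces the partial-ordering LP to ``pay for an $r$-edge'' is false: take $x_{u,v}=\tfrac12$ on all six arcs among $\{a,b,c\}$, all $r$-arcs zero, and positions $g_{1,v}=l_{v,2}=\tfrac12$, $g_{0,v}=1$, $l_{v,0}=l_{v,1}=g_{2,v}=0$ for $v\in\{a,b,c\}$. This point is feasible for the LP relaxation of (\ref{Phstp}) with $R=V$: constraints (\ref{constr:edge:forward}) hold since $l_{u,1}+g_{1,v}=\tfrac12$ and $l_{u,2}+g_{2,v}=\tfrac12$ for each such arc, every non-root node has in-degree exactly $1$ (so (\ref{constr:indeg}), (\ref{constr:outdeg}), and (\ref{constr:indeg:ge1}) hold), and every length-$2$ walk carries weight $\tfrac12+\tfrac12=1\le H-1$, consistent with (\ref{pop:xspace:mtz}). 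Under your costs both LPs therefore have value $3\epsilon$, not a strict gap. The deeper obstruction is that the objective is symmetric in the two orientations of each edge, and this all-halves point has exactly the same symmetrized values as $x^*$ (namely $x_{u,v}+x_{v,u}=1$ on the triangle edges and $0$ on the $r$-edges); hence no admissible cost vector $c_{u,v}=c_{v,u}>0$ can make $x^*$ strictly cheaper than every point of $\xpopHMST$, and the strict polytope inclusion does not transfer to a strict LP-value gap through this witness. This is precisely why strictness has to come from elsewhere --- the paper (and your fallback) uses the computed \hmst instances, e.g.\ TC20 with $H=2$, where the LP values are $311.33$ for (\ref{Ahstp}) versus $318.0$ for (\ref{Phstp}).
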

\begin{proof}
The \hmst\ is a special case of the \hstp with $R = V$.
Hence the assignment and partial-ordering based models for the \hmst also take the form 
(\ref{Ahstp}) and (\ref{Phstp}), respectively.
In Theorem \ref{theorem:hstp}, we have shown that $\val_{\text{\ref{Phstp}}} \ge \val_{\text{\ref{Ahstp}}}$. 
Moreover, there are the \hmst instances (\autoref{sec:evaluation}) for which the inequality is strict.
\end{proof}

\subsubsection{Results for the \stprbh}
The solution $T$ of the \stprbh is a HT, such that
$\sum_{e \in E(T)} c_e \le B$, formulated~\cite{DBLP:conf/alenex/JabrayilovM19} as:
\begin{align} \label{constr:budget}
  \msum_{uv \in E} c_{u,v} (x_{u,v} + x_{v,u})  \le B.
\end{align}
Let $\passSTPRBH$ and $\ppopSTPRBH$ 
denote the polytopes of the assignment and
partial-ordering based LPs for the \stprbh, respectively. Then we have:
\begin{align*}
\passSTPRBH = & \{ (x,y) \colon (x,y) \in \passHT\ (\ref{pass1});\ x \mbox{ satisfies } (\ref{constr:budget}) \},\\
\ppopSTPRBH = & \{ (x,l,g) \colon (x,l,g) \in \ppopHT\ (\ref{ppop});\ x \mbox{ satisfies } (\ref{constr:budget})\}.
\end{align*}
The goal of the \stprbh is to maximize $\rho_{r} + \sum_{uv \in E} (x_{u,v}\rho_v  + x_{v,u}\rho_u )$ \cite{DBLP:conf/alenex/JabrayilovM19},
\ie, its objective function also depends only on $x$ variables:
\ifshort
\begin{align}
  \max & \big\{ 
    \rho_{r} + \msum_{uv \in E} (x_{u,v}\rho_v  + x_{v,u}\rho_u )
    \colon
    (x,y) \in \passSTPRBH \text{ int.}
    \big\}
    \label{Astprbh}
    \tag{A-\stprbh}
    \\
  \max & \big\{ 
    \rho_{r} + \msum_{uv \in E} (x_{u,v}\rho_v  + x_{v,u}\rho_u )
    \colon
    (x,l,g) \in \ppopSTPRBH \text{ int.}
    \big\}
    \label{Pstprbh}
    \tag{P-\stprbh}
\end{align}
\else
\begin{align}
  \max & \Big\{ 
    \rho_{r} + \msum_{uv \in E} (x_{u,v}\rho_v  + x_{v,u}\rho_u )
    \colon
    (x,y) \in \passSTPRBH;\  x,y \text{ integral}
    \Big\}
    \label{Astprbh}
    \tag{A-\stprbh}
    \\
  \max & \Big\{ 
    \rho_{r} + \msum_{uv \in E} (x_{u,v}\rho_v  + x_{v,u}\rho_u )
    \colon
    (x,l,g) \in \ppopSTPRBH;\  x,l,g \text{ integr.}
    \Big\}
    \label{Pstprbh}
    \tag{P-\stprbh}
\end{align}
\fi

\begin{theorem} \label{theorem:stprbh}
  For the \stprbh, the partial-ordering model is strictly stronger than the assignment model.
\end{theorem}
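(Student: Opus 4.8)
The plan is to mirror the argument of \autoref{theorem:hstp}, since the \stprbh differs from the \hstp only in its objective and in replacing the connectivity requirement (\ref{constr:indeg:ge1}) by the budget constraint (\ref{constr:budget}) --- and, crucially, both of these are $x$-space constraints while both objectives depend only on $x$. First I would observe from the definitions of $\passSTPRBH$ and $\ppopSTPRBH$ that, because (\ref{constr:budget}) involves only $x$ variables, intersecting with it commutes with the projection onto the $x$ variable space, so that
\begin{align*}
\xassSTPRBH &= \xassHT \cap \{ x \colon x \mbox{ satisfies } (\ref{constr:budget}) \}, \\
\xpopSTPRBH &= \xpopHT \cap \{ x \colon x \mbox{ satisfies } (\ref{constr:budget}) \}.
\end{align*}
\autoref{theorem:hstp:fme} already supplies $\xpopHT \subseteq \xassHT$, and intersecting both sides with the same budget halfspace immediately yields $\xpopSTPRBH \subseteq \xassSTPRBH$.

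Next I would deduce the ordering of the LP values, keeping in mind that the \stprbh is a maximization problem, for which \emph{stronger} means the \emph{smaller} bound. Let $(x^*,l^*,g^*)$ be an optimal solution, with objective value $\val^*$, of the LP relaxation of (\ref{Pstprbh}). Then $x^* \in \xpopSTPRBH \subseteq \xassSTPRBH$, so some $(x^*,y)$ is feasible for the LP relaxation of (\ref{Astprbh}); since the two models share the same objective function, which depends only on $x$, this feasible point attains value $\val^*$, whence $\val_{\text{\ref{Astprbh}}} \ge \val^* = \val_{\text{\ref{Pstprbh}}}$. As this holds for every instance, (\ref{Pstprbh}) is stronger than (\ref{Astprbh}).

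Finally, for the strictness I would exhibit, or cite, a \stprbh instance witnessing $\val_{\text{\ref{Astprbh}}} > \val_{\text{\ref{Pstprbh}}}$, which shows that the assignment model is not stronger than the partial-ordering model. A natural candidate is the instance underlying \autoref{example:ass-pop}: there the cyclic fractional solution with $x_{a,b}=x_{b,c}=\nicefrac23$ lies in $\xassHT$ but violates (\ref{pop:xspace:mtz}) and hence escapes $\xpopHT$, so choosing node revenues and a budget that reward exactly this surplus fractional edge mass would separate the two LP values. In keeping with the proofs of Theorems \ref{theorem:hstp} and \ref{theorem:hmst}, I would alternatively point to the computational results in \autoref{sec:evaluation} for concrete separating instances. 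The one place to stay careful is the sign bookkeeping forced by the switch to maximization; there is no genuinely new obstacle, since all the combinatorial difficulty was already absorbed into \autoref{theorem:hstp:fme}.
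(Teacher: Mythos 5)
Your proposal follows essentially the same route as the paper: project both polytopes onto the $x$ variable space (the budget constraint (\ref{constr:budget}) being an $x$-space constraint), invoke \autoref{theorem:hstp:fme} to obtain $\xpopSTPRBH \subseteq \xassSTPRBH$, and conclude $\val_{\text{\ref{Pstprbh}}} \le \val_{\text{\ref{Astprbh}}}$ from the shared $x$-only objective exactly as in \autoref{theorem:hstp}, with the maximization sign handled correctly. One correction on the strictness witness: the new experiments in \autoref{sec:evaluation} cover only the \hstp and \hmst, so for the \stprbh you must either cite the literature~\cite{DBLP:conf/alenex/JabrayilovM19} (as the paper does) or actually carry out your sketched revenue-and-budget construction on the instance of \autoref{example:ass-pop}; the sketch as stated is plausible but left unverified.
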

\begin{proof}
From definitions of 
$\passSTPRBH$, $\ppopSTPRBH$ follows
$\xassSTPRBH = \xassHT \cap \{ x \colon \allowbreak x \mbox{ satisfies } (\ref{constr:budget}) \}$, 
$\xpopSTPRBH = \xpopHT \cap \{ x \colon x \mbox{ satisfies } (\ref{constr:budget}) \}$.
Theorem \ref{theorem:hstp:fme} implies 
$ \xpopSTPRBH \subseteq \xassSTPRBH $.
Analogous to Theorem \ref{theorem:hstp}, we can show that 
$\val_{\text{\ref{Pstprbh}}} \le \val_{\text{\ref{Astprbh}}}$ holds for 
all \stprbh instances.
Moreover, there are the \stprbh instances for which the inequality is strict \cite{DBLP:conf/alenex/JabrayilovM19}.
\end{proof}

\ifshort
\else
\noindent 
Theorem \ref{theorem:hstp:fme} allows us to summarize the polyhedral results as follows: 
\begin{corollary} \label{corollaryHT}
For the problems whose solution is a hop-constrained tree, a rooted tree that has bounded depth, the partial-ordering model is stronger than the assignment model if both models have the same objective function and constraints (\ie, depending only on common $x$-variables), except the hop-constrained tree constraints, namely $(x,y) \in \passHT$~(\ref{pass1}) and  $(x,l,g) \in \ppopHT$~(\ref{ppop}).
\hfill $\blacktriangleleft$
\end{corollary}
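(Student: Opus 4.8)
The plan is to obtain the corollary as a direct abstraction of Theorems~\ref{theorem:hstp}, \ref{theorem:hmst}, and \ref{theorem:stprbh}, whose proofs all followed one template once Theorem~\ref{theorem:hstp:fme} was in hand. First I would fix notation for the abstract setting: let the two models share the common $x$-variables, an identical objective that is a linear function of $x$ alone, and an identical collection $D$ of additional constraints that involve only these $x$-variables. The assignment polytope then has the form $\{(x,y)\colon (x,y)\in\passHT,\ x \text{ satisfies } D\}$ and the partial-ordering polytope the form $\{(x,l,g)\colon (x,l,g)\in\ppopHT,\ x \text{ satisfies } D\}$, exactly as in the \hstp\ and \stprbh\ cases, where $D$ was the covering constraints~(\ref{constr:indeg:ge1}) and the budget constraint~(\ref{constr:budget}), respectively.

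The key step is to commute projection onto the $x$-space with the intersection by $D$. Since every constraint in $D$ is an $x$-space constraint, the projection of the assignment polytope equals $\xassHT\cap\{x\colon x \text{ satisfies } D\}$, and likewise the projection of the partial-ordering polytope equals $\xpopHT\cap\{x\colon x \text{ satisfies } D\}$. This is precisely where the hypothesis that the extra constraints depend only on the common $x$-variables is used, and it is the step I would state most carefully. Invoking Theorem~\ref{theorem:hstp:fme}, which gives $\xpopHT\subseteq\xassHT$, and intersecting both sides with the same set $\{x\colon x \text{ satisfies } D\}$ preserves the inclusion, so the $x$-projection of the partial-ordering problem polytope is contained in that of the assignment problem polytope.

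Finally, because the objective depends only on $x$, I would close exactly as in Theorem~\ref{theorem:hstp}: taking an optimal solution of the LP relaxation of the partial-ordering model, its $x$-part lies in the $x$-projection of the assignment polytope and hence extends to a feasible solution of the assignment LP with the same objective value. For a minimization problem this yields $\val_P\ge\val_A$, and for a maximization problem the reversed inequality, over all instances; that is, the partial-ordering model is stronger. If strictness were also wanted, it would be inherited from the concrete separating instances already exhibited for the \hstp, \hmst, and \stprbh, but the corollary claims only ``stronger'', so this is an optional remark.

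I expect the main obstacle to be one of correct formalization rather than of new mathematical content. One must pin down what ``the same objective function and constraints except the hop-constrained tree constraints'' means so that the objective is genuinely a function of $x$ alone and the extra constraints $D$ genuinely involve only $x$, since both facts are exactly what make the projection argument clean; if either leaked a $y$-, $l$-, or $g$-variable, the commutation of projection with intersection would fail. Once this is set up, the corollary follows from Theorem~\ref{theorem:hstp:fme} with no further computation.
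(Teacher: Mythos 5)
Your proposal is correct and follows essentially the same route as the paper: the paper obtains this corollary directly from Theorem~\ref{theorem:hstp:fme} by the very template used in Theorems~\ref{theorem:hstp} and~\ref{theorem:stprbh} (project onto the $x$-space, intersect with the shared $x$-space constraints, and use that the common objective depends only on $x$). Your explicit treatment of the commutation of projection with intersection by $D$ is exactly the point the paper leaves implicit, and your handling of the minimization/maximization direction matches the paper's definition of ``stronger''.
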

\fi
\section{Computational comparison}
\label{sec:evaluation}

\ifshort %

\else %
\def\scale{0.62}
\def\scale{0.72}

\begin{figure*}[b]
  \captionsetup[subfigure]{justification=centering}
  \begin{subfigure}[b] {0.35\textwidth}
  \begin{tikzpicture}
    \path (0, 0.0)    node {\includegraphics[scale=\scale]{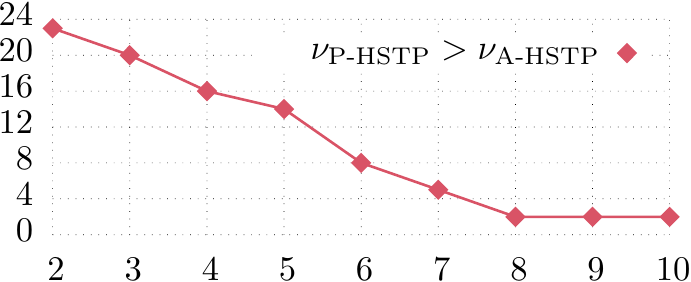}};
    \path ( 0,-1.05)    node {};
  \end{tikzpicture}
	\vspace{3pt}
	\subcaption{}
	\label{plotNumberStrong}
  \end{subfigure}
  \centering
  \begin{subfigure}[b] {0.64\textwidth}
  \begin{tikzpicture}
    \path ( 0,0)    node {\includegraphics[scale=\scale]{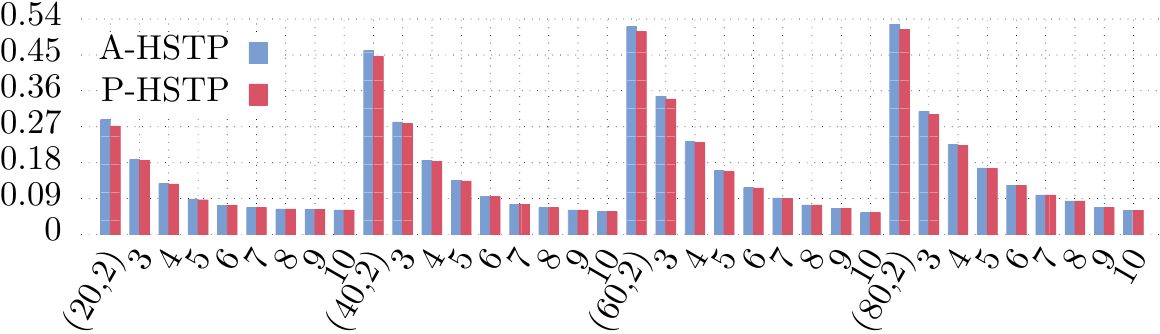}};
  \end{tikzpicture}
	\vspace{-7pt}
	\subcaption{}
	\label{plotMeangap}
  \end{subfigure}
  \vspace{-20pt}
  \caption{
    (\subref{plotNumberStrong}) Number of instances 
    with strict inequality $\val_{\text{\ref{Phstp}}} > \val_{\text{\ref{Ahstp}}}$ 
    depending on $H$;
    (\subref{plotMeangap}) Average LP gaps of (\ref{Ahstp}) and (\ref{Phstp}) over 216 instances
    depending on $(|V|,H)$
  }
  \label{plotGap}
\end{figure*}
\fi %

The assignment and partial-ordering models give (up to our knowledge) the best two state-of-the-art formulations
\cite{DBLP:conf/alenex/JabrayilovM19,DBLP:journals/mpc/SinnlL16}
for the \stprbh.
Both models solve almost all 414 DIMACS instances \cite{DIMACS11BENCHMARK} with up to 500 nodes and \numprint{12500} edges, whereas previous models left 86 unsolved. 
A computational comparison of the two models for the \stprbh, using the DIMACS instances, have already been done in \cite{DBLP:conf/alenex/JabrayilovM19} and shows that the partial-ordering model outperforms the assignment model; it has better LP relaxation values and solves more instances. 
So our new experimental study concerns only the \hstp and \hmst.
\ifshort
\else
To speed up algorithms for these problems, computational studies in the literature use problem-specific strengthening constraints and utilize reduction techniques that can eliminate up to $69\%$ of the edges~\cite{Akg_n_2011}.
Since our experiments aim only to compare the two models, we did not use reduction techniques or strengthening constraints.
\fi

To solve the models, we used the Gurobi 6.5.1 single-threadedly on the Intel Xeon E5-2640 2.60GHz system running Ubuntu 18.04.
We performed our tests on the benchmark instances used in the literature (\eg~\cite{Akg_n_2011,GOUVEIA2001539}) for the \hmst, namely Euclidean (TC, TE) and random (TR) complete graphs with 21, 41, 61, 81 nodes and up to 3240 edges.
For simplicity, they are referred to as 20, 40, 60, 80. 
The last node in set $V=\{1,\dots,|V|\}$ is used as the root, \ie,  $r=|V|$.
Based on each of these \hmst instances, we also created an \hstp instance with %
$R \ne V$, namely 
$R=\{1,\dots,\lfloor \frac{|V|}{2} \rfloor \} \cup \{r\}$, 
\ie, the first 
$\lfloor \frac{|V|}{2} \rfloor$ 
nodes and the root $r=|V|$ are terminals.
We tested nine hop parameters $H=2,\dots,10$.
Recall that the \hmst instances are \hstp instances with $R=V$, %
so we get for each of the TC, TE, TR graphs and each $H=2,\dots,10$, two \hstp instances, one with %
$R=V$ and one with $R \ne V$;
this leads to a total of 216 instances.
\def\OptHat{\hat{O}}
\def\OptHat{\widehat{Opt}}
\def\OptHat{\overline{Opt}}
The evaluations 
\ifshort
(for detailed numbers, see the full version \cite{jabrayilov2020hopconstrained}) 
\else
(for detailed numbers, see the appendix) 
\fi
of LP relaxations show that the inequality 
$\val_{\text{\ref{Phstp}}} \ge \val_{\text{\ref{Ahstp}}}$ 
holds for all 216 instances.
\ifshort
Figure \ref{plotNumberStrong}  shows for each $H$ the number of instances for which the inequality is strict.
\else
Figure \ref{plotNumberStrong}  shows for each $H=2,\dots,10$ the number of instances for which the inequality is strict, \ie, the LP relaxation value of (\ref{Phstp}) is greater than that of (\ref{Ahstp}).
\fi
The strict inequality holds mostly for small hop limits,
\eg, for 23 of 24 instances with $H=2$ but only for 2 of 24 instances with $H=10$. %
It is also interesting to compare the LP gaps of the two models, \ie, the gap between optimal integer value $Opt$ and the LP relaxation value.
If $Opt$ could not be computed, we used the value $\OptHat$ of the best integer solution found by the two models.
Figure~\ref{plotMeangap} shows the average LP gaps of the two models over 216 instances depending on $(|V|,H)$, where the LP gap of model $M$ is given by $\frac{\OptHat-\val_M}{\OptHat}$.
\ifshort
The figure indicates that the gaps become larger as $|V|$ increases and $H$ decreases.
\else
The figure indicates that the gaps become larger as the number $|V|$ of nodes increases and the hop limit $H$ decreases.
\fi
Moreover, (\ref{Phstp}) has smaller gaps than (\ref{Ahstp}) for small hop limits, where the small hop limits significantly affect the objective value.

\ifshort
\else
\newdimen\mydim	    
\mydim=0.49\textwidth
\def\scale{0.60}
\begin{figure*}[bt]
  \captionsetup[subfigure]{justification=centering}
  \centering
  \begin{subfigure}[b] {\mydim}
  \begin{tikzpicture}
    \path ( 0,0)    node {\includegraphics[scale=\scale]{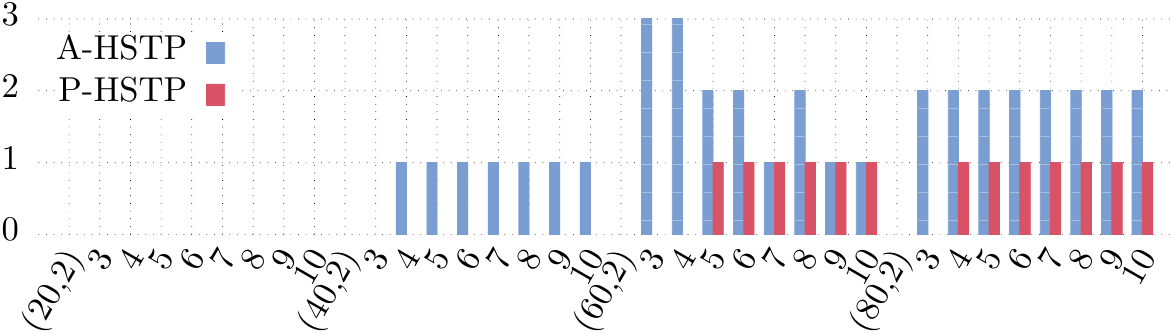}};
  \end{tikzpicture}
	\vspace{-7pt}
	\subcaption{}
	\label{plotUnsolvedHMST}
  \end{subfigure}
  \begin{subfigure}[b] {\mydim}
  \begin{tikzpicture}
    \path ( 0,0)    node {\includegraphics[scale=\scale]{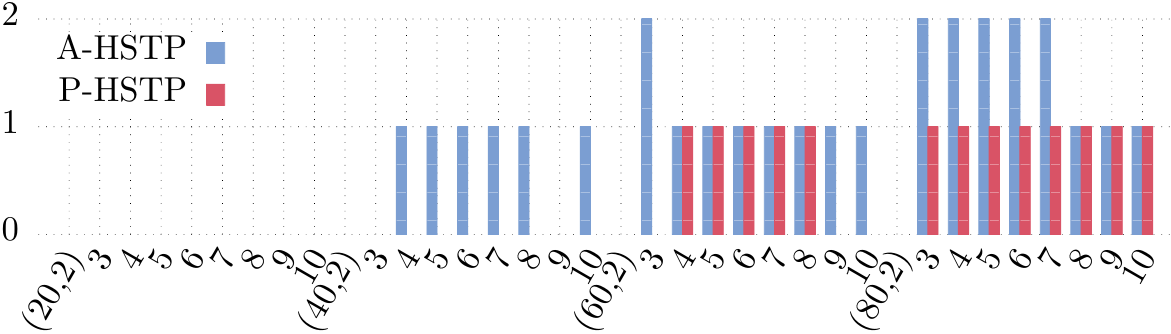}};
  \end{tikzpicture}
	\vspace{-7pt}
	\subcaption{}
	\label{plotUnsolvedHSTP}
  \end{subfigure}
  \caption{Number of the unsolved (\subref{plotUnsolvedHMST}) \hmst, (\subref{plotUnsolvedHSTP}) \hstp instances depending on $(|V|,H)$
   }
  \label{plotUnsolved}
\end{figure*}
\fi

\ifshort
\def\scale{0.5}
\begin{figure*}[bt]
  \captionsetup[subfigure]{justification=centering}
  \begin{subfigure}[b] {0.39\textwidth}
  \centering
  \begin{tikzpicture}
    \path (0, 0.0)    node {\includegraphics[scale=\scale]{plot_strictly_strong.pdf}};
    \path ( 0,-1.08)    node {};
  \end{tikzpicture}
	\vspace{-12pt}
	\subcaption{}
	\label{plotNumberStrong}
  \end{subfigure}
  \begin{subfigure}[b] {0.59\textwidth}
  \begin{tikzpicture}
    \path ( 0,0)    node {\includegraphics[scale=\scale]{plot_meangap_tcre.pdf}};
  \end{tikzpicture}
	\vspace{-9pt}
	\subcaption{}
	\label{plotMeangap}
  \end{subfigure}
  \vspace{-9pt}
  \caption{
    (\subref{plotNumberStrong}) Number of instances 
    with strict inequality $\val_{\text{\ref{Phstp}}} > \val_{\text{\ref{Ahstp}}}$ 
    depending on $H$;
    (\subref{plotMeangap}) Average LP gaps %
    over 216 instances
    depending on $(|V|,H)$
  }
  \label{plotGap}
\end{figure*}
\else
\fi

\ifshort
\newdimen\mydim	    
\newdimen\mydimII
\mydim=0.37\textwidth
\mydimII=0.24\textwidth
\def\scale{0.40}
\begin{figure*}[bt]
  \captionsetup[subfigure]{justification=centering}
  \centering
  \begin{subfigure}[b] {\mydim}
  \begin{tikzpicture}
    \path ( 0,0)    node {\includegraphics[scale=\scale]{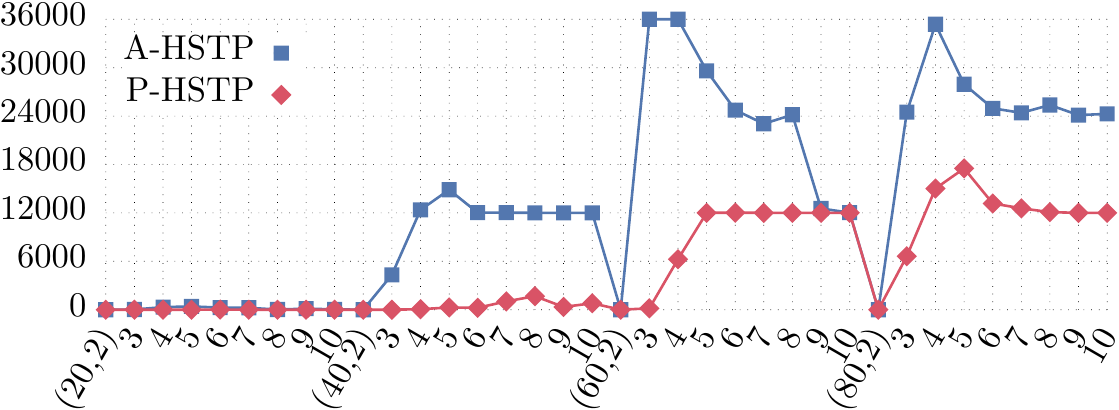}};
  \end{tikzpicture}
	\vspace{-19pt}
	\subcaption{}
	\label{plotMeanTimeHMST}
  \end{subfigure}
  \begin{subfigure}[b] {\mydim}
  \begin{tikzpicture}
    \path ( 0,0)    node {\includegraphics[scale=\scale]{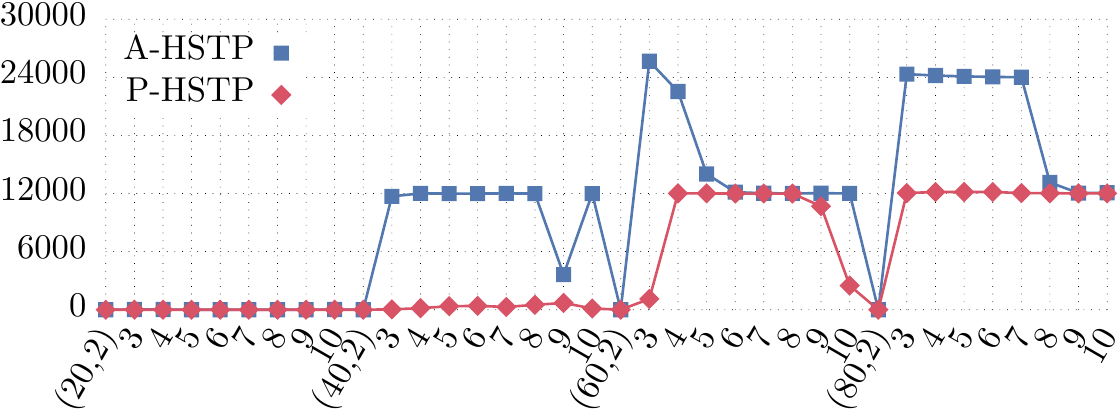}};
  \end{tikzpicture}
	\vspace{-19pt}
	\subcaption{}
	\label{plotMeanTimeHSTP}
  \end{subfigure}
  \begin{subfigure}[b] {\mydimII}
  \begin{tikzpicture}
    \path ( 0,0)    node {\includegraphics[scale=\scale]{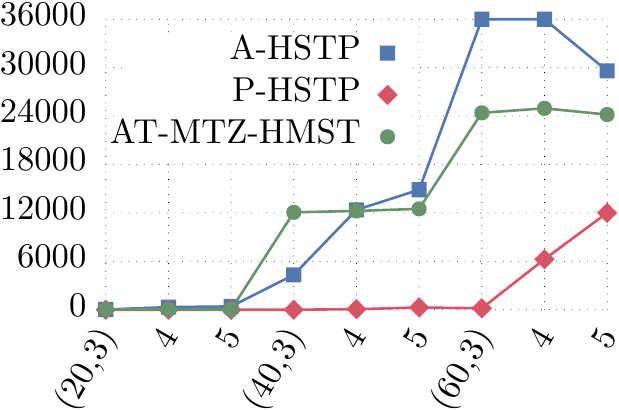}};
  \end{tikzpicture}
	\vspace{-8pt}
	\subcaption{}
	\label{plotPopAssMTZ}
  \end{subfigure}
  \vspace{-17pt}
  \caption{Average run times of (\ref{Ahstp}) and (\ref{Phstp}) for the (\subref{plotMeanTimeHMST}) \hmst, (\subref{plotMeanTimeHSTP}) \hstp
  instances; %
   (\subref{plotPopAssMTZ})~Comparison of the two models with Akgün-Tansel model~\cite{Akg_n_2011}
   }
  \label{plotRT}
\end{figure*}
\else
\newdimen\mydim	    
\newdimen\mydimII
\mydim=0.37\textwidth
\mydimII=0.24\textwidth
\def\scale{0.50}
\begin{figure*}[bt]
  \captionsetup[subfigure]{justification=centering}
  \centering
  \begin{subfigure}[b] {\mydim}
  \begin{tikzpicture}
    \path ( 0,0)    node {\includegraphics[scale=\scale]{plot_time_tctrte_hmst.pdf}};
  \end{tikzpicture}
	\vspace{-19pt}
	\subcaption{}
	\label{plotMeanTimeHMST}
  \end{subfigure}
  \begin{subfigure}[b] {\mydim}
  \begin{tikzpicture}
    \path ( 0,0)    node {\includegraphics[scale=\scale]{plot_time_tctrte_hstp.pdf}};
  \end{tikzpicture}
	\vspace{-19pt}
	\subcaption{}
	\label{plotMeanTimeHSTP}
  \end{subfigure}
  \begin{subfigure}[b] {\mydimII}
  \begin{tikzpicture}
    \path ( 0,0)    node {\includegraphics[scale=\scale]{plot_time_hmst_pop_ass_akgun_tansel.pdf}};
  \end{tikzpicture}
	\vspace{-6pt}
	\subcaption{}
	\label{plotPopAssMTZ}
  \end{subfigure}
  \caption{Average run times of (\ref{Ahstp}) and (\ref{Phstp}) for the (\subref{plotMeanTimeHMST}) \hmst, (\subref{plotMeanTimeHSTP}) \hstp
  instances depending on $(|V|,H)$;
   (\subref{plotPopAssMTZ})~Comparison of the two models with Akgün-Tansel model~\cite{Akg_n_2011}
   }
  \label{plotRT}
\end{figure*}
\fi

To compute the optimal integer values, we ran (\ref{Ahstp}) and (\ref{Phstp}) with a time limit of 10 hours.
\ifshort
\else
Figure \ref{plotUnsolved} shows the number of unsolved \hmst (Fig. \ref{plotUnsolvedHMST}) and  \hstp (Fig. \ref{plotUnsolvedHSTP}) instances,
while 
\fi
Figure \ref{plotRT} shows the average run times in seconds for the \hmst (Fig. \ref{plotMeanTimeHMST}) and \hstp (Fig. \ref{plotMeanTimeHSTP}) instances in dependency of $(|V|,H)$.
Model (\ref{Ahstp}) solves 100 instances within a few seconds and needs several minutes for 37 instances and hours for 13 instances. 
It misses solving the remaining 66 instances.
Model (\ref{Phstp}) outperforms (\ref{Ahstp}) and solves 156 instances within a few seconds and needs some minutes for 27 instances and hours for 7 instances. 
The model solves 40 instances more than (\ref{Ahstp}).
We also compared the two models with another node-oriented state-of-the-art model suggested by Akgün-Tansel~\cite{Akg_n_2011}, based on MTZ constraints.
Computational results, taken from~\cite{Akg_n_2011}, involve nine TC, TE, TR  graphs with up to 60 nodes, 1830 edges, and hop limits 3--5 for the \hmst.
The comparison of average run times of the three models (Fig. \ref{plotPopAssMTZ}), depending on $(|V|,H)$, shows that the partial-ordering model outperforms the other two.

\section{Conclusion} 
\label{sec:conclusion}

\ifshort
In this work, we showed that the partial-ordering model is strictly stronger than the assignment model for the problems \hstp, \hmst, and \stprbh.
\else
In this work, we provided polyhedral results for two node-originated models, called assignment and partial-ordering based models for the hop-constrained Steiner tree problems \hstp, \hmst, and \stprbh.
We showed that the partial-ordering model is strictly stronger than the assignment model for these problems.
\fi
\ifshort
Moreover, Theorem \ref{theorem:hstp:fme} allows us to summarize the polyhedral results as follows: 
For the problems whose solution is a hop-constrained tree, a rooted tree that has bounded depth, the partial-ordering model is stronger than the assignment model if both models have the same objective function and constraints (\ie, depending only on common $x$-variables), except the hop-constrained tree constraints, namely $(x,y) \in \passHT$~(\ref{pass1}) and  $(x,l,g) \in \ppopHT$~(\ref{ppop}).
\else
Corollary \ref{corollaryHT} summarizes these polyhedral results for the problems whose solution is a hop-constrained tree, \ie, a rooted tree with bounded depth.
Moreover, in Theorem \ref{theorem:pop:mtz}, we showed that the partial-ordering based model implies an exponential-sized set of hop-constrained path constraints, which does not hold for the assignment model.
\fi
Furthermore, the computational results in the literature and this work show for the problems \hstp, \hmst, and \stprbh that the partial-ordering based model outperforms the assignment model in practice, too; it has a smaller LP gap and solves more instances.

\bibliographystyle{abbrv}
\bibliography{hstp.bib}

\ifshort
\else

\appendix
\section{Computational results of the assignment and partial-ordering based models} %
\label{appendix:table}

Table \ref{table} shows the results of the assignment model (\ref{Ahstp}) and partial-ordering based model (\ref{Phstp}) for the \hmst and \hstp (with $R \ne V$) instances.
  The first two columns show the evaluated graphs and hop limits.
  The next six columns, namely columns 3--8, show the results for the 108 \hmst instances, while the last six columns show the relevant 
  results for the 108 \hstp (with $R \ne V$) instances. 
  Columns 3-5 show the following results omitted by (\ref{Ahstp}): %
  Column 3 contains corresponding LP relaxation values.
  If the optimal integer value of an instance is found, then Column 4 contains this value; otherwise, it includes an interval $[lb-ub]$, where $lb$ and $ub$ are the lower and upper bounds omitted by IP solver within the time limits of 10 hours.          
  Column 5 includes the times required to find the optimal integer values.        
  Similarly, columns 6-8 show the relevant results omitted by (\ref{Phstp}).

\begin{table}[b] %
  \centering
  \scalebox{0.52}{
    \small
  \setlength{\tabcolsep}{10.0pt} 
  \begin{tabular}{ll | ccc c ccc c || ccc c ccc c}
  \hline
  &&  \multicolumn{7}{c}{\textbf{\hmst} instances ($R=V$)} 
  &&  \multicolumn{7}{c}{\textbf{\hstp} instances  ($R\ne V$)}\\ 
  \cline{4-8}	\cline{12-17} 
  && \multicolumn{3}{c}{(\ref{Ahstp})} && \multicolumn{3}{c}{(\ref{Phstp})} 
  && \multicolumn{3}{c}{(\ref{Ahstp})} && \multicolumn{3}{c}{(\ref{Phstp})} 
  \\
  \cline{3-5}	\cline{7-9}	\cline{12-14}	\cline{16-18}

  Graph &H	&LP   &IP  &Time[s]	    &	    &LP	 &IP	&Time[s]    & &
		&LP   &IP  &Time[s]	    &	    &LP	 &IP	&Time[s]    \\
  \hline
    %TC20 &2    &311.33   &384          &0      &&318.0    &384          &0      &207.36    &249         &0      &&209.78  &249         &0    \\
TC20  &2    &311.33   &384          &0      &&318.0    &384          &0      &&&207.36  &249         &0      &&209.78  &249          &0      \\
      &3    &305.5    &340          &2      &&307.0    &340          &0      &&&207.06  &227         &0      &&207.19  &227          &0      \\
      &4    &302.75   &318          &0      &&304.0    &318          &0      &&&207.0   &219         &0      &&207.0   &219          &0      \\
      &5    &302.0    &312          &0      &&302.86   &312          &0      &&&207.0   &219         &0      &&207.0   &219          &0      \\
      &6    &302.0    &302          &0      &&302.0    &302          &0      &&&207.0   &219         &0      &&207.0   &219          &0      \\
      &7    &302.0    &302          &0      &&302.0    &302          &0      &&&207.0   &219         &0      &&207.0   &219          &0      \\
      &8    &302.0    &302          &0      &&302.0    &302          &0      &&&207.0   &219         &1      &&207.0   &219          &0      \\
      &9    &302.0    &302          &0      &&302.0    &302          &0      &&&207.0   &219         &0      &&207.0   &219          &0      \\
      &10   &302.0    &302          &0      &&302.0    &302          &0      &&&207.0   &219         &1      &&207.0   &219          &0      \\
TC40  &2    &472.67   &747          &0      &&479.0    &747          &0      &&&284.0   &431         &0      &&286.74  &431          &0      \\
      &3    &472.0    &609          &733    &&472.4    &609          &2      &&&283.0   &345         &93     &&283.0   &345          &1      \\
      &4    &472.0    &548          &1005   &&472.0    &548          &5      &&&281.5   &317         &5      &&281.5   &317          &1      \\
      &5    &472.0    &522          &8563   &&472.0    &522          &6      &&&281.5   &313         &6      &&281.5   &313          &1      \\
      &6    &472.0    &498          &89     &&472.0    &498          &4      &&&281.5   &310         &9      &&281.5   &310          &2      \\
      &7    &472.0    &490          &78     &&472.0    &490          &2      &&&281.5   &306         &17     &&281.5   &306          &2      \\
      &8    &472.0    &488          &33     &&472.0    &488          &2      &&&281.5   &302         &11     &&281.5   &302          &1      \\
      &9    &472.0    &484          &19     &&472.0    &484          &1      &&&281.5   &300         &6      &&281.5   &300          &1      \\
      &10   &472.0    &482          &34     &&472.0    &482          &2      &&&281.5   &300         &10     &&281.5   &300          &1      \\
TC60  &2    &665.33   &1083         &1      &&678.27   &1083         &0      &&&417.06  &635         &1      &&426.48  &635          &0      \\
      &3    &658.67   &[821-866]    &36000  &&661.87   &866          &13     &&&408.99  &528         &4983   &&411.51  &528          &26     \\
      &4    &658.4    &[733-785]    &36000  &&659.0    &781          &71     &&&407.99  &487         &19940  &&408.73  &487          &28     \\
      &5    &658.27   &[711-734]    &36000  &&658.8    &734          &40     &&&407.67  &458         &1235   &&407.93  &458          &49     \\
      &6    &658.19   &[692-718]    &36000  &&658.67   &712          &56     &&&407.13  &440         &195    &&407.13  &440          &6      \\
      &7    &658.0    &692          &30317  &&658.31   &692          &15     &&&406.73  &432         &36     &&406.73  &432          &5      \\
      &8    &658.0    &[670-682]    &36000  &&658.0    &682          &9      &&&406.73  &432         &19     &&406.73  &432          &6      \\
      &9    &658.0    &676          &1209   &&658.0    &676          &5      &&&406.73  &432         &110    &&406.73  &432          &13     \\
      &10   &658.0    &672          &44     &&658.0    &672          &4      &&&406.73  &432         &43     &&406.73  &432          &7      \\
TC80  &2    &820.08   &1305         &1      &&832.67   &1305         &0      &&&492.84  &785         &3      &&498.89  &785          &0      \\
      &3    &816.0    &[958-1083]   &36000  &&816.57   &1072         &312    &&&488.62  &[593-646]   &36000  &&490.8   &643          &161    \\
      &4    &816.0    &[856-990]    &36000  &&816.0    &981          &9055   &&&487.34  &[547-605]   &36000  &&488.59  &587          &451    \\
      &5    &816.0    &[842-926]    &36000  &&816.0    &922          &16566  &&&487.05  &[536-564]   &36000  &&487.54  &560          &487    \\
      &6    &816.0    &[828-912]    &36000  &&816.0    &884          &3493   &&&486.8   &[523-544]   &36000  &&487.0   &544          &455    \\
      &7    &816.0    &[829-872]    &36000  &&816.0    &862          &1685   &&&486.4   &[520-528]   &36000  &&486.5   &528          &139    \\
      &8    &816.0    &[825-846]    &36000  &&816.0    &846          &264    &&&486.0   &518         &3425   &&486.0   &518          &91     \\
      &9    &816.0    &[821-838]    &36000  &&816.0    &838          &40     &&&486.0   &512         &129    &&486.0   &512          &30     \\
      &10   &816.0    &[820-834]    &36000  &&816.0    &834          &40     &&&486.0   &510         &201    &&486.0   &510          &18     \\
TE20  &2    &287.14   &561          &0      &&292.83   &561          &0      &&&197.0   &362         &0      &&199.3   &362          &0      \\
      &3    &284.0    &449          &71     &&284.0    &449          &1      &&&196.0   &290         &43     &&196.0   &290          &1      \\
      &4    &284.0    &385          &972    &&284.0    &385          &7      &&&196.0   &268         &35     &&196.0   &268          &3      \\
      &5    &284.0    &366          &1250   &&284.0    &366          &16     &&&196.0   &247         &15     &&196.0   &247          &1      \\
      &6    &284.0    &355          &780    &&284.0    &355          &36     &&&196.0   &239         &14     &&196.0   &239          &2      \\
      &7    &284.0    &344          &714    &&284.0    &344          &25     &&&196.0   &236         &6      &&196.0   &236          &2      \\
      &8    &284.0    &336          &80     &&284.0    &336          &9      &&&196.0   &236         &9      &&196.0   &236          &2      \\
      &9    &284.0    &332          &391    &&284.0    &332          &11     &&&196.0   &236         &13     &&196.0   &236          &3      \\
      &10   &284.0    &330          &160    &&284.0    &330          &7      &&&196.0   &236         &40     &&196.0   &236          &4      \\
TE40  &2    &488.47   &915          &0      &&498.4    &915          &0      &&&282.14  &561         &0      &&288.17  &561          &0      \\
      &3    &481.13   &708          &12070  &&484.67   &708          &25     &&&279.0   &449         &35030  &&279.0   &449          &113    \\
      &4    &480.64   &[542-630]    &36000  &&481.83   &627          &176    &&&279.0   &[364-385]   &36000  &&279.0   &385          &500    \\
      &5    &480.29   &[533-595]    &36000  &&481.17   &590          &859    &&&279.0   &[338-366]   &36000  &&279.0   &366          &1098   \\
      &6    &480.0    &[511-572]    &36000  &&480.44   &565          &675    &&&279.0   &[335-358]   &36000  &&279.0   &355          &1164   \\
      &7    &480.0    &[506-550]    &36000  &&480.0    &544          &3095   &&&279.0   &[325-345]   &36000  &&279.0   &344          &860    \\
      &8    &480.0    &[503-536]    &36000  &&480.0    &536          &5060   &&&279.0   &[334-336]   &36000  &&279.0   &336          &1490   \\
      &9    &480.0    &[503-529]    &36000  &&480.0    &528          &1031   &&&279.0   &332         &10858  &&279.0   &332          &2116   \\
      &10   &480.0    &[502-520]    &36000  &&480.0    &520          &2376   &&&279.0   &[325-330]   &36000  &&279.0   &330          &391    \\
TE60  &2    &970.4    &2108         &1      &&987.67   &2108         &0      &&&635.93  &1348        &1      &&641.58  &1348         &1      \\
      &3    &955.39   &[1224-1550]  &36000  &&961.81   &1525         &483    &&&633.9   &[782-1011]  &36000  &&635.77  &1004         &3279   \\
      &4    &951.22   &[1037-1351]  &36000  &&953.67   &1336         &18680  &&&633.3   &[735-907]   &36000  &&634.65  &[846-901]    &36000  \\
      &5    &950.0    &[1010-1233]  &36000  &&951.08   &[1202-1226]  &36000  &&&632.63  &[720-842]   &36000  &&633.4   &[785-835]    &36000  \\
      &6    &950.0    &[993-1159]   &36000  &&950.0    &[1096-1154]  &36000  &&&632.4   &[710-816]   &36000  &&632.4   &[751-802]    &36000  \\
      &7    &950.0    &[979-1126]   &36000  &&950.0    &[1046-1107]  &36000  &&&632.4   &[700-801]   &36000  &&632.4   &[738-784]    &36000  \\
      &8    &950.0    &[978-1105]   &36000  &&950.0    &[1026-1072]  &36000  &&&632.4   &[700-774]   &36000  &&632.4   &[738-760]    &36000  \\
      &9    &950.0    &[978-1076]   &36000  &&950.0    &[1012-1053]  &36000  &&&632.4   &[706-748]   &36000  &&632.4   &745          &32059  \\
      &10   &950.0    &[981-1101]   &36000  &&950.0    &[1008-1036]  &36000  &&&632.4   &[703-731]   &36000  &&632.4   &731          &7471   \\
TE80  &2    &1090.6   &2547         &1      &&1104.0   &2547         &1      &&&717.56  &1600        &3      &&726.26  &1600         &2      \\
      &3    &1077.53  &[1327-1868]  &36000  &&1086.39  &1806         &19575  &&&714.37  &[857-1219]  &36000  &&717.08  &[1084-1172]  &36000  \\
      &4    &1074.23  &[1195-1637]  &36000  &&1079.0   &[1476-1573]  &36000  &&&713.0   &[793-1079]  &36000  &&714.14  &[885-1051]   &36000  \\
      &5    &1072.53  &[1144-1487]  &36000  &&1075.3   &[1322-1455]  &36000  &&&713.0   &[768-1011]  &36000  &&713.0   &[843-964]    &36000  \\
      &6    &1072.0   &[1134-1402]  &36000  &&1072.73  &[1241-1348]  &36000  &&&713.0   &[762-979]   &36000  &&713.0   &[807-932]    &36000  \\
      &7    &1072.0   &[1124-1335]  &36000  &&1072.0   &[1201-1310]  &36000  &&&713.0   &[766-922]   &36000  &&713.0   &[796-897]    &36000  \\
      &8    &1072.0   &[1118-1345]  &36000  &&1072.0   &[1171-1274]  &36000  &&&713.0   &[771-916]   &36000  &&713.0   &[783-873]    &36000  \\
      &9    &1072.0   &[1121-1334]  &36000  &&1072.0   &[1159-1242]  &36000  &&&713.0   &[761-866]   &36000  &&713.0   &[786-835]    &36000  \\
      &10   &1072.0   &[1109-1314]  &36000  &&1072.0   &[1150-1226]  &36000  &&&713.0   &[770-876]   &36000  &&713.0   &[775-824]    &36000  \\
TR20  &2    &145.5    &227          &0      &&157.33   &227          &0      &&&103.0   &110         &0      &&103.0   &110          &0      \\
      &3    &138.67   &168          &0      &&140.0    &168          &0      &&&102.58  &110         &0      &&102.67  &110          &0      \\
      &4    &137.0    &146          &0      &&138.14   &146          &0      &&&102.5   &110         &0      &&102.5   &110          &0      \\
      &5    &137.0    &137          &0      &&137.0    &137          &0      &&&102.0   &102         &0      &&102.0   &102          &0      \\
      &6    &137.0    &137          &0      &&137.0    &137          &0      &&&102.0   &102         &0      &&102.0   &102          &0      \\
      &7    &137.0    &137          &0      &&137.0    &137          &0      &&&102.0   &102         &0      &&102.0   &102          &0      \\
      &8    &137.0    &137          &0      &&137.0    &137          &0      &&&102.0   &102         &0      &&102.0   &102          &0      \\
      &9    &137.0    &137          &0      &&137.0    &137          &0      &&&102.0   &102         &0      &&102.0   &102          &0      \\
      &10   &137.0    &137          &0      &&137.0    &137          &0      &&&102.0   &102         &0      &&102.0   &102          &0      \\
TR40  &2    &132.85   &296          &0      &&139.42   &296          &0      &&&71.5    &155         &0      &&75.93   &155          &0      \\
      &3    &127.18   &176          &186    &&129.0    &176          &0      &&&68.62   &99          &15     &&69.42   &99           &1      \\
      &4    &126.33   &149          &100    &&127.06   &149          &0      &&&68.2    &85          &17     &&68.58   &85           &1      \\
      &5    &126.12   &139          &38     &&126.5    &139          &1      &&&68.0    &75          &1      &&68.2    &75           &0      \\
      &6    &126.0    &130          &2      &&126.0    &130          &0      &&&68.0    &70          &1      &&68.0    &70           &0      \\
      &7    &126.0    &127          &1      &&126.0    &127          &0      &&&68.0    &69          &1      &&68.0    &69           &0      \\
      &8    &126.0    &127          &2      &&126.0    &127          &0      &&&68.0    &69          &1      &&68.0    &69           &0      \\
      &9    &126.0    &127          &2      &&126.0    &127          &1      &&&68.0    &69          &1      &&68.0    &69           &0      \\
      &10   &126.0    &127          &2      &&126.0    &127          &0      &&&68.0    &69          &1      &&68.0    &69           &0      \\
TR60  &2    &145.96   &452          &1      &&155.25   &452          &0      &&&93.54   &268         &1      &&98.5    &268          &0      \\
      &3    &139.13   &[228-256]    &36000  &&142.17   &256          &7      &&&90.7    &[144-153]   &36000  &&92.6    &153          &9      \\
      &4    &137.65   &[166-193]    &36000  &&138.95   &187          &4      &&&90.25   &116         &11708  &&90.78   &116          &14     \\
      &5    &137.38   &158          &16815  &&138.0    &158          &3      &&&90.11   &105         &4823   &&90.33   &105          &8      \\
      &6    &137.2    &149          &2245   &&137.83   &149          &4      &&&90.07   &98          &222    &&90.25   &98           &7      \\
      &7    &137.08   &147          &2837   &&137.45   &147          &2      &&&90.0    &93          &37     &&90.08   &93           &1      \\
      &8    &137.05   &143          &509    &&137.31   &143          &3      &&&90.0    &92          &29     &&90.0    &92           &1      \\
      &9    &137.03   &143          &410    &&137.24   &143          &3      &&&90.0    &91          &13     &&90.0    &91           &1      \\
      &10   &137.0    &139          &10     &&137.11   &139          &2      &&&90.0    &91          &21     &&90.0    &91           &1      \\
TR80  &2    &155.3    &464          &1      &&166.77   &464          &0      &&&99.62   &266         &3      &&105.36  &266          &1      \\
      &3    &148.95   &208          &1452   &&151.36   &208          &1      &&&94.51   &132         &1016   &&97.33   &132          &2      \\
      &4    &147.54   &180          &34111  &&148.21   &180          &4      &&&93.67   &116         &585    &&94.52   &116          &12     \\
      &5    &147.07   &164          &11884  &&147.32   &164          &2      &&&93.22   &106         &315    &&93.5    &106          &4      \\
      &6    &147.0    &157          &2913   &&147.07   &157          &8      &&&93.2    &98          &118    &&93.33   &98           &4      \\
      &7    &147.0    &153          &1190   &&147.0    &153          &8      &&&93.17   &96          &9      &&93.25   &96           &2      \\
      &8    &147.0    &153          &4102   &&147.0    &153          &8      &&&93.12   &95          &33     &&93.2    &95           &2      \\
      &9    &147.0    &150          &380    &&147.0    &150          &9      &&&93.12   &95          &42     &&93.17   &95           &2      \\
      &10   &147.0    &150          &810    &&147.0    &150          &7      &&&93.1    &94          &65     &&93.14   &94           &3      \\

  \hline
  \end{tabular}
  }
  \caption{Results of the assignment and partial-ordering based models for the 216 \hmst and \hstp instances.}
  \label{table}
\end{table}
\fi

\end{document} %